\title{Linear-Time \texorpdfstring{$(1+\varepsilon)$-}{}Approximation Algorithms for Two-Line-Center Problems} %TODO Please add
\author{Chaeyoon Chung}{Department of Computer Science and Engineering, Pohang University of Science and Technology, Pohang, South Korea.}{chaeyoon17@postech.ac.kr}{https://orcid.org/0009-0008-3363-2406}{}%TODO mandatory, please use full name; only 1 author per \author macro; first two parameters are mandatory, other parameters can be empty. Please provide at least the name of the affiliation and the country. The full address is optional. Use additional curly braces to indicate the correct name splitting when the last name consists of multiple name parts.
\author{Anil Maheshwari}{School of Computer Science, Carleton University, Ottawa, Canada.}{anil@scs.carleton.ca}{https://orcid.org/0000-0002-1274-4598}{}
\author{Michiel Smid}{School of Computer Science, Carleton University, Ottawa, Canada.}{anil@scs.carleton.ca}{https://orcid.org/0000-0003-3683-9054}{}
\authorrunning{C. Chung, A. Maheshwari, and M. Smid} %TODO mandatory. First: Use abbreviated first/middle names. Second (only in severe cases): Use first author plus 'et al.'
\keywords{Approximation algorithm, two-line-center problem, \texorpdfstring{$k$}{k}-line-center problem, projective clustering, \texorpdfstring{$\varepsilon$}{epsilon}-certificate, \texorpdfstring{$\varepsilon$}{epsilon}-coreset, width of a point set} %TODO mandatory; please add comma-separated list of keywords
\newcommand{\restatemarker}{$^{\boldsymbol{*}}$}
\newtheoremstyle{star}
    {\topsep}%  Space above
    {\topsep}%  Space below
    {\itshape}%     Body font
    {}%     Indent amount
    {\textcolor{lipicsGray}{$\blacktriangleright$}\nobreakspace\sffamily\bfseries}% Theorem head font
    {}%    Punctuation after theorem head -- none
    {.5em}% Space after theorem head
    {\thmname{#1}\thmnumber{ #2\restatemarker}{\textmd{\textsf{\thmnote{(#3)}}}}.}% Theorem head spec
\newtheoremstyle{claimstar}
    {\topsep}
    {\topsep}
    {}
    {0pt}
    {\sffamily}
    {. }{5pt plus 1pt minus 1pt}%
    {$\vartriangleright$ \thmname{#1}\thmnumber{ #2\restatemarker}\thmnote{ (#3)}}
\theoremstyle{star}
\newtheorem{thm*}[theorem]{Theorem}
\newtheorem{lem*}[lemma]{Lemma}
\newtheorem{cor*}[corollary]{Corollary}
\newtheorem{obs*}[observation]{Observation}
\theoremstyle{claimstar}
\newtheorem{clm*}[claim]{Claim}
\newcommand{\eps}{\ensuremath{\varepsilon}}
\newcommand{\true}{\textsf{True}\xspace}
\newcommand{\false}{\textsf{False}\xspace}
\newcommand{\mytxt}[1]{\textup{\textbf{\textsf{\textcolor{lipicsGray}{#1}}}}}
\newcommand{\diam}{\ensuremath{\mathrm{diam}}}
\newcommand{\width}{\ensuremath{\mathrm{width}}}
\newcommand{\conv}{\ensuremath{\mathrm{conv}}}
\begin{document}

\maketitle

%TODO mandatory: add short abstract of the document
\begin{abstract}
Given a set $S$ of $n$ points in the plane, we study the two-line-center problem: finding two lines that minimize the maximum distance from each point in $S$ to its closest line.
We present a $(1+\eps)$-approximation algorithm for the two-line-center problem that runs in $O((n/\eps) \log (1/\eps))$ time, which improves the  previously best $O(n\log n + ({n}/{\eps^2}) \log ({1}/{\eps}) + (1/\eps^3)\log ({1}/{\eps}))$-time algorithm. 
We also consider three variants of this problem, in which the orientations of the two lines are restricted: (1)~the orientation of one of the two lines is fixed,
(2)~the orientations of both lines are fixed, and
(3)~the two lines are required to be parallel.
For each of these three variants, we give the first $(1+\eps)$-approximation algorithm that runs in linear time.
In particular, for the variant where the orientation of one of the two lines is fixed, we also give an improved exact algorithm that runs in $O(n \log n)$ time and show that it is optimal.
\end{abstract}

\section{Introduction}

Given a set $S$ of $n$ points in the plane, we study the two-line-center problem: finding two lines that minimize the maximum distance from each point in $S$ to its closest line.  
This is a special case of the $k$-line-center problem, which asks, for a given set $S$ of $n$ points in the plane, to find $k$ lines minimizing the maximum distance from each point in $S$ to its closest line.  
A more general framework is projective clustering: Given $P\subset \mathbb R^d$, the goal is to find $k$ flats of dimension $j$ that best fit $P$ under a given distance measure. 
% It has applications in data mining, unsupervised learning, database management, and computer vision~\cite{datamining1999, datamining2000, Chakrabarti2000db, Procopiuc2010unsupervised,  Procopiuc2002cv}. 
It has wide applications in data mining~\cite{datamining1999, datamining2000}, unsupervised learning~\cite{Procopiuc2010unsupervised}, database management~\cite{Chakrabarti2000db}, and computer vision~\cite{Procopiuc2002cv}.  

The $k$-line-center problem is known to be NP-hard~\cite{Megiddo1982pc_npcomplete} when $k$ is part of the input (even in the plane), 
and various approaches including approximation algorithms and randomized methods~\cite{  pc_apx_dklogk, agarwal2005approximation, pc_shapefitting_cylinder,pc_shapefitting_box, chan2006faster,  pc_greedy_number_apx, pc_fixedkq,HASSIN199129_pc_hitting, pc_montecarlo} have been studied.  

% and various approaches including approximation algorithms and randomized methods~\cite{HASSIN199129_pc_hitting, pc_fixedkq, pc_montecarlo, pc_greedy_number_apx, pc_apx_dklogk} have been studied.  
% Viewing the problem as a form of shape fitting has also been studied~\cite{pc_shapefitting_box, chan2006faster, pc_shapefitting_cylinder}.

% For the $k$-line-center problem, there are a few works including approximation algorithms~\cite{agarwal2005approximation, pc_fixedkq, pc_montecarlo}.
When $k=1$, the problem becomes computing the width of a point set.  
This can be done exactly in $O(n\log n)$ time~\cite{Shamos:1978:CG:width}, and an $\Omega(n\log n)$ lower bound is known~\cite{lee1986geometric}.  
A $(1+\eps)$-approximation algorithm running in $O(n + 1/\eps)$ time was given in~\cite{chan2006faster}.

When $k=2$, i.e., the two-line-center problem, an exact algorithm with running time $O(n^2\log^2 n)$ is known~\cite{jaromczyk1995}.  
The authors in~\cite{agarwalTwo} presented a $(1+\eps)$-approximation algorithm that runs in  
$O(n\log n + (n/\eps^2)\log(1/\eps) + \eps^{-7/2}\log(1/\eps))$ time.  
Later, as the $(1+\eps)$-approximation algorithm for the width problem was improved from  
$O(n + \eps^{-3/2})$ to $O(n + 1/\eps)$~\cite{chan2006faster},  
their algorithm is consequently improved to $O(n\log n + ({n}/{\eps^2}) \log ({1}/{\eps}) + (1/\eps^3)\log ({1}/{\eps}))$, even though it is not explicitly mentioned. 

% We also consider three variants of the two-line-center problem, in which the orientations of the two lines are restricted: (1)~the orientation of one of the two lines is fixed,
% (2)~the orientations of both lines are fixed, and
% (3)~the two lines are required to be parallel.
% We refer to the version without any orientation restrictions as the \emph{general} two-line-center problem.
% See Figure~\ref{table:result}.

Observe that the two-line-center problem is equivalent to finding a pair of slabs whose union covers the input point set while minimizing their maximum width. 
% The same correspondence holds for the variants.

% In the following, we formally define each problem and summarize our results.  
We study the two-line-center problem and three natural variants.
Let $\mu = \min\{n,\, 1/\eps^2\}$, and let $w^*$ denote the maximum width of an optimal pair of slabs in each problem. 
Each variant is illustrated in Figure~\ref{fig:problems}.
Table~\ref{table:result} summarizes the previous and new results.

\begin{figure}[t!]
\centering
\includegraphics[width=1 \textwidth]{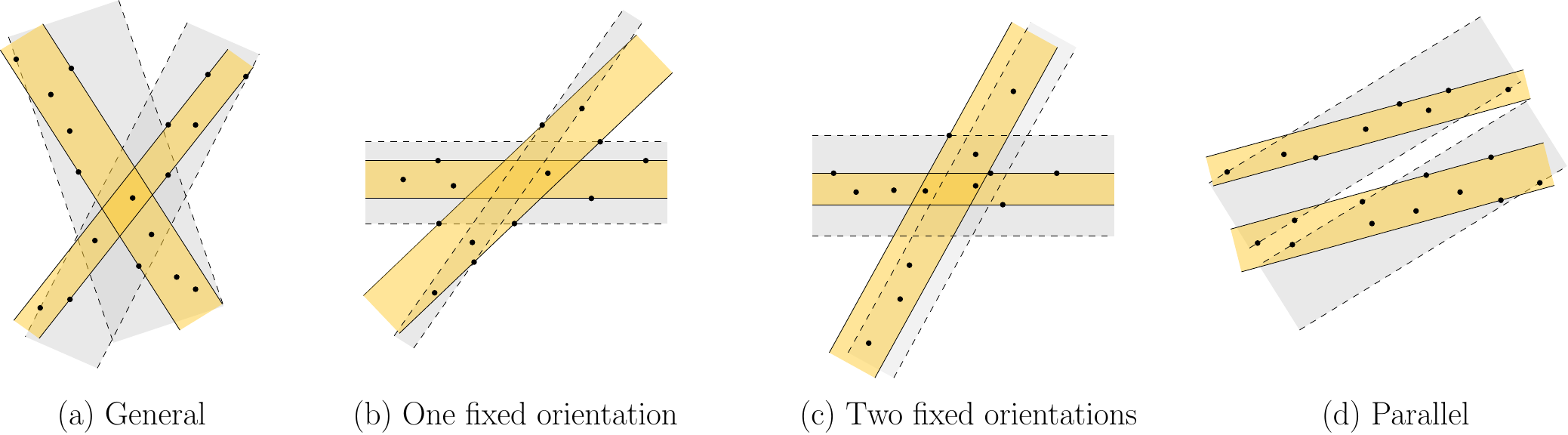}
\caption{In each case, both yellow and gray pairs satisfy the restriction, but the yellow pair has the smaller maximum width.}
\label{fig:problems}
\end{figure}

\begin{enumerate}
    \item \textbf{\textsf{General:}}
    Given a set $S$ of $n$ points in the plane, we study the problem of finding a pair of slabs whose union covers $S$ while minimizing the maximum of their widths. 
    We present an $O(n + (\mu/\eps)\log \mu )$-time algorithm that finds a pair of slabs whose union covers $S$ and whose maximum width is at most $(1+\eps)w^*$. The running time can be expressed as both $O((n/\eps)\log(1/\eps))$ and $O(n + (1/\eps^3)\log(1/\eps))$. 

    \item \textbf{\textsf{One Fixed Orientation:}} 
    Given a set $S$ of $n$ points in the plane and an orientation $\theta$, we study the problem of finding a pair of slabs such that one of them has orientation $\theta$ and their union covers $S$ while minimizing the maximum of their widths. 
    We present an exact algorithm that runs in $O(n\log n)$ time, improving the previously best $O(n\log^3 n)$-time algorithm~\cite{ahn2024constrained}, and show that this is optimal. 
    
    ~$\quad$ We also present an $O(n+ \mu \log \mu)$-time algorithm that finds a pair of slabs such that one of them has orientation $\theta$, their union covers $S$, and their maximum width is at most $(1+\eps)w^*$.
    The running time can be expressed as $O(n\log (1/\eps))$ and $O(n + 1/\eps^2 \log (1/\eps))$.

    \item \textbf{\textsf{Two Fixed Orientations:}}
    Given a set $S$ of $n$ points in the plane and two orientations, $\theta_1$ and $\theta_2$, we study the problem of finding two slabs, one has orientation $\theta_1$ and the other has orientation $\theta_2$, whose union covers $S$ while minimizing the maximum of their widths. 
    We present an $O(n+ 1/\eps)$-time algorithm that finds a pair of slabs such that 
    one has orientation $\theta_1$ and the other has orientation $\theta_2$, their union covers $S$, and their maximum width is at most $(1+\eps)w^*$.

    \item \textbf{\textsf{Parallel:}}
    Given a set $S$ of $n$ points in the plane, we study the problem of finding a pair of parallel slabs whose union covers $S$ while minimizing the maximum of their widths. 
    We present an $O(n + \mu / \eps)$-time algorithm that finds a pair of parallel slabs such that their union covers $S$ and their maximum width is at most $(1+\eps)w^*$. 
    % The running time can be expressed as $O(n/\eps)$ or $O(n + 1/\eps^3)$.
\end{enumerate}

% \subsection{Related works}
% The width problem in higher dimension 

% k line center in a higher dimension

% k line center with gap 

% k center problem 

% 1 center

% 2 center

% eps certificate

%\setlength{\abovecaptionskip}{-0.5cm} 

\begin{table}[t]
\begin{tabular} {p{.25\textwidth} m{.22\textwidth} m{.43\textwidth}} 
\toprule
~ & \textbf{Exact} & \textbf{Approximation}  \\ 
\midrule

% General
\multirow{2}{*}{General} & \multirow{2}{*}{$O(n^2\log^2n) $~\cite{jaromczyk1995}} & $O(n\log n + \frac{n}{\eps^2} \log \frac{1}{\eps} + \frac{1}{\eps^3}\log \frac{1}{\eps})$~\cite{agarwalTwo, chan2006faster}\\
\cmidrule(lr){3-3} 
& & 
 $\boldsymbol{O(\frac{n}{\eps} \log \frac{1}{\eps}) } $$,\:$ $\boldsymbol{O(n + \frac{1}{\eps^3} \log \frac{1}{\eps}) }$ (Sec.~\ref{sec:general}) 
\\
\cmidrule(r){1-1} \cmidrule(l){2-3}

% One fixed 
\multirow{2}{*}{One fixed orientation} &  $O(n  \log^3 n)$~\cite{ahn2024constrained}  & - \\
\cmidrule(lr){2-2}  \cmidrule(lr){3-3} 
& $\boldsymbol{O(n \log n)}$ (Sec.~\ref{sec:one_fixed})
& $\boldsymbol{O(n \log \frac{1}{\eps})}$$,\:$ $\boldsymbol{O(n + \frac{1}{\eps^2} \log \frac{1}{\eps}) }$  (Sec.~\ref{sec:one_fixed})   \\
\cmidrule(r){1-1} \cmidrule(l){2-3} 

% Two fixed 
\multirow{2}{*}{Two fixed orientations} & \multirow{2}{*}{$O(n \log n)$~\cite{ahn2024constrained} } & - \\
\cmidrule(lr){3-3} 
& & 
$\boldsymbol{O(n + \frac{1}{\eps})}$  (Sec.~\ref{sec:two_fixed})  \\
\cmidrule(r){1-1} \cmidrule(l){2-3} 

% Parallel 
\multirow{2}{*}{Parallel} & \multirow{2}{*}{$O(n^2)$~\cite{bae2020minimum}} & - \\
\cmidrule(lr){3-3} 
& & 
 $\boldsymbol{O(\frac{n}{\eps})}$$,\:$  $\boldsymbol{O(n + \frac{1}{\eps^3})}$ (Sec.~\ref{sec:parallel})  \\
\bottomrule
\end{tabular}
\caption{A summary of our results and previous work.}
\label{table:result}
\end{table}

\newpage
\subsection{Outline}
In Section~\ref{sec:preliminaries}, we introduce the notation that will be used throughout the paper.

In Section~\ref{sec:polyeps}, we introduce three key ingredients used throughout the paper.
In Section~\ref{sec:anchor_pair}, we modify the concept of an \emph{anchor pair} (as introduced in~\cite{agarwalTwo}) and show how to compute it.
In Section~\ref{sec:general_const}, we use an anchor pair to compute a $10$-approximation for the general two-line-center problem in $O(n)$ time.
In Section~\ref{sec:eps_certificate}, we recall the concept of an \emph{$\eps$-certificate} (as defined in~\cite{agarwal2005approximation}), a subset of the input point set that suffices to obtain a $(1+\eps)$-approximate solution, and show how to compute it in $O(n)$ time using an anchor pair from Section~\ref{sec:anchor_pair} and a $10$-approximate solution from Section~\ref{sec:general_const}.

In Section~\ref{sec:one_fixed}, we address the one-fixed-orientation two-line-center problem.
In Section~\ref{sec:one_fixed_decision}, we give an $O(n)$-time decision algorithm, assuming the points are sorted.
In Section~\ref{sec:one_fixed_optimization}, we present an $O(n\log n)$-time exact algorithm and prove its optimality in Appendix~\ref{sec:lowerbound}.
In Section~\ref{sec:one_fixed_approximation}, we give a $(1+\eps)$-approximation algorithm.

In Section~\ref{sec:general}, we address the general two-line-center problem.
Using an anchor pair~(Section~\ref{sec:anchor_pair}) and a $10$-approximate solution (Section~\ref{sec:general_const}), we compute a candidate set of~orientations for one of the slabs of a $(1+\eps)$-approximate solution.
Combining this with~the~algorithm from Section~\ref{sec:one_fixed} and an $\eps$-certificate (Section~\ref{sec:eps_certificate}), we obtain a $(1+\eps)$-approximation algorithm.

In Section~\ref{sec:two_fixed}, we address the two-fixed-orientations two-line-center problem.
We first give a $2$-approximation algorithm in Section~\ref{sec:2_apx_two_fixed}. 
Based on this, we present a $(1+\eps)$-approximation algorithm in Section~\ref{sec:eps_apx_two_fixed}.

In Section~\ref{sec:parallel}, we address the parallel two-line-center problem.
We use the \emph{gap-ratio} of a pair of slabs (as defined in~\cite{chung2025parallel}) and handle two cases separately: when the gap-ratio of an optimal pair is small (Section~\ref{sec:small_gap_ratio}) and when it is large (Section~\ref{sec:large_gap_ratio}).
Combining these with an $\eps$-certificate (Section~\ref{sec:eps_certificate}), we present a $(1+\eps)$-approximation algorithm.

\section{Preliminaries}
\label{sec:preliminaries}
For a set $P$ of points in the plane, we denote its diameter by $\diam (P)$, its width by $\width(P)$, and its convex hull by $\conv(P)$.
If $P = \emptyset$, then $\width(P) = 0$.
For a slab $\sigma$, we denote its width by $w(\sigma)$, which is the distance between the two bounding lines of $\sigma$ (see Figure~\ref{fig:preliminaries}(a)).

For a line $\ell$, we say that $\ell$ has orientation $\theta \in [0,\pi)$ if $\theta$ is the counterclockwise angle from the $x$-axis to $\ell$.
A slab $\sigma$ is said to have orientation $\theta \in [0,\pi)$ if its two bounding lines have orientation $\theta$ (see Figure~\ref{fig:preliminaries}(a)).
For a finite set $P$ of points in the plane, let $\sigma_\theta(P)$ denote the minimum-width slab of orientation $\theta$ that encloses $P$, and define $\width_\theta(P) = w(\sigma_\theta(P))$ (see Figure~\ref{fig:preliminaries}(b)).

\begin{figure}[ht]
\centering
\includegraphics[height=3.7cm,keepaspectratio]{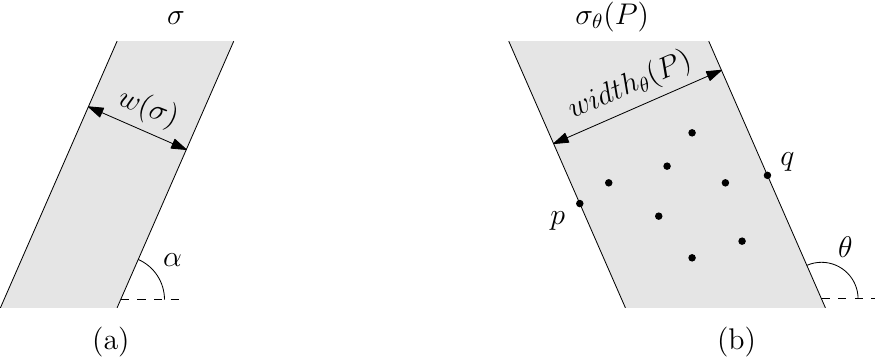}
\caption{(a) The orientation of $\sigma$ is $\alpha$. (b) For a set $P$ of points and an orientation $\theta$, $\sigma_{\theta}(P)$ and $\width_{\theta}(P)$ are shown. Observe that $(p, q)$ is an antipodal pair of $P$ with respect to $\theta$. }
\label{fig:preliminaries}
\end{figure}

For a set $P$ of points in the plane and an orientation $\theta \in [0,\pi)$, we say that a pair $(p,q)$ with $p,q \in P$ is an \emph{antipodal pair} of $P$ with respect to $\theta$ if $p$ lies on one bounding line of $\sigma_\theta(P)$ and $q$ lies on the other (see Figure~\ref{fig:preliminaries}(b)).

For two finite sets $A$ and $B$ in the plane, let $d(A,B)$ denote the minimum distance between a point of $A$ and a point of $B$.
For two points $p$ and $q$ in the plane, let $\ell_{pq}$ be the line passing through $p$ and $q$.

For each of the four problems, we assume that $w^*>0$, where $w^*$ denotes the maximum width of an optimal pair of slabs for that problem. The case where $w^*=0$ can be detected and handled in linear time. See Appendix~\ref{sec:w_0}.

Throughout the paper, let $\eps > 0$ be a sufficiently small real number and let $S$ denote an input set of $n$ points in the plane.

\section{Ingredients}
\label{sec:polyeps}

In this section, we describe three key ingredients used throughout the paper: (1) an anchor pair, (2) a linear-time $10$-approximation algorithm, and (3) an $\eps$-certificate.

% %%%%%%%%%%%%%%
% (Section Intro)
% We first define an \emph{anchor pair} in Section~\ref{sec:anchor_pair}, and present a $10$-approximation algorithm for the general two-line-center problem based on an anchor pair in Section~\ref{sec:general_const}.
% In~Section~\ref{sec:eps_certificate}, we recall the concept called an \emph{$\eps$-certificate} (as defined in~\cite{agarwal2005approximation}) which is a subset of $S$ that can be used to obtain a $(1+\eps)$-approximate solution for all of the four problems. Then we show that an $\eps$-certificate of size $O(1/\eps^2)$ can be computed in $O(n)$ time based on a $10$-approximate solution obtained in Section~\ref{sec:general_const}. 

% By summing up the results, we present a $(1+\eps)$-approximate algorithm which runs in $O(n + 1/\eps^{O(1)})$ time for each of the four problems.
% %%%%%%%%%%%%%%

Throughout this section, we denote by $\Sigma^* = (\sigma_1^*, \sigma_2^*)$ an optimal pair of slabs for the general two-line-center problem, and let $w^* = \max \{w(\sigma_1^*), w(\sigma_2^*) \}$.

\subsection{An Anchor Pair and Its Candidate Set}
\label{sec:anchor_pair}
We first define an anchor pair and show that a candidate set for an anchor pair of $\Sigma^*$ of constant size can be computed in $O(n)$ time.
Recall that $S$ denotes the input set of $n$ points in the plane.

\begin{definition}[Anchor pair]
    For a pair of slabs $\Sigma = (\sigma_1, \sigma_2)$,
    we call a pair of points $p, q \in \mathbb R^2$ an \textbf{anchor pair of $\Sigma$} if 
    %(Itemizing)
    \begin{enumerate}[(1)]
        \item $p, q\in \sigma_1$ and $d(p, q) \ge \diam((\sigma_1 \setminus \sigma_2) \cap S)/4$, or 
        \item $p, q\in \sigma_2$ and $d(p, q) \ge \diam((\sigma_2 \setminus \sigma_1) \cap S)/4$.
    \end{enumerate}
     Note that $p$ or $q$ may lie in $\sigma_1\cap \sigma_2$. (See Figure~\ref{fig:anchor_ex}.)
\end{definition}

\begin{figure}[ht]
\centering
\includegraphics[height=4.1cm,keepaspectratio]{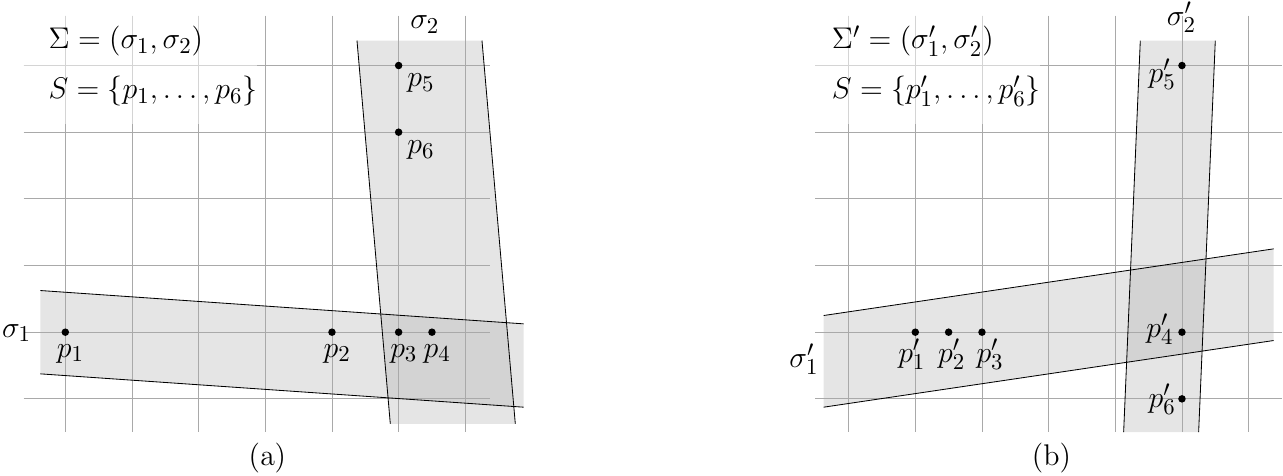}
\caption{
(a) Both $(p_2, p_3)$ and $(p_3, p_4)$ are anchor pairs of $\Sigma$, satisfying \mytxt{(1)} and \mytxt{(2)}, respectively.
(b) $(p_1', p_2')$ is an anchor pair of $\Sigma'$, satisfying \mytxt{(1)}, but $(p_4', p_6')$ is not.
}
\label{fig:anchor_ex}
\end{figure}

We remark that the authors in~\cite{agarwalTwo} used the concept of an anchor pair, defined slightly differently.  
They showed that a candidate set of constant size can be computed in $O(n \log n)$ time. In the proof of the following lemma, we point out a gap in their proof.  
By adopting~our~definition of an anchor pair, we both resolve this gap and obtain its candidate set in $O(n)$~time.

\begin{lemma}
    \label{lem:anchor_pair}
    We can compute, in $O(n)$ time, a set of at most $11$ pairs of points such that for any pair $\Sigma$ of slabs whose union covers $S$, at least one of them is an anchor pair of $\Sigma$.
    %We can find a set of at most $11$ pairs of points, at least one of which is an anchor pair of $\Sigma^*$ in $O(n)$ time. 
\end{lemma}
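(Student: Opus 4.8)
The plan is to use one cheap pair to reduce to a single hard configuration, and then to cover that configuration with a constant net of extreme points.

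First I would compute an approximate diameter pair: pick an arbitrary $u\in S$ and let $v$ be a farthest point of $S$ from $u$, so that $\delta := d(u,v)\ge diam(S)/2$ and, since every point of $S$ lies within distance $\delta$ of $u$, also $diam(S)\le 2\delta$. I make $(u,v)$ the first candidate pair. It is already an anchor pair of $\Sigma=(\sigma_1,\sigma_2)$ whenever $u$ and $v$ lie in a common slab: writing $A=(\sigma_1\setminus\sigma_2)\cap S$ and $B=(\sigma_2\setminus\sigma_1)\cap S$, if $u,v\in\sigma_1$ then $diam(A)\le diam(S)\le 2\delta$, so $d(u,v)=\delta\ge diam(A)/2\ge diam(A)/4$ and condition (i) holds; the case $u,v\in\sigma_2$ is symmetric through (ii). Because $\sigma_1\cup\sigma_2\supseteq S$, the only way $(u,v)$ can fail is the split configuration $u\in A,\ v\in B$ (or its mirror image), which is what the remaining candidates must handle.

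For the split case I would add a constant net of extreme points. Fix $k=5$ equally spaced directions $\hat\phi_1,\dots,\hat\phi_5$ on the circle and let $e_i\in S$ be a point of $S$ extreme in direction $\hat\phi_i$; the candidate set is $\{(u,v)\}\cup\{(u,e_i):1\le i\le 5\}\cup\{(v,e_i):1\le i\le 5\}$, giving at most $11$ pairs, all computable by linear scans in $O(n)$ time. The purpose of this choice is a distance guarantee: since $u\in A$, the diametral pair of $A$ lies in $\sigma_1$ and one of its endpoints, say $a$, satisfies $d(u,a)\ge diam(A)/2$; taking $\hat\phi_i$ to be the net direction within $\pi/5$ of the direction $\widehat{ua}$, extremality gives $d(u,e_i)\ge\langle e_i-u,\hat\phi_i\rangle\ge\langle a-u,\hat\phi_i\rangle\ge d(u,a)\cos(\pi/5)\ge 0.8\cdot diam(A)/2 > diam(A)/4$. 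A symmetric statement holds for $v$, $B$, and $\sigma_2$.

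The main obstacle is membership, and it is precisely here that I expect the earlier proof to be incomplete: the distance bound alone does not show that $e_i$ lies in the slab we need. To place $e_i$ correctly I would invoke the covering property in the form ``every point outside $\sigma_2$ lies in $\sigma_1$.'' The goal then becomes to argue that the net-extreme point selected toward $a$ is forced to poke out of $\sigma_2$ on $u$'s side, so that it lands in $\sigma_1$ and $(u,e_i)$ certifies $A$ through (i); and that in the complementary situation, where this extreme point instead falls inside $\sigma_2$, the mirror argument makes some pair $(v,e_j)$ certify $B$ through (ii). Carrying out this dichotomy requires relating the net direction $\hat\phi_i$ to the unknown orientation of $\sigma_2$ and checking that the $\pi/5$ angular slack is small enough that ``far from $u$ toward $a$'' still forces the correct side of $\sigma_2$. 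This orientation-versus-membership bookkeeping, rather than the elementary distance estimate, is the delicate part of the argument; the linear running time, by contrast, is immediate, since $v$ and each $e_i$ come from a single scan and there are only a constant number of them.
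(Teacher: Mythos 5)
Your proposal matches the paper's starting point (an approximate diameter pair, which is an anchor pair whenever both its points land in a common slab), and your distance estimate for directional extreme points is correct. But the step you defer --- showing that the extreme point $e_i$ selected toward $a$ lies in $\sigma_1$, or that in the complementary situation some pair $(v,e_j)$ certifies $B$ --- is not bookkeeping that can be carried out: the dichotomy is false, and no finite net of fixed directions can repair it, because the anchor pair may be forced to use a point of $S$ that is extreme in \emph{no} direction. Concretely, take your five directions at angles $54^\circ,126^\circ,198^\circ,270^\circ,342^\circ$ and let $S=\{u=(0,0),\;a^*=(0.5,0.3),\;v=(1,0),\;b^*=(0.5,0.2)\}$, with $\sigma_1$ a slab of width $0.01$ centered on the line through $u,a^*$ and $\sigma_2$ a slab of width $0.01$ centered on the line through $v,b^*$; then $\sigma_1\cup\sigma_2\supseteq S$ and the farthest point of $S$ from $u$ is $v$. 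Checking the five inner products shows that every $e_i$ is $u$ or $v$: the points $a^*$ and $b^*$ lie strictly inside the region cut out by the five supporting halfplanes of $\{u,v\}$ (here a triangle with apex $(0.5,0.363)$), so they are never extreme. Now $(\sigma_1\setminus\sigma_2)\cap S=\{u,a^*\}$ has diameter $\approx 0.583$ and $(\sigma_2\setminus\sigma_1)\cap S=\{v,b^*\}$ has diameter $\approx 0.539$, so an anchor pair must realize distance $\ge 0.146$ inside $\sigma_1$ or $\ge 0.135$ inside $\sigma_2$; but all eleven of your candidates are pairs drawn from $\{u,v\}$, and none qualifies: $(u,v)$ has its points in different slabs, while $(u,u)$ and $(v,v)$ have distance $0$. (A near-collinear configuration covered by two transversal slabs defeats the approach for \emph{every} finite direction net, since the directional extreme points are then always just the two endpoints.)

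This is precisely the difficulty the paper's proof is built around, and it resolves it with a different candidate mechanism rather than extremality in fixed directions. The paper tests whether some $r\in S$ lies outside both open disks of radius $d(p,q)/2$ centered at $p$ and $q$: if so, $p,q,r$ are pairwise at distance $\ge diam(S)/4$ and pigeonhole over the three pairs gives an anchor pair (this already handles the counterexample above, since $a^*$ is outside both disks, so $(p,a^*)$ is produced); if not, $S$ splits into two clusters $P,Q$, and the candidate set must contain pairs \emph{both} of whose points avoid $\{p,q\}$ --- approximate diameter pairs inside each cluster, plus pairs joining $p$ and $q$ to the inner-common-tangent points of the clusters --- with correctness proved by a separation/crossing argument (the inner tangents would have to cross inside the disjoint disks), not by any directional extremality. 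So the missing membership argument is a genuine gap, and it cannot be closed by sharper ``orientation-versus-membership bookkeeping''; the candidate set itself has to change.
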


\begin{proof}
    Let $\Sigma = (\sigma_1, \sigma_2)$ be a pair of slabs whose union covers $S$.
    Let $\Delta$ denote the diameter of $S$. 
    Let $p$ and $q$ be a pair of points in $S$ satisfying $d(p, q) \ge \Delta/2$, which can be found in $O(n)$ time.
    Let $R = d(p, q)/2$. 
    We consider two open disks, $D_p$ and $D_q$, of radius $R$ centered at $p$ and $q$, respectively. 
    That is $D_p \cap D_q = \emptyset$.
    % open: To claim that \conv(P) and \conv(Q) do not intersect. 

    If there exists $r\in S\setminus (D_p\cup D_q)$, 
    we output $F = \{(p,q ), (p, r), (q, r) \}$. See Figure~\ref{fig:anchor_pair}(a). 
    Observe that $p$, $q$, and $r$ are mutually separated by a distance of at least $R \geq \Delta/4$, and both $\diam((\sigma_1 \setminus \sigma_2) \cap S)$ and $\diam((\sigma_2 \setminus \sigma_1) \cap S)$ are at most $\Delta$.
    Since at least one pair in $F$ must lie within the same slab, $F$ necessarily contains an anchor pair of $\Sigma$.

    Now we consider the remaining case where $S\subset (D_p \cup D_q)$. 
    Let $P = S\cap D_p$ and $Q = S\cap D_q$. 
    Let $\conv(P)$ and $\conv(Q)$ be the convex hulls of $P$ and $Q$, respectively. Note that these hulls do not intersect. 
    Let $\ell_1$ and $\ell_2$ be the two lines that are inner common tangents to $\conv(P)$ and $\conv(Q)$. 
    Let $p_1 \in P$ (resp., $p_2 \in P$) and $q_1 \in Q$ (resp., $q_2 \in Q$) be the points lying on $\ell_1$ (resp., $\ell_2$). 
    See Figure~\ref{fig:anchor_pair}(b).
    We can compute $p_1, p_2, q_1$, and $q_2$ without computing the convex hulls explicitly by linear programming in $O(n)$ time~\cite{Dyer1984LinearTime_LP, Megiddo1984LinearTime_LP}. 
    Let $(p_3, p_4)$ be a pair of points in $P$ such that $d(p_3, p_4) \ge \diam(P)/2$. 
    Let $(q_3, q_4)$ be a pair of points in $Q$ such that $d(q_3, q_4) \ge \diam(Q)/2$. We can compute $p_3, p_4, q_3$, and $q_4$ in $O(n)$ time. 
    Then we output $F = \{(p, q), (p_3, p_4), (q_3, q_4) \} \cup \bigcup_{i=1}^4 (p, q_i)  \cup \bigcup_{i=1}^4 (q, p_i)$. Now we show that $F$ contains at least one anchor pair of $\Sigma$.

    Assume to the contrary that no pair of $F$ is an anchor pair of $\Sigma$. 
    Let $S_{12} = S\cap (\sigma_1 \setminus \sigma_2)$ and $S_{21} = S\cap (\sigma_2 \setminus \sigma_1)$.
    The assumption implies that $S_{12}$ (resp., $S_{21}$) contains either $p$ or $q$ but not both. 
    Without loss of generality, let $p\in S_{12}$ and $q \in S_{21}$. 
    For every $1\le i \le 4$, since both $d(p, q_i)$ and $d(q, p_i)$ are at least $R \ge \Delta/4$, the assumption also implies~$p_i \in S_{12}$~and~$q_i \in S_{21}$.

    % Suppose on the contrary that no pair of $F$ is an anchor pair of $\sigma_1^*$ or $\sigma_2^*$. 
    % Let $S_{12}^* = S\cap (\sigma_1^*\setminus \sigma_2^*)$ and $S_{21}^* = S\cap (\sigma_2^*\setminus \sigma_1^*)$.
    % The assumption implies that $S_{12}^*$ (resp., $S_{21}^*$) contains either $p$ or $q$ but not both. 
    % Without loss of generality, let $p\in S_{12}^*$ and $q \in S_{21}^*$. 
    % Since both $d(p, q_i)$ and $d(q, p_i)$ is at least $R \ge \Delta/4$ for every $1\le i \le 4$, the assumption also implies $p_i \in S_{12}^*$ and $q_i \in S_{21}^*$ for every $1\le i \le 4$. 

    We first claim that $P \cap S_{21} \neq \emptyset$ and $Q \cap S_{12} \neq \emptyset$ under the assumption.
    Suppose that $P \cap S_{21} = \emptyset$. Then $S_{21} \subseteq Q$ and thus $\diam(S_{21}) \le \diam(Q)$.  
    Since $\diam(S_{21})/2 \le \diam(Q)/2 \le d(q_3, q_4)$ and $q_3, q_4 \in S_{21}$, $(q_3, q_4)$ is an anchor pair of $\Sigma$, contradicting the assumption\footnote{
    In the proof of~\cite{agarwalTwo}, the authors claimed that  $P \cap S_{21} \neq \emptyset$ and $Q \cap S_{12} \neq \emptyset$ in a similar way. 
    However, under their definition of an anchor pair, it is not clear that the assumption $P \cap S_{21} = \emptyset$ (resp., $Q \cap S_{12} = \emptyset$) leads to a contradiction,
    because $\diam(Q)$ (resp., $\diam(P)$) can be much smaller than $\diam(S \cap \sigma_2)$ (resp., $\diam(S \cap \sigma_1)$).}.
    A similar contradiction occurs if we assume $Q \cap S_{12} = \emptyset$.

    Therefore there exist two points $p'$ and $q'$ such that $p' \in P \cap S_{21}$ and $q' \in Q \cap S_{12}$.
    Let $s$ be the intersection of $\ell_1$ and $\ell_2$. 
    Observe that $\triangle sq_1 q_2 \subset \triangle p'q_1 q_2 \subset \sigma_2$ as $p' \in P$ and $p', q_1, q_2 \in \sigma_2$. 
    As $q'$ should not be inside $\sigma_2$, $q'$ should lie outside of $\triangle s q_1 q_2$. 
    Therefore $\triangle p_1 p_2 q'$ and $\overline{q_1 q_2}$ intersect. See Figure~\ref{fig:anchor_pair}(c).
    As $p_1, p_2, q' \in S_{12}$ and $q_1, q_2 \in S_{21}$, it is implied that  $\sigma_1$ separates $q_1$ and $q_2$.
    In a symmetric way, we can show that $\sigma_2$ separates $p_1$ and $p_2$. 
    This implies that the segments $\overline{p_1 p_2}$ and $\overline{q_1 q_2}$ intersect (see Figure~\ref{fig:anchor_pair}(d)). But it contradicts that $D_p \cap D_q = \emptyset$. 
\end{proof}

\begin{figure}[ht]
\centering
\includegraphics[width=0.68 \textwidth]{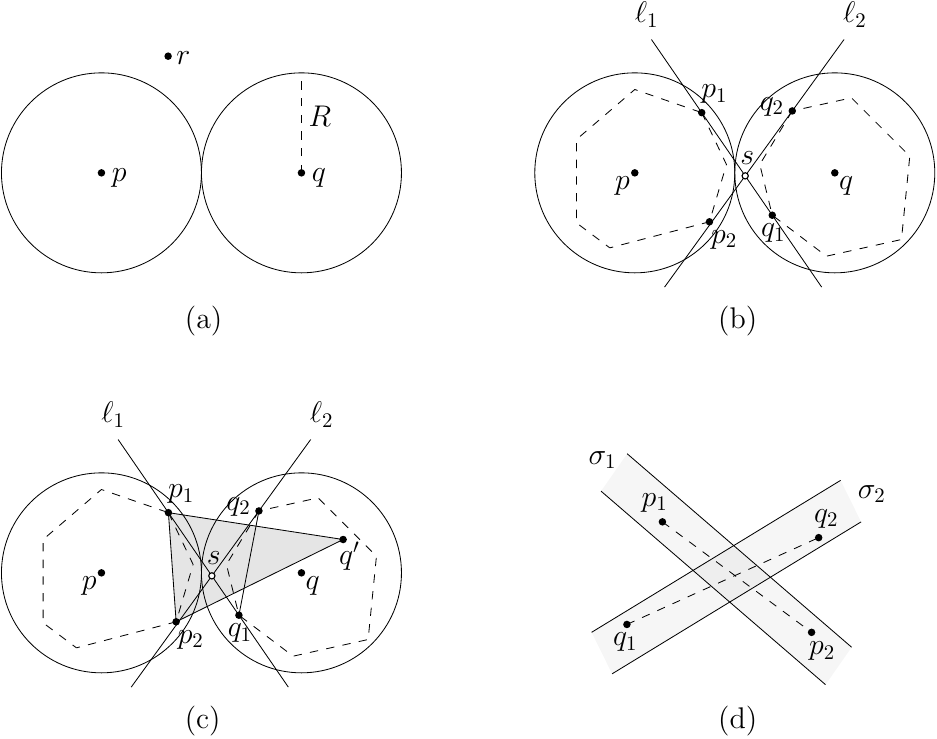}
\caption{(a) There exists a point $r \in S \setminus (D_p \cup D_q)$. 
(b) The inner common tangents $\ell_1$ and $\ell_2$ of ${conv}(P)$ and ${conv}(Q)$, and the points $p_1, p_2, q_1, q_2$, and $s$ are shown. 
(c) The triangle $\triangle p_1 p_2 q'$ and the segment $\overline{q_1 q_2}$ intersect. 
(d) The segments $\overline{p_1 p_2}$ and $\overline{q_1 q_2}$ intersect.
}
\label{fig:anchor_pair}
\end{figure}

\subsection{A Linear-Time \texorpdfstring{$10$}{10}-Approximation Algorithm}
 
In this section, we give a $10$-approximation algorithm for the general two-line-center problem that runs in $O(n)$ time. 
We remark that the authors in~\cite{agarwalTwo} gave a $6$-approximation algorithm that runs in $O(n\log n)$ time.

\label{sec:general_const}
By Lemma~\ref{lem:anchor_pair}, we can find a set of at most $11$ pairs of points, at least one of which is an anchor pair of $\Sigma^*$ in $O(n)$ time. 
For the moment, assume that we have an anchor pair $(p, q)$ of $\Sigma^*$. 
Without loss of generality, assume that $p, q \in \sigma_1^*$ and $d(p, q) \ge \diam((\sigma_1^*\setminus \sigma_2^* ) \cap S)/4$. 

Let ${p_1, p_2, \ldots, p_{n}}$ be the points of $S$, indexed in increasing order of their distance from $\ell_{pq}$. (We do not actually sort them. The indices are used only for reference.)
For three points $\alpha, \beta, \gamma \in \mathbb R^2$, 
let $\sigma(\alpha, \beta ;\gamma)$ denote the slab whose center line is $\ell_{\alpha \beta}$ and one bounding line passes through $\gamma$ (see Figure~\ref{fig:sigma_abc}).

\begin{figure}[ht]
\centering
\includegraphics[width=0.34 \textwidth]{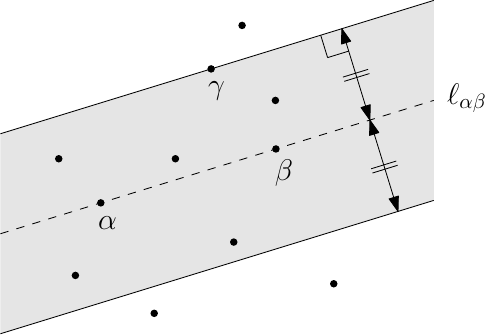}
\caption{For three points $\alpha, \beta, \gamma \in \mathbb R^2$, the gray region represents $\sigma(\alpha, \beta ;\gamma)$. 
}
\label{fig:sigma_abc}
\end{figure}

The following lemma implies the existence of a pair of slabs whose maximum width is at most $10w^*$, and one of which is parallel to $\ell_{pq}$.

\begin{lemma}
    There exists a point $r\in S$ such that $\sigma(p, q;r)$ covers all points of $(\sigma_1^*\setminus \sigma_2^* ) \cap S $ and $w(\sigma(p, q;r)) \le 10 w^*$. 
    For such $r$, $\width(S \setminus \sigma(p, q; r))\le w^*$.
    \label{lem:10apx_existence}
\end{lemma}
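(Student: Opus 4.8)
The plan is to take $r$ to be the point of $A := (\sigma_1^*\setminus\sigma_2^*)\cap S$ that is farthest from $\ell_{pq}$; in the ordering $p_1,\dots,p_n$ this is simply the highest-indexed point of $A$. With this choice coverage is automatic: every point of $A$ lies within distance $d(r,\ell_{pq})$ of $\ell_{pq}$, and $d(r,\ell_{pq})$ is exactly the half-width of the symmetric slab $\sigma(p,q;r)$, so $A\subseteq\sigma(p,q;r)$. The closing observation is then almost free: since $\sigma(p,q;r)\supseteq A$, any point of $S$ lying outside $\sigma(p,q;r)$ lies in $S\setminus A=(S\cap\sigma_2^*)\cup(S\setminus\sigma_1^*)$, and because $S\subseteq\sigma_1^*\cup\sigma_2^*$ we have $S\setminus\sigma_1^*\subseteq\sigma_2^*$; hence $S\setminus\sigma(p,q;r)\subseteq\sigma_2^*$ and $width(S\setminus\sigma(p,q;r))\le w(\sigma_2^*)\le w^*$.

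The real content is the width bound $w(\sigma(p,q;r))=2\,d(r,\ell_{pq})\le 10w^*$, i.e.\ $d(r,\ell_{pq})\le 5w^*$. My main tool will be an elementary fact about triangles in a slab: a triangle contained in a slab of width $w_1$ has area at most $\tfrac12\,\Lambda\,w_1$, where $\Lambda$ is the length of its longest side. Indeed, the two vertices extreme in the slab direction are joined by a side of length at least the tangential extent $X$, so $\Lambda\ge X$, while the triangle sits in an $X\times w_1$ box and thus has area at most $\tfrac12 X w_1\le\tfrac12\Lambda w_1$. Since $p,q,r\in\sigma_1^*$ and $w(\sigma_1^*)=w_1\le w^*$, applying this to $\triangle pqr$ with base $pq$ gives
$$d(r,\ell_{pq})=\frac{2\,\mathrm{area}(\triangle pqr)}{d(p,q)}\le\frac{\Lambda}{d(p,q)}\,w_1\le\frac{\Lambda}{d(p,q)}\,w^*,\qquad \Lambda=\max\{d(p,q),d(p,r),d(q,r)\}.$$
So it suffices to prove $\Lambda\le 5\,d(p,q)$, that is, $\max\{d(p,r),d(q,r)\}\le 5\,d(p,q)$.

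This last inequality is the heart of the matter, and it is where the anchor inequality $d(p,q)\ge diam(A)/4$ meets the relative position of $p,q$ and $A$. In the favorable situation where $\ell_{pq}$ meets $conv(A)$ — in particular whenever one of $p,q$ lies in $A$ — I would argue directly: if $q\in A$ then $d(q,r)\le diam(A)\le 4\,d(p,q)$ and $d(p,r)\le d(p,q)+d(q,r)\le 5\,d(p,q)$, yielding $\Lambda\le 5\,d(p,q)$; and more generally, if $\ell_{pq}$ crosses $conv(A)$ then $A$ has points on both sides of $\ell_{pq}$, so $d(r,\ell_{pq})\le width_\theta(A)$, where $\theta$ is the orientation of $\ell_{pq}$, and the estimates $width_\phi(A)\le w_1$ (with $\phi$ the orientation of $\sigma_1^*$), $|\sin(\theta-\phi)|\le w_1/d(p,q)$, and $diam(A)\le 4\,d(p,q)$ combine to give $width_\theta(A)\le w_1+diam(A)\,|\sin(\theta-\phi)|\le 5w_1\le 5w^*$. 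The hard part, and the step I expect to require the most care, is the complementary case in which both $p$ and $q$ lie in $\sigma_1^*\cap\sigma_2^*$ and $\ell_{pq}$ misses $conv(A)$: a priori $p$ and $q$ could sit far along the slab from the cluster $A$, inflating $\Lambda$ and $d(r,\ell_{pq})$ without bound. My expectation is that optimality of $\Sigma^*$ rules this case out — in every attempt to push $A$ far from a tilted $\ell_{pq}$, the point set admits a strictly cheaper pair of slabs, contradicting optimality — so the genuine work is to convert this qualitative phenomenon into the quantitative bound $\max\{d(p,r),d(q,r)\}\le 5\,d(p,q)$, presumably by exploiting that $p,q\in\sigma_2^*$ forces $\ell_{pq}$ to be nearly parallel to both slabs while $r\notin\sigma_2^*$ pins its distance to $\ell_{pq}$.
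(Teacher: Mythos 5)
Your choice of $r$, the coverage claim, the closing observation, the triangle-area reduction to $\Lambda\le 5\,d(p,q)$, and your two favorable cases are all sound. But the proposal is not a proof: essentially all of the difficulty sits in your last case, where $p,q\in\sigma_1^*\cap\sigma_2^*$ and $\ell_{pq}$ misses $conv(A)$, and there you offer only the expectation that optimality of $\Sigma^*$ rules the configuration out. That expectation is false, so the gap cannot be closed by the route you propose. Concretely, let $A$ be the four corners of a unit square centered at $(0,100)$, let $B$ be the four corners of a unit square centered at $(100,0)$, and let $p=(-0.2,0)$, $q=(0.2,0)$. Take $\sigma_1^*=\{|x|\le 1/2\}$ and $\sigma_2^*=\{|y|\le 1/2\}$; this pair covers $S=A\cup B\cup\{p,q\}$ with maximum width $1$, and it is optimal: a slab containing all of $A$ (or all of $B$) has width at least $1$, and otherwise each slab meets both $A$ and $B$, so the slab containing $p$ encloses a triangle with vertices near $(0,100)$, $(100,0)$, $(0,0)$ and hence has width at least $68$. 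Moreover $(p,q)$ is a legitimate anchor pair of $\Sigma^*$ in the sense of Definition~\ref{def:anchor_pair}: $p,q\in\sigma_1^*$ and $d(p,q)=0.4\ge diam(A)/4=\sqrt{2}/4$. Yet $\ell_{pq}$ is the $x$-axis, so every slab centered on $\ell_{pq}$ that covers $A$ has width at least $201$, far exceeding $10w^*=10$: the conclusion of the lemma itself fails for this anchor pair. Hence no argument using only the optimality of $\Sigma^*$ and the bare anchor-pair definition can finish your Case 3.

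What is missing is not optimality but the provenance of $(p,q)$. The lemma is applied only to pairs produced by Lemma~\ref{lem:anchor_pair}, and in every branch of that construction the pair certified to be an anchor pair additionally satisfies $d(p,s)\le 4\,d(p,q)$ for every $s\in A$: in the main branches because the two points of each candidate pair are at distance at least $diam(S)/4$ apart while $d(p,s)\le diam(S)$, and in the remaining branch because there the relevant set is contained in $Q$ and $d(q_3,q_4)\ge diam(Q)/2$. With this extra inequality your own machinery closes all cases at once: $d(p,r)\le 4\,d(p,q)$ and $d(q,r)\le d(q,p)+d(p,r)\le 5\,d(p,q)$, so $\Lambda\le 5\,d(p,q)$ and $d(r,\ell_{pq})\le 5w_1\le 5w^*$. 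For comparison, the paper's proof is essentially your favorable-case computation: it bounds the extent of $A$ orthogonal to $\ell_{pq}$ by $5w^*$ via a bounding box of $A$ and the estimate $\sin\alpha\le w^*/d(p,q)\le 4w^*/diam(A)$. Note that this, too, only controls the spread of $A$ and tacitly needs $\ell_{pq}$ to pass near $A$ --- exactly the proximity that the inequality $d(p,s)\le 4\,d(p,q)$ supplies and that an arbitrary anchor pair (as in the example above) lacks. So the repair for your proof is to replace the appeal to optimality by this quantitative property of the pairs coming from Lemma~\ref{lem:anchor_pair}.
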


\begin{proof}
    We prove that for every $s\in (\sigma_1^* \setminus \sigma_2^*)\cap S$, it holds that $d(\ell_{pq}, s) \le 5w^*$. It directly implies that there exists a slab $\tau$ of width at most $10w^*$ that has $\ell_{pq}$ as its center line and covers $(\sigma_1^*\setminus \sigma_2^* ) \cap S$. 
    Observe that we can choose $r$ as the first point in $S$ touched by one of the bounding lines of $\tau$ as we move them toward $\ell_{pq}$ at the same speed.

    Without loss of generality, we assume that $p$ and $q$ lie on a horizontal line. 
    Let $D = \diam((\sigma_1^* \setminus \sigma_2^*) \cap S)$. 
    Therefore, $d(p, q) \ge D/4$. 
    Let $\alpha$ denote the orientation of $\sigma_1^*$. 
    For simplicity, we assume that $0\le \alpha < \pi/2$. The other case can be shown in a symmetric way. 
    Since $p, q \in \sigma_1^*$, we have $\sin \alpha \le w^*/ d(p, q) \le  4w^*/D$. See the right triangle having $\overline{pq}$ as one of its sides in Figure~\ref{fig:5apx}.
    
    Let $B$ be the minimum bounding box of $(\sigma_1^* \setminus \sigma_2^*) \cap S$ with orientation $\alpha$.
    Observe that every side of $B$ has length at most $D$. 
    Let $o$ and $t$ denote the lowest and highest vertices of $B$, respectively, which lie at diagonally opposite corners.
    Consider a horizontal slab $\sigma$ whose two bounding lines pass through $o$ and $t$, and let $w = w(\sigma)$.  
    Clearly, $(\sigma_1^* \setminus \sigma_2^*) \cap S \subset \sigma$. 
    We now show that $w \le 5w^*$, which completes the proof.

    Let $b$ be a vertex of $B$ other than $o$ and $t$. 
    Consider the vertical line $\ell$ that passes through $b$. Then $\ell$ intersects the bounding lines of $\sigma$. 
    Observe that one intersection point lies inside $\sigma_1^*$, while the other lies outside. 
    Let $a$ and $c$ denote these two intersection points, with $a$ being the one inside $\sigma_1^*$. 
    Then $w = \overline{ab} + \overline{bc} \le \overline{tb} + \overline{bc} \le w^* + \overline{bc}$.
     We have $\overline{bc} = \overline{ob}\cdot \sin \alpha 
    \le D \cdot \frac{4w^*}{D} \le 4w^*$.
    Therefore, $w \le 5w^*$. 
\end{proof}

\begin{figure}[ht]
\centering
\includegraphics[width=0.55 \textwidth]{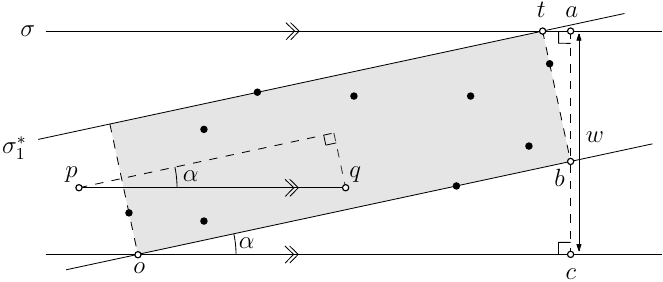}
\caption{Black points indicate the set $(\sigma_1^* \setminus \sigma_2^*) \cap S$.
The orientation of $\sigma_1^*$ is denoted by $\alpha$.
The gray region represents $B$, the minimum bounding box of $(\sigma_1^* \setminus \sigma_2^*) \cap S$ with orientation $\alpha$.
The anchor pair $(p, q)$ and the points $o, t, a, b, c$ are shown. Observe that $w(\sigma) = \overline{ab} + \overline{bc}$.
}
\label{fig:5apx}
\end{figure}

\paragraph*{Naive Approach and Idea}\label{par:naive}

By Lemma~\ref{lem:10apx_existence}, there exists a point $r \in S$ such that both $w(\sigma(p, q; r))$ and $\width(S \setminus \sigma(p, q; r))$ are at most $10w^*$.
We now give a brief idea of using a binary search to find such a point $r$.

Consider the slab $\sigma(p, q; p_i)$ and the set $S \setminus \sigma(p, q; p_i)$ for $1 \le i \le n$. 
Observe that as $i$ increases, the width $w(\sigma(p, q; p_i))$ increases, while $\width(S \setminus \sigma(p, q; p_i))$ decreases\footnote{Throughout this section, we refer to a non-decreasing (resp., non-increasing) function simply as an increasing (resp., decreasing) function.}. 
Using this monotonicity, we can find an index $i$ that minimizes the unimodal function $\max\{ w(\sigma(p, q; p_i)), \width(S \setminus \sigma(p, q; p_i)) \}$.

More specifically, at each step of the binary search over a range of indices $[s, e] \subseteq [1, n]$, 
we compute the median index $m$ and evaluate $w(\sigma(p, q; p_m))$ and $\width(S \setminus \sigma(p, q; p_m))$. 
If $w(\sigma(p, q; p_m)) \ge \width(S \setminus \sigma(p, q; p_m))$, we set the range to $[s, m]$. 
Otherwise, we set it to $[m, e]$. 
Let $j$ be the index obtained by the binary search. 
Observe that $j$ minimizes $\max\{ w(\sigma(p, q; p_i)), \width(S \setminus \sigma(p, q; p_i)) \}$ over all $1\le i \le n$.
By Lemma~\ref{lem:10apx_existence}, we have $\max\{ w(\sigma(p, q; p_j)), \width(S \setminus \sigma(p, q; p_j)) \} \le 10w^*$.

Note that the time required in each step is dominated by computing $\width(S \setminus \sigma(p, q; p_m))$, which takes $O(n \log n)$ time~\cite{Shamos:1978:CG:width}.
Therefore, the overall running time is $O(n \log^2 n)$.

To reduce the time complexity, we avoid computing ${width}(S \setminus \sigma(p, q; p_m))$ from scratch at every step by sacrificing exactness.  
We remark that the authors in~\cite{agarwalTwo} presented an $O(n \log n)$-time 
binary search algorithm by reducing the time for each step to $O(n)$. 
We present an $O(n)$-time binary search algorithm in which each step 
takes time linear in the size of the current range of indices, even when the points are not sorted.
Our algorithm uses the following incremental algorithm that computes a $6$-approximation of the width of a point set.

\begin{restatable}[Theorem A.3 of \cite{chan2006faster}]{theorem}{sixapx}
    \label{thm:6apx}
    Given an online sequence of points in the plane, we can maintain a $6$-approximation of the width with $O(1)$ space and update time.
\end{restatable}

We show the following theorem. See Appendix~\ref{sec:linear_10apx} for the proof and full details.

\begin{restatable}{theorem}{tenapx}
    \label{thm:10apx}
    In $O(n)$ time, we can compute a pair of slabs that forms a $10$-approximation for the general two-line-center problem.
\end{restatable}

\subsection{An \texorpdfstring{$\eps$}{epsilon}-certificate}
\label{sec:eps_certificate}

Before defining an $\eps$-certificate, we first define the notion of an \emph{$\eps$-expansion}.
We remark that the definitions in this section are from~\cite{agarwal2005approximation}.

\begin{definition}[$\eps$-expansion]
\begin{enumerate}[(i)]
    \item For an interval $[a, b]$, 
    we define its \emph{$\eps$-expansion} to be the interval $[a- \eps(b-a)/2, b + \eps(b-a)/2]$. See Figure~\ref{fig:expansion}(a).
    \item For a slab centered on a line $\ell$ with width $w$,
    we define its \emph{$\eps$-expansion} to be the slab centered on the line $\ell$ having width  $(1+ \eps) w$.
    See Figure~\ref{fig:expansion}(b).
\end{enumerate}
\end{definition}

\begin{figure}[!ht]
\centering
\includegraphics[width=0.6 \textwidth]{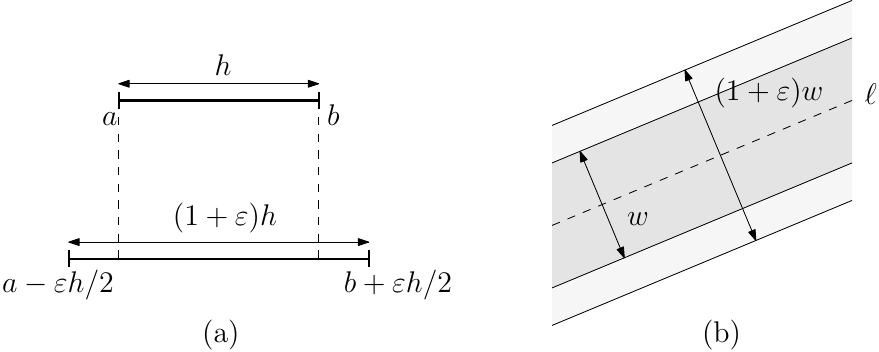}
\caption{(a) The $\eps$-expansion of an interval. 
(b) The $\eps$-expansion of a slab. }
\label{fig:expansion}
\end{figure}

Now we define an $\eps$-certificate in both $\mathbb R^1$ and $\mathbb R^2$. 

\begin{definition}[$\eps$-certificate]
\begin{enumerate}[(i)]
    \item Let $P$ be a set of $n$ points on the real line and $\eps > 0$. 
    We say that a subset $Q\subseteq P$ is an \emph{$\eps$-certificate} of $P$ 
    if for any pair of intervals whose union covers $Q$, 
    the union of their $\eps$-expansions covers $P$. 
    \item Let $P$ be a set of $n$ points in the plane  and $\eps > 0$. 
    We say that a subset $Q\subseteq P$ is an \emph{$\eps$-certificate} of $P$ 
    if for any pair of slabs with the same width whose union covers $Q$, 
    the union of their $\eps$-expansions covers $P$.
    We emphasize that the two slabs must have the same width. 
\end{enumerate}
\end{definition}

We note that the existence of an $\eps$-certificate of size $1/\eps^{O(1)}$ was shown in~\cite{agarwal2005approximation}. 
By proving the following theorem, we refine the hidden constant and show how to construct such an $\eps$-certificate in linear time. 

\begin{theorem}
    \label{thm:computing_eps_certificate}
    For a set $S$ of $n$ points in the plane and $\eps > 0$, 
    an $\eps$-certificate of $S$ of size $O(1/\eps^2)$ can be computed in $O(n)$ time. 
\end{theorem}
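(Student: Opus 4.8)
The plan is to discretize $S$ on a grid whose cell size is proportional to $\eps w^*$, keeping one representative per relevant cell, and to argue correctness through a \emph{stability} property together with a one-dimensional reduction along the long axis of each slab. First I would run the linear-time $10$-approximation of Theorem~\ref{thm:10apx} to obtain a value $w_0$ with $w^*\le w_0\le 10w^*$; this fixes a usable length scale, since $\delta:=\Theta(\eps w_0)=\Theta(\eps w^*)$. By Lemma~\ref{lem:10apx_existence} the same computation exhibits a slab $\sigma(p,q;r)$, which I may take to be horizontal, of width $O(w^*)$ containing $A:=S\cap\sigma(p,q;r)$, while the remainder $B:=S\setminus\sigma(p,q;r)$ has width at most $w^*$ and hence lies in a second band of width $O(w^*)$. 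Thus all of $S$ lies in two bands of width $O(w^*)$, and perpendicular to these bands the extent is only $O(w^*)=O(\delta/\eps)$, i.e.\ $O(1/\eps)$ grid rows.

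To keep the count at $O(1/\eps^2)$ I would grid each band with square cells of side $\delta$: this gives $O(1/\eps)$ rows but possibly $\Delta/\delta$ columns along the band, which is far too many when the diameter $\Delta$ is large. I would cut the columns down to $O(1/\eps)$ by computing a one-dimensional $\eps$-certificate of the projection of the band onto its long axis (the $\mathbb R^1$ notion already defined) and retaining only the columns it selects; inside each retained column I keep one point per nonempty row. The resulting set $Q$ has $O(1/\eps)\cdot O(1/\eps)=O(1/\eps^2)$ points, and since every point can be hashed into its cell in $O(1)$ time at the known scale $\delta$ (no sorting), the construction runs in $O(n)$ time.

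For correctness I would establish two facts. The \emph{stability} fact: moving each point of $S$ to the representative of its cell displaces it by at most $\sqrt2\,\delta=O(\eps w^*)$, so any pair of slabs covering $Q$ has width at least $(1-O(\eps))\,w^*$; consequently every equal-width cover of $Q$ uses width $w=\Omega(w^*)$, and therefore $\delta=O(\eps w)$. This is what rules out the otherwise dangerous regime of very thin covers. The \emph{coverage} fact: given such a cover by two width-$w$ slabs $\tau_1,\tau_2$ and a point $s\in S$, its cell representative $q\in Q$ satisfies $\|s-q\|=O(\eps w)$ whenever $s$ lies in a retained column, so $s$ lies in the $(1+\eps)$-expansion of whichever $\tau_i$ contains $q$; and when $s$ lies in a skipped column, the one-dimensional certificate guarantees that $s$'s long-axis coordinate is sandwiched between those of retained representatives covered by a common $\tau_i$, so again $s$ is captured after expansion. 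Applying this to $Q=Q_A\cup Q_B$ certifies all of $S$ globally, so I never need $A$ or $B$ to be individually hard to cover.

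The hard part will be the long direction, and specifically the mismatch between the axis along which I thin out columns and the direction in which slabs are expanded: the $\eps$-expansion is perpendicular to an \emph{arbitrarily oriented} query slab, whereas the one-dimensional certificate operates along each band's \emph{fixed} long axis. Making the coverage fact rigorous for skipped columns requires controlling the shear of an angled width-$w$ slab across the $O(w^*)=O(w/\eps)$ thickness of the band, while simultaneously accounting for the freedom to split $Q$ arbitrarily between $\tau_1$ and $\tau_2$. This is exactly the point where the original existence argument of~\cite{agarwalTwo, agarwal2005approximation} must be tightened to yield both the clean $O(1/\eps^2)$ size and the linear running time.
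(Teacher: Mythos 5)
Your construction does not just leave the skipped-column step unproven; the set $Q$ it produces is genuinely not an $\eps$-certificate, because a point of $S$ lying in a skipped column contributes nothing to $Q$, so $Q$ retains no information at all about that point's coordinate \emph{across} the band --- the one-dimensional certificate pins down only its long-axis coordinate. Concretely, take $w^*=1$ and let $S$ consist of dense points on the horizontal segments $y=0$ and $y=1$, $x\in[0,100]$, together with dense points on the vertical segments $x=50$ and $x=51$, $y\in[0,3]$ (a ``T'' shape); an optimal pair is the strip $0\le y\le 1$ and the strip $50\le x\le 51$, so $w^*=1$. The anchor pair comes from the bar (it realizes the diameter), so the first stage returns a roughly horizontal band of some width between about $2$ and $10$ centered on the anchor line, possibly containing all of $S$. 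In the one-dimensional certificate of that band's $x$-projection, the extreme projections of every sub-interval come from the dense bar, so the columns $x\in[50,51]$ are generically skipped and \emph{every stub point inside the band is dropped from $Q$}; at best, stub points above the band's top boundary (height $h\ge 1.7$, say) survive via the second band. Now cover $Q$ by two \emph{horizontal} slabs of equal width: a slightly thickened copy of $0\le y\le 1$ for the bar samples, and the strip $h\le y\le 3$ for the surviving tip samples (or take both slabs equal to the first strip if no stub point survived). This pair covers $Q$, yet its $\eps$-expansions miss the dropped stub points with $y$ between roughly $1.2$ and $h$, which lie at distance $\Theta(w^*)$, not $O(\eps w^*)$, from both slabs. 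Note also that your ``stability'' fact already presupposes what fails here: it needs every point of $S$ to lie within $\sqrt2\,\delta$ of a representative, which is exactly false for skipped columns, so stability cannot be used to bootstrap coverage --- the two claims are circular as stated.

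The paper avoids this by doing the one-dimensional thinning once per \emph{row} rather than once per band: it places $O(1/\eps)$ lines parallel to each approximating slab at spacing $\delta\widetilde w$, snaps each point of $S$ to its nearest line, and computes a separate one-dimensional certificate on each line. Then no point's cross-band coordinate is ever discarded (in the example, the stub populates its own lines, whose certificates force any cover of $Q$ to reach it), and the size is still $O(1/\eps)\cdot O(1/\eps)$ with linear time via the floor function. The shear problem you flag then disappears for a clean geometric reason rather than a delicate estimate: for any line $\ell$ and any slab $\sigma$ with center line $m$, the intersection of $\ell$ with the $\delta$-expansion of $\sigma$ is \emph{exactly} the $\delta$-expansion of the interval $\ell\cap\sigma$, since both are intervals centered at $\ell\cap m$ and scaled by $1+\delta$. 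So arbitrary orientations of the query slabs are harmless once everything is restricted to the fixed lines; the only remaining losses are additive $O(\delta\widetilde w)$ snapping errors, absorbed into the factor $1+\eps$ after showing that the common width $r$ of any equal-width cover of $Q$ satisfies $\widetilde w = O(r)$. If you want to salvage a grid-based variant, you must in effect keep a one-dimensional certificate \emph{per row}, which is the paper's construction in different clothing.
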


\begin{proof}
If $n \leq 1/\eps^2$, then we simply take $S$ itself as an $\eps$-certificate of size $O(1/\eps^2)$.
Therefore, we only consider the case where $n > 1/\eps^2$. 

We first show how to compute an~$\eps$-certificate in the one-dimensional space, and then use it to handle the two-dimensional~case.

\paragraph*{One-dimensional case}

Let $P$ be a set of $m$ points on the real line. 
We show that an $\eps$-certificate of $P$ of size $O(1/\eps)$ can be computed in $O(m)$ time. 
If $m \le 1/\eps$, then we simply take $P$ itself as an $\eps$-certificate of size $O(1/\eps)$. 
Therefore, we only consider the case where $m > 1/\eps$.
Let $I$ be the interval that has the leftmost and rightmost points of $P$ as endpoints. 
Let $D$ denote the length of $I$. 
We divide $I$ into $\lceil 4/\eps \rceil$ intervals of length at most $ \eps D / 4 $. 
Let $a_0 < a_1 < \ldots < a_{\lceil 4/\eps \rceil}$ denote the endpoints of these intervals (thus, $I = [a_0, a_{\lceil 4/\eps \rceil}])$. 
For every $i$, $0 \le i < \lceil 4/\eps \rceil$, 
we find the leftmost and rightmost points in $P \cap [a_i, a_{i+1}]$ and let $R$ be the union of them (see Figure~\ref{fig:certificate_1d_ex}).
Observe that $R$ always contains the leftmost and rightmost points of $P$, and $|R| = O(1/\eps)$.
By using the floor function, we can determine the interval to which each input point belongs in constant time. 
Therefore, $R$ can be computed in $O(m + 1/\eps) = O(m)$ time.

\begin{figure}[ht]
\centering
\includegraphics[width=0.7 \textwidth]{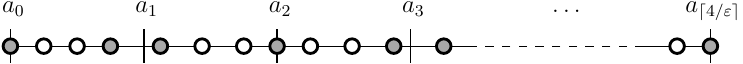}
\caption{ For a set $P$ of points on the real line, the points in $R$ are shown in gray. }
\label{fig:certificate_1d_ex}
\end{figure}

\begin{claim*}
    $R$ is an $\eps$-certificate of $P$.
\end{claim*}
\begin{claimproof}
Let $(I_1, I_2 )$ be a pair of intervals whose union covers $R$. Without loss of generality, assume that the two intervals are interior-disjoint and that $I_1$ lies to the left of $I_2$.
We first consider the case where both $I_1$ and $I_2$ have lengths less than $D/2$.
Then there exists $a_i$ that is contained in neither $I_1$ nor $I_2$ (see Figure~\ref{fig:certificate_1d}(a)). In other words, $I_1$ is strictly to the left of $a_i$, and $I_2$ is strictly to the right of $a_i$.
Let $R_1 = R \cap [a_0, a_i]$ and $R_2 = R \cap [a_i, a_{\lceil 4/\eps \rceil}]$.
As $I_1\cup I_2$ covers $R$, we have $R_1 \subset I_1$ and $R_2 \subset I_2$. 
Since the leftmost and rightmost points of $P \cap [a_0, a_i]$ are in $R_1$ and
the leftmost and rightmost points of $P \cap [a_i, a_{\lceil 4/\eps \rceil}]$ are in $R_2$,
$I_1 \cup I_2$ covers $P$. (In this case, no expansion is needed.)

Now, consider the case where $I_1$ or $I_2$ has length at least $D/2$. Without loss of generality, assume that $|I_1| \ge D/2$. 
Let $j$ be the smallest index such that $a_j \notin I_1$ (see Figure~\ref{fig:certificate_1d}(b)). Note that $I_1$ is strictly left to $a_j$. 
Since the leftmost and rightmost points of $P \cap [a_j, a_{\lceil 4/\eps \rceil}]$ are in $R$, 
$I_2$ contains all the points in $P \cap [a_j, a_{\lceil 4/\eps \rceil}]$. 
Now it suffices to show that the $\eps$-expansion of $I_1$ includes $[a_0, a_j)$.
Observe that the right endpoint of $I_1$ lies in $[a_{j-1}, a_j)$. 
By the $\eps$-expansion, the right endpoint shifts by $\eps |I_1|/2$ to the right. By our assumption, we have $\eps |I_1|/2 \ge \eps D / 4$.
By definition, $a_{i} - a_{i-1} \le \eps D/4$ for every $i = 1, \ldots, \lceil 4/\eps \rceil$. 
Therefore the $\eps$-expansion of $I_1$ contains $[a_0, a_j]$. 
\end{claimproof}

\begin{figure}[ht]
\centering
\includegraphics[width=0.93 \textwidth]{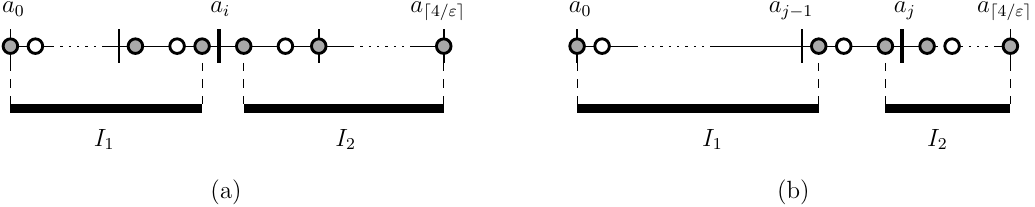}
\caption{The points in $R$ are shown in gray. 
(a) Both $|I_1|$ and $|I_2|$ are less than $D/2$. (b) $|I_1| \ge D/2$.}
\label{fig:certificate_1d}
\end{figure}

\paragraph*{Two-dimensional case}
We now show how to compute an $\eps$-certificate of $S$ of size $O(1/\eps^2)$ in $O(n)$ time.
Recall that $w^* = \max \{w(\sigma_1^*), w(\sigma_2^*) \}$ where $(\sigma_1^*, \sigma_2^*)$ is an optimal pair of slabs for the general two-line-center problem for $S$. 

We first compute a pair $(\lambda_1, \lambda_2)$ of slabs with the same width such that their union covers $S$ and their widths are at most $10w^*$.
This can be done in $O(n)$ time by Theorem~\ref{thm:10apx}.
Let $\widetilde w = w(\lambda_1) = w(\lambda_2)$, that is $ \widetilde w\le 10w^*$. 
For a small constant $c>0$ that will be specified later, let $\delta = c\eps$.
% Let $\delta = c\eps$ for a sufficiently small constant  $c > 0$.
Let $L_1$ be a maximal set of equally spaced parallel lines contained in $\lambda_1$ such that $d(\ell, \ell') = \delta \widetilde  w$ for every two consecutive lines $\ell$ and $\ell'$ in $L_1$. See Figure~\ref{fig:L_construction}.
We define $L_2$ symmetrically for $\lambda_2$.
Let $L = L_1 \cup L_2$. 
Observe that $|L| = O(1/\delta) = O(1/\eps)$.

\begin{figure}[ht]
\centering
\includegraphics[width=0.5\textwidth]{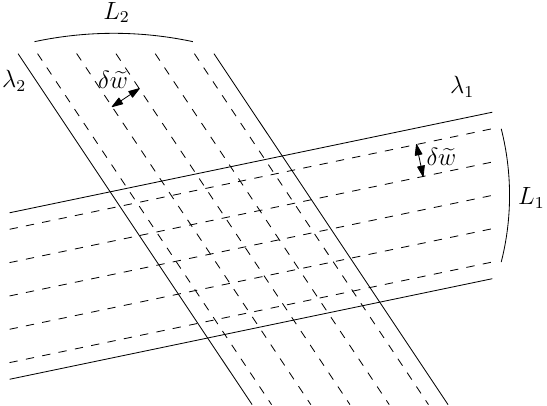}
\caption{The sets $L_1$ and $L_2$ of lines are shown as described above.}
\label{fig:L_construction}
\end{figure}

Let $S'$ be the point set obtained by projecting each point in $S$ onto its closest line in $L$.
For each $\ell \in L$, let $S'(\ell) \subseteq S'$ denote the set of projected points on $\ell$.
See Figure~\ref{fig:certificate_2d_ex}(a).
Let $Q'(\ell) \subseteq S'(\ell)$ be an $\eps$-certificate of $S'(\ell)$, obtained by applying the one-dimensional construction described above (treating $\ell$ as a one-dimensional space).
See Figure~\ref{fig:certificate_2d_ex}(b).
Finally, let $Q' = \bigcup_{\ell \in L} Q'(\ell)$, and let $Q$ be the set of original points in $S$ corresponding to $Q'$.
See Figure~\ref{fig:certificate_2d_ex}(c).
Observe that $Q\subseteq S$ and $|Q| =O(1/\eps^2)$, since $|L| = O(1/\eps)$ and $|Q'(\ell)| = O(1/\eps)$ for each $\ell\in L$.

\begin{figure}[ht]
\centering
\includegraphics[width=0.99 \textwidth]{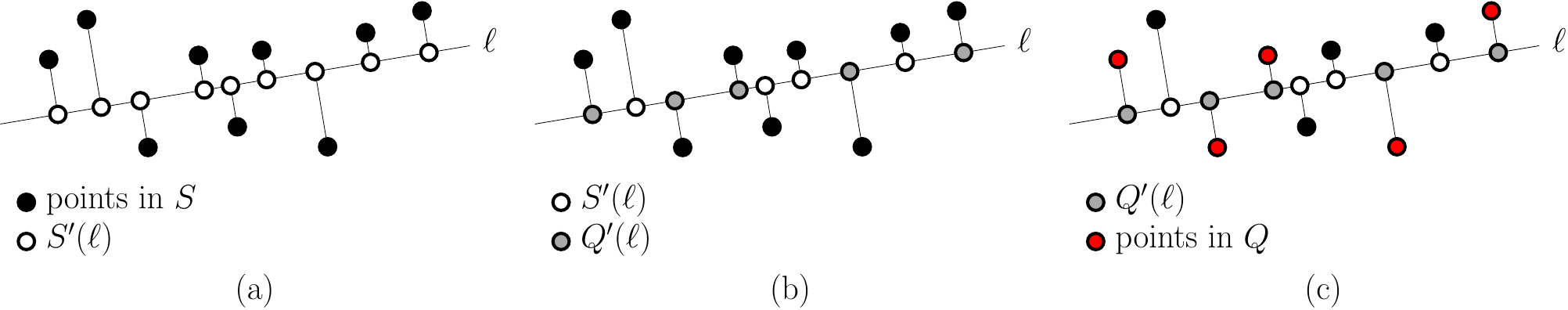}
\caption{A rough illustration of the construction of $Q$ from $S$ with respect to $\ell \in L$.}
\label{fig:certificate_2d_ex}
\end{figure}

We can compute $S'$ in $O(n)$ time using the floor function. 
It takes $O(\sum_{\ell\in L}|S'(\ell)|) = O(n)$ time to compute $Q'(\ell)$ for all $\ell\in L$.
Therefore, we can compute $Q$ in $O(n + 1/\eps)$ time, which is $O(n)$ by the assumption that $n>1/\eps^2$. 
Now we show that $Q$ is an $\eps$-certificate of $S$.

\begin{claim*}
    $Q$ is an $\eps$-certificate of $S$. 
\end{claim*}
\begin{claimproof}
Our proof follows along the same lines as the proof in~\cite{agarwal2005approximation}.
Let $(\sigma_1, \sigma_2)$ be a pair of slabs with the same width whose union covers $Q$. 
Let $(\sigma_1',\sigma_2')$ be the pair of slabs obtained by applying a $2\delta \widetilde  w$ additive expansion to $\sigma_1$ and $\sigma_2$, respectively. 
That is, $w(\sigma_1') = w(\sigma_1) + 2\delta \widetilde w$, and $\sigma_1'$ and $\sigma_1$ have the same center line. The same holds for $\sigma_2'$.
For a point $q$ in $Q$, let $q'$ denote its corresponding point in $Q'$. 
Observe that $d(q, q') \le \delta \widetilde  w$ for every $q\in Q$.
Therefore, $Q' \subset \sigma_1' \cup \sigma_2'$ (see Figure~\ref{fig:certificate_2d}(a)).

Let $(\tau_1',\tau_2')$ be the pair of slabs obtained by taking the $\delta$-expansions of $\sigma_1'$ and $\sigma_2'$, respectively. 
For each line $\ell \in L$, 
observe that the segment $\ell \cap \tau_1'$ is
the $\delta$-expansion of $\ell \cap \sigma_1'$. 
Similarly, $\ell \cap \tau_2'$ is
the $\delta$-expansion of $\ell \cap \sigma_2'$. 
As the union of $\ell \cap \sigma_1'$ and $\ell \cap \sigma_2'$ covers $Q'(\ell)$, 
the union of $\ell \cap \tau_1'$ and $\ell \cap \tau_2'$ covers $S'(\ell)$. 
This leads to $S' \subset \tau_1' \cup \tau_2'$ (see Figure~\ref{fig:certificate_2d}(b)). 

Let $(\tau_1,\tau_2)$ be the pair of slabs  obtained by applying a $2\delta \widetilde w$ additive expansion to $\tau_1'$ and $\tau_2'$, respectively.
For a point $p$ in $S$, let $p'$ denote its corresponding point in $S'$. 
Observe that $d(p, p') \le \delta \widetilde  w$ for every $p\in S$.
Therefore, $S \subset \tau_1 \cup \tau_2$ (see Figure~\ref{fig:certificate_2d}(c)). 

It remains to prove that $w(\tau_1) \le (1+\eps) \cdot w(\sigma_1)$ and $w(\tau_2) \le (1+\eps)\cdot w(\sigma_2)$. We only prove this for $\tau_1$.
Let $r = w(\sigma_1)$. 
Then, $ w(\tau_1) = (r + 2\delta \widetilde  w)(1 + \delta) + 2\delta\widetilde  w$.

As $\tau_1 \cup \tau_2$ covers $S$, it holds that $w^* \le (r + 2\delta \widetilde  w)(1 + \delta) + 2\delta\widetilde  w $. 
Since $\widetilde w\le 10 w^*$, it follows that $w^* \le  (r + 20\delta   w^*)(1 + \delta) + 20\delta  w^*$. 
Therefore, we have $w^*(1-40\delta -20\delta^2) \le r(1+\delta)$. 
For a sufficiently small $\delta$, it holds that $w^*/2 \le r(1+\delta)$. 
Therefore, $w^* \le 4r$. 

Now, we rearrange $(r + 2\delta \widetilde  w)(1 + \delta) + 2\delta\widetilde  w$ as $r(1+\delta) + 2\delta \widetilde w(2+\delta)$. 
Since $\widetilde w \le 10 w^* \le 40 r$, we have 
$r(1+\delta) + 2\delta \widetilde w(2+\delta) \le r(1+\delta) + 80\delta  (2+\delta) r = (1 + 161 \delta + 80 \delta^2)r$. 
This is at most $(1+\eps)r$ for sufficiently small $c>0$ (e.g., $c=1/200$).
\end{claimproof}
This completes the proof of Theorem~\ref{thm:computing_eps_certificate}.
\end{proof}

\begin{figure}[ht]
\centering
\includegraphics[width=0.94 \textwidth]{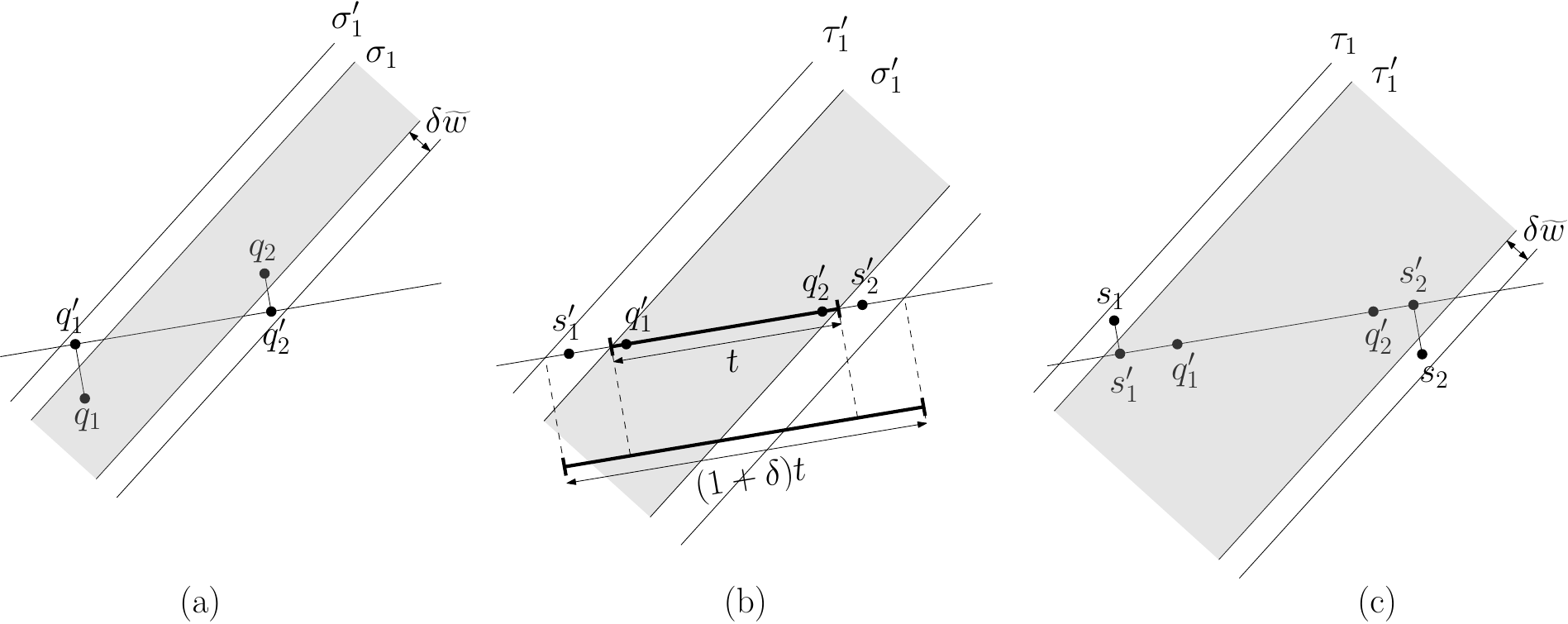}
\caption{A rough illustration of the construction of $\tau_1$ from $\sigma_1$.}
\label{fig:certificate_2d}
\end{figure}

% \begin{observation}
%     Once we obtain an $\eps$-certificate of $S$ of size $N$, we can compute a $(1+\eps)$-approximate solution for each of the four problems in $T(N)$ time, where $T$ denotes the time complexity of an exact algorithm for the problem.
% \end{observation}

We show that Theorem~\ref{thm:computing_eps_certificate} directly gives a $(1+\eps)$-approximation algorithm that runs in $O(n + \mathrm{poly}(1/\eps))$ time for each of the four problems.

\begin{theorem}
    \label{thm:poly_eps}
    For each of the four problems whose optimal solution can be found in $T(n)$ time, 
    we can compute a $(1+\eps)$-approximate solution in $O(n + T(1/\eps^2))$ time.
\end{theorem}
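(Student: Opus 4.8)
The plan is to reduce each problem to its exact version on a point set of size $O(1/\eps^2)$ by replacing $S$ with an $\eps$-certificate. First I would apply Theorem~\ref{thm:computing_eps_certificate} to compute, in $O(n)$ time, an $\eps$-certificate $Q\subseteq S$ with $|Q|=O(1/\eps^2)$. I would then run the given exact algorithm on $Q$ in $T(|Q|)=T(O(1/\eps^2))$ time, obtaining a pair of slabs $(\tau_1,\tau_2)$ that is optimal for $Q$ under the relevant constraint, and set $W=\max\{w(\tau_1),w(\tau_2)\}$. Since $Q\subseteq S$, every feasible solution for $S$ --- in particular an optimal one --- also covers $Q$ and respects the same orientation constraint, so $W\le w^*$.

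The one point that needs care is that the definition of an $\eps$-certificate only guarantees coverage after expansion for pairs of slabs of \emph{equal} width, whereas $\tau_1$ and $\tau_2$ may have different widths. To bridge this, I would widen the narrower of the two slabs, keeping its center line fixed, until both have the common width $W$, obtaining a pair $(\tau_1',\tau_2')$ whose union still covers $Q$. Because widening a slab about its center line preserves its orientation, this equalization respects every variant's constraint: a slab of prescribed orientation $\theta$ (resp.\ $\theta_1$, $\theta_2$) keeps that orientation, and two parallel slabs stay parallel.

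Now $(\tau_1',\tau_2')$ is a pair of equal-width slabs covering $Q$, so by the defining property of an $\eps$-certificate (Section~\ref{sec:eps_certificate}) the union of their $\eps$-expansions covers $S$. Each $\eps$-expansion again keeps the center line fixed and has width $(1+\eps)W$, so the expanded pair satisfies the same orientation constraint and has maximum width $(1+\eps)W\le(1+\eps)w^*$; this pair is the output. For the running time, computing $Q$ costs $O(n)$, the exact call costs $T(O(1/\eps^2))$, and the equalization and expansion touch only $O(1)$ slabs, so the total is $O(n+T(1/\eps^2))$. The degenerate regime $n\le 1/\eps^2$ is handled directly: there Theorem~\ref{thm:computing_eps_certificate} returns $Q=S$, running the exact algorithm yields an exact (hence $(1+\eps)$-approximate) solution, and $T(n)=O(n+T(1/\eps^2))$.

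I expect the only genuine obstacle to be the equal-width mismatch just described; once it is resolved by width-equalization, the argument is uniform across all four variants, since the certificate property is stated for arbitrary same-width slab pairs and does not reference any problem-specific constraint.
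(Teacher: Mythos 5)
Your proposal is correct and follows essentially the same route as the paper: compute the $\eps$-certificate $Q$ via Theorem~\ref{thm:computing_eps_certificate}, run the exact algorithm on $Q$, equalize the two widths by expanding the narrower slab (the paper handles this identically, with the parenthetical ``otherwise, we expand the smaller one''), and then apply the $\eps$-expansion and the certificate property to conclude. Your extra care about the equal-width requirement and the degenerate case $n\le 1/\eps^2$ only makes explicit what the paper treats briefly.
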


\begin{proof}
    
We first compute an $\eps$-certificate $Q$ of $S$. 
By Theorem~\ref{thm:computing_eps_certificate}, this can be done in $O(n)$ time and $|Q| = O(1/\eps^2)$.

Consider any of the four problems.
Let $w^*$ denote the maximum width of an optimal pair for $S$.  
Let $T(n)$ denote the running time of an exact algorithm for finding an optimal pair.
Such a polynomial-time exact algorithm exists for each of the four problems (see Table~\ref{table:result}).  
In $T(|Q|) = T(1/\eps^2)$ time, we find an optimal pair $(\sigma_1, \sigma_2)$ of slabs for $Q$.  
Without loss of generality, we assume that $\sigma_1$ and $\sigma_2$ have the same width (otherwise, we expand the smaller one).  
Since $Q \subseteq S$, it follows that $w(\sigma_1) = w(\sigma_2) \le w^*$.

Let $(\sigma_1', \sigma_2')$ be the pair of slabs obtained by taking the $\eps$-expansions of $\sigma_1$ and $\sigma_2$, respectively.  
By the definition of an $\eps$-expansion, we have $w(\sigma_1') = w(\sigma_2') \le (1+\eps)w^*$, and the pair $(\sigma_1', \sigma_2')$ satisfies the orientation restriction of the problem.  
Furthermore, by the definition of an $\eps$-certificate, $\sigma_1' \cup \sigma_2'$ covers $S$.  
Therefore, $(\sigma_1', \sigma_2')$ forms a $(1+\eps)$-approximate solution.
\end{proof}

For the general two-line-center problem, 
$T(n) = O(n^2 \log^2 n)$~\cite{jaromczyk1995}.
By Theorem~\ref{thm:poly_eps}, we obtain a $(1+\eps)$-approximation algorithm running in 
$O(n + (1/\eps^4) \log^2 (1/\eps))$~time. 
In Section~\ref{sec:general}, however, we present another algorithm with an improved running time. 
For the other variants as well, we present $(1+\eps)$-approximation algorithms with better running times than those obtained by applying Theorem~\ref{thm:poly_eps} with a known exact algorithm as $T(n)$~(see Table~\ref{table:result}).

\section{One Fixed Orientation}
\label{sec:one_fixed}

% (Section Intro 1/2) Given a set $S$ of $n$ points in the plane and an orientation $\theta$, we study the problem of finding a pair of slabs such that one of them has orientation $\theta$ and their union covers $S$ while minimizing the maximum of their widths. 
% In this section, we present an $O(n\log n)$-time exact algorithm and an $O(n\log (1/\eps))$-time $(1+\eps)$-approximation algorithm for this problem.

% % (Section Intro 2/2)
% In Section~\ref{sec:one_fixed_properties}, we recall several definitions and properties introduced in~\cite{ahn2024constrained}, which will be used later in the algorithm.
% In Section~\ref{sec:one_fixed_decision}, we describe a decision algorithm that runs in $O(n)$ time assuming that the points are sorted.
% In Section~\ref{sec:one_fixed_optimization}, we present an $O(n\log n)$-time exact algorithm based on the decision algorithm, improving upon the previously best known $O(n\log^3 n)$-time algorithm.
% Finally, in Section~\ref{sec:one_fixed_approximation}, we present an $O(n\log (1/\eps))$-time $(1+\eps)$-approximation algorithm based on the concept of an $\eps$-certificate introduced in Section~\ref{sec:eps_certificate}.

By rotating the coordinate system, we assume without loss of generality that the given orientation $\theta$ is $0$. 
Throughout this section, we denote by $(\sigma_1^*, \sigma_2^*)$ an optimal pair of slabs for the one-fixed-orientation problem, where $\sigma_1^*$ is horizontal. Let $w^* = \max \{w(\sigma_1^*), w(\sigma_2^*) \}$.
% Let $(\sigma_1^*, \sigma_2^* )$ be an optimal pair of slabs where $\sigma_1^*$ is horizontal.
% Let $w^* = \max \{w(\sigma_1^*), w(\sigma_2^*) \}$.

\subsection{Orientation-constrained Width and Dominance}
\label{sec:one_fixed_properties}
In this section, we recall the concept of an orientation-constrained width and present geometric properties as introduced in~\cite{ahn2024constrained}.
Let $\theta_1\le \theta_2$ be two orientations. 
For a set $P$ of points in the plane, we define the $[\theta_1, \theta_2]$-constrained width of $P$ as $\width_{[\theta_1, \theta_2]}(P) \coloneq \min_{\theta \in [\theta_1, \theta_2]} \width_\theta (P).$ 
We also define $\sigma_{[\theta_1, \theta_2]}(P) \coloneq \bigcap_{\theta \in [\theta_1, \theta_2]} \sigma_\theta (P)$ (see Figure~\ref{fig:domination}(a)).

Consider two non-empty sets $P$ and $Q$ of points in the plane that can be separated by a horizontal line, thus $\conv(P) \cap \conv(Q) = \emptyset$. 
Then, there are exactly two outer common tangents. 
Let $\theta_{\min} \le  \theta_{\max} $ be the orientations of these lines. 
We say that $P$ \emph{dominates} $Q$ if $\sigma_{[\theta_{\min},\theta_{\max}]}(Q) \subseteq \sigma_{[\theta_{\min},\theta_{\max}]}(P)$.
See Figure~\ref{fig:domination}.
 By construction, note that either $P$ or $Q$ dominates the other.

\begin{figure}[ht]
    \centering
    \includegraphics[scale=0.65]{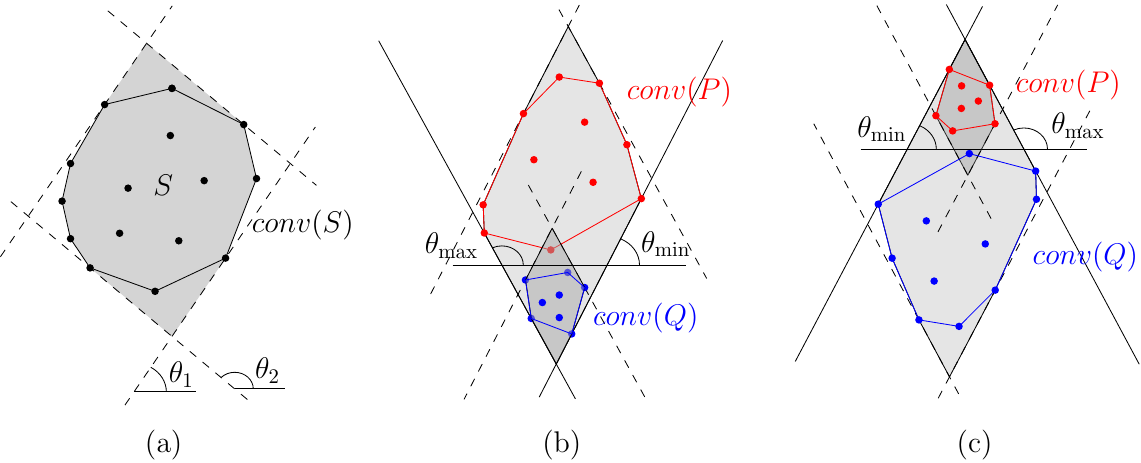}
    \caption{(a) For a point set $S$ and orientations $\theta_1$ and $\theta_2$, the gray region represents $\sigma_{[\theta_1, \theta_2]}(S)$.
    (b, c) For two point sets $P$ (red) and $Q$ (blue), the two lines represent their outer common tangents. \\
    (b) $P$ dominates $Q$.
    (c) $Q$ dominates $P$.
    }
    \label{fig:domination}
\end{figure}

\begin{lemma}[\cite{ahn2024constrained}]
    \label{lem:dom_properties}
    With the above notation, suppose that $P$ dominates $Q$. 
    Then
    % Then the following holds.
    \begin{enumerate}[(i)]
        \item $\width_{[\theta_{\min}, \theta_{\max}]}(P \cup Q) = \width_{[\theta_{\min}, \theta_{\max}]}(P)$, and 
        \item If $\width(P \cup Q) < d(\conv(P), \conv(Q)) $, then $\width(P\cup Q) = \width_{[\theta_{\min}, \theta_{\max}]}(P)$.
    \end{enumerate}
\end{lemma}

\subsection{Decision Algorithm}
\label{sec:one_fixed_decision}
In this section, we present a decision algorithm for a given parameter $\omega > 0$.
The algorithm determines whether there exists a pair of slabs of width $\omega$ such that one of them is horizontal and their union covers $S$.
If such a pair exists, we report YES. Otherwise, we report NO.

The points $s_1, s_2, \ldots, s_n$ of $S$ are assumed to be sorted in decreasing order of their $y$-coordinates. 
For simplicity, we assume that no two input points have the same $y$-coordinate. 
Our approach follows that of the previously known algorithm from~\cite{ahn2024constrained} but removes the dependency on the data structures they used.
As a result, we obtain an $O(n)$-time decision algorithm (after sorting), improving upon the previous $O(n \log^2 n)$ bound.

Consider a horizontal slab $\sigma_\omega$ of width $\omega$, and imagine sweeping $\sigma_\omega$ downward over the plane. 
For a fixed position of $\sigma_\omega$, let $S_a \subseteq S$ be the set of points strictly above $\sigma_\omega$, and let $S_b \subseteq S$ be the set of points strictly below $\sigma_\omega$. 
Referring to Figure~\ref{fig:decision_notations}, observe that $S_b$ dominates $S_a$ from the beginning until a certain moment, after which $S_a$ dominates $S_b$ until the end, assuming that both $S_a$ and $S_b$ are non-empty. 
Observe that the answer to the decision problem is YES if and only if there exists a placement of $\sigma_\omega$ such that $\width(S \setminus \sigma_\omega) \le \omega$~(equivalently, $\width(S_a \cup S_b) \le \omega$).

\textbf{Assumption} For the case where the answer is YES, we assume that there exists a placement of $\sigma_\omega$ such that $S_a$ dominates $S_b$. 
The other case can be handled symmetrically by sweeping $\sigma_\omega$ upward from the bottom.

\textbf{Notation}
If both $S_a$ and $S_b$ are non-empty, there exist two outer common tangents of them. 
Observe that these tangent lines do not intersect each other inside $\sigma_\omega$.
Let $\ell_1$ denote the right tangent and $\ell_2$ denote the left tangent.
We denote their orientations by $\theta_1$ and $\theta_2$, respectively. 
See Figure~\ref{fig:decision_notations}(a).
Let $a_1 \in S_a$ and $b_1 \in S_b$ be the points of tangency defining $\ell_1$, and similarly, let $a_2 \in S_a$ and $b_2 \in S_b$ be the points of tangency defining $\ell_2$.\footnote{There may be cases where more than one point of $S_a$ (resp., $S_b$) lies on the tangent line. 
Our algorithm works even if one of them is chosen arbitrarily. 
Thus, for simplicity, we assume throughout this section that there is exactly one point of $S_a$ (resp., $S_b$) on the tangent line.} 
If either $S_a$ or $S_b$ is empty, then $\ell_1$ and $\ell_2$ (and consequently $\{a_i, b_i, \theta_i \mid i = 1, 2\}$) are undefined.

\begin{figure}[ht]
    \centering
    \includegraphics[width=0.98\textwidth]{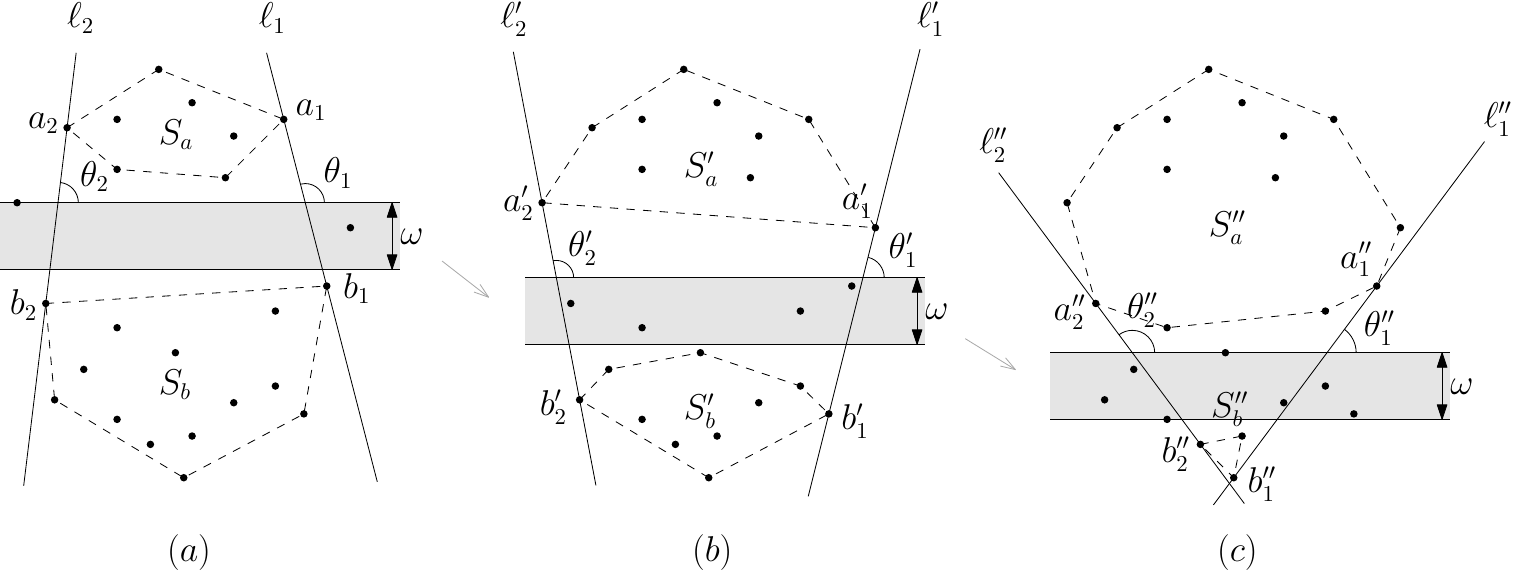}
    \caption{Three snapshots of the sweeping process for a point set.
    (a) $S_a$, $S_b$, and $\{\ell_i, a_i, b_i, \theta_i \mid i = 1, 2\}$ are shown according to the notation defined above. Observe that $S_b$ dominates $S_a$.
    (b, c) We use the notations $S_a', S_b'$ and $S_a'', S_b''$ (and the associated primed and double-primed variables) analogously for the two horizontal slabs positioned below.
    Observe that $S_a'$ (resp., $S_a''$) dominates $S_b'$ (resp., $S_b''$). 
    (a--c)~Observe that $\theta_1\ge \theta_1' \ge \theta_1''$ and $\theta_2 \le \theta_2'\le \theta_2''$. 
    }
    \label{fig:decision_notations}
\end{figure}

\begin{observation}
    \label{obs:domination}
    With the above notation, $S_a$ dominates $S_b$ if and only if $\theta_1 \le \theta_2$. 
\end{observation}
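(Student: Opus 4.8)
The plan is to recast domination as a comparison of how far $S_a$ and $S_b$ reach in each direction, and then to reduce the entire equivalence to a single sign check in the upward direction. Writing $C_a=conv(S_a)$ and $C_b=conv(S_b)$ (both nonempty, since $\ell_1,\ell_2$ are defined), and letting $h_C(u)=\max_{x\in C}\langle x,u\rangle$ denote the support function, the slab $\sigma_\theta(P)$ is exactly $\{x:\langle x,\pm n_\theta\rangle\le h_C(\pm n_\theta)\}$, where $n_\theta=(-\sin\theta,\cos\theta)$ is the unit normal of an orientation-$\theta$ line and $C=conv(P)$. Intersecting over $\theta\in[\theta_{\min},\theta_{\max}]$, the region $\sigma_{[\theta_{\min},\theta_{\max}]}(P)$ is cut out by the constraints $\langle x,u\rangle\le h_C(u)$ for every $u$ in the direction set $A\cup(-A)$, where $A=\{n_\theta:\theta\in[\theta_{\min},\theta_{\max}]\}$. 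Hence, setting $\Phi(u)=h_{C_a}(u)-h_{C_b}(u)$, if $\Phi(u)\ge 0$ for all $u\in A\cup(-A)$ then every half-plane constraint defining $\sigma_{[\theta_{\min},\theta_{\max}]}(C_b)$ is looser for $C_a$, so $\sigma_{[\theta_{\min},\theta_{\max}]}(C_b)\subseteq\sigma_{[\theta_{\min},\theta_{\max}]}(C_a)$, i.e.\ $S_a$ dominates $S_b$. It therefore suffices to control the sign of $\Phi$ on $A\cup(-A)$.

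First I would pin down the zeros of $\Phi$: since $\Phi(u)=0$ says that $C_a$ and $C_b$ share a supporting line with outward normal $u$ and both lie on the same side, the zeros are precisely the outward normals of the two outer common tangents, namely $u_1=-n_{\theta_1}$ (for the right tangent $\ell_1$, whose outward normal points right) and $u_2=n_{\theta_2}$ (for the left tangent $\ell_2$, pointing left); by disjoint convexity there are no other common-tangent directions, so these two directions split the circle into two arcs on each of which $\Phi$ has constant sign. Next, using $n_\theta=(\cos(\theta+\pi/2),\sin(\theta+\pi/2))$, I would carry out the elementary bookkeeping showing that, when $\theta_1\le\theta_2$, the zeros $u_2=n_{\theta_{\max}}$ and $u_1=-n_{\theta_{\min}}$ occur among the endpoints of the sub-arcs $A$ and $-A$, and that all of $A\cup(-A)$ lies in the single closed arc bounded by $u_1,u_2$ that contains the upward direction $(0,1)$. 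Finally I would evaluate the probe: because $S_a$ lies strictly above $S_b$, the topmost point of $S_a$ is strictly higher than that of $S_b$, so $\Phi((0,1))=\max_{C_a}y-\max_{C_b}y>0$; since the only zeros of $\Phi$ are the two endpoints, this forces $\Phi>0$ on the interior of that arc and hence $\Phi\ge 0$ throughout $A\cup(-A)$, yielding $S_a$ dominates $S_b$. When $\theta_1<\theta_2$ the inequality is strict on the interior, so $S_a$ dominates $S_b$ while $S_b$ does not.

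For the converse I would apply an up–down reflection $\rho$ across a horizontal line: it maps the configuration to one in which $\rho(S_b)$ lies above $\rho(S_a)$, preserves the left/right labelling of the two tangents, and sends orientations $\theta\mapsto\pi-\theta$, thereby turning the forward implication into ``$S_b$ dominates $S_a$ iff $\theta_2\le\theta_1$''; combining the two gives the stated equivalence, while the strict version rules out both sets dominating unless $\theta_1=\theta_2$ (the degenerate case where $u_1=-n_{\theta_1}$ and $u_2=n_{\theta_2}$ are antipodal, the two extreme slabs coincide for both sets, both sets trivially dominate each other, and both $\theta_1\le\theta_2$ and $\theta_2\le\theta_1$ hold). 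I expect the main obstacle to be not the geometry but the orientation bookkeeping: correctly deciding which tangent normal is $n_{\theta_i}$ rather than $-n_{\theta_i}$, tracking the $+\pi/2$ offset between an orientation and its normal, and verifying that the probe $(0,1)$ and the arc $A\cup(-A)$ land on the same side of the two zeros exactly when $\theta_1\le\theta_2$; the two remaining ingredients—that $\Phi$ vanishes only at the outer-tangent normals and that $\Phi((0,1))>0$—follow immediately from disjoint convexity and the vertical separation of $S_a$ and $S_b$.
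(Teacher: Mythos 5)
Your proof is correct, but it takes a genuinely different route from the paper's. The paper argues per orientation: when $\theta_1 \le \theta_2$ it asserts (with reference to its figure) that $\sigma_\theta(S_b) \subseteq \sigma_\theta(S_a)$ for every $\theta \in [\theta_1,\theta_2]$ and intersects over $\theta$; when $\theta_1 > \theta_2$ it asserts the strict reverse containment $\sigma_\theta(S_a) \subsetneq \sigma_\theta(S_b)$ on $[\theta_2,\theta_1]$, and strictness rules out $S_a$ dominating $S_b$. You instead recast each slab as two support-function constraints, so that domination becomes the single condition $\Phi = h_{conv(S_a)} - h_{conv(S_b)} \ge 0$ on the normal arc $A \cup (-A)$, and you determine the sign of $\Phi$ globally from three facts: its zeros are exactly the two outer-common-tangent normals (disjoint convexity), the probe value $\Phi((0,1)) > 0$ (vertical separation), and continuity together with the intermediate value theorem; the converse then comes essentially for free from reflection across a horizontal line rather than from a second case analysis (and non-domination in the strict case is immediate in your framework, since a point of $conv(S_a)$ extremal in a direction $u$ with $\Phi(u)>0$ lies in $\sigma_{[\theta_{\min},\theta_{\max}]}(S_a)$ but violates the $u$-constraint of $\sigma_{[\theta_{\min},\theta_{\max}]}(S_b)$). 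Both proofs rest on the same geometric fact---between the two outer tangent orientations, every slab of the lower set is contained in the corresponding slab of the upper set---but the paper leaves that containment to the picture, while you derive it, making explicit that the sign of the support-function difference can change only at common-tangent directions. The bookkeeping you flagged as the main risk does check out: no outer common tangent can be horizontal here, so the right tangent's outward normal is $-n_{\theta_1}$ (at angle $\theta_1 - \pi/2$) and the left tangent's is $n_{\theta_2}$ (at angle $\theta_2 + \pi/2$), and $A \cup (-A)$ spans $[\theta_1 - \pi/2, \theta_2 - \pi/2] \cup [\theta_1 + \pi/2, \theta_2 + \pi/2]$, which lies in the closed arc between the two zeros containing $(0,1)$ precisely when $\theta_1 \le \theta_2$. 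What each approach buys: the paper's is short and matches the sweeping intuition of the algorithm; yours is self-contained, rigorous, and handles degenerate configurations (e.g., $conv(S_a)$ or $conv(S_b)$ a point or segment, or $S_a \cup S_b$ collinear) without extra cases.
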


\begin{proof}
    First, consider the case where $\theta_1 \le \theta_2$. 
    For any orientation $\theta \in [\theta_1, \theta_2]$, it holds that $\sigma_{\theta}(S_b) \subseteq \sigma_{\theta}(S_a)$ (see Figure~\ref{fig:decision_notations}(b), with a slight abuse of notation).
    Therefore, we have $\sigma_{[\theta_1, \theta_2]}(S_b) \subseteq \sigma_{[\theta_1, \theta_2]}(S_a)$.

    Now consider the other case where $\theta_1 > \theta_2$.
    For any orientation $\theta \in [\theta_2, \theta_1]$, it holds that $\sigma_{\theta}(S_a) \subsetneq \sigma_{\theta}(S_b)$ (see Figure~\ref{fig:decision_notations}(a)). 
    Therefore, we have $\sigma_{[\theta_2, \theta_1]}(S_a) \subsetneq \sigma_{[\theta_2, \theta_1]}(S_b)$.
\end{proof}

\subsubsection{Overview of the Algorithm}
\label{sec:decision_overview}

In this section, we give an overview of the data structures maintained throughout the algorithm and explain how the main loop proceeds. 
We begin by recalling a data structure for maintaining the convex hull of a point set under a special sequence of insertions and deletions.

\begin{lemma}[\cite{wang2023dynamic}]
\label{lem:dyn_conv}
    We can dynamically maintain the convex hull of a set $P$ of points in the plane to support each \emph{window-sliding update} (i.e., either insert a point below all points of $P$ or delete the topmost point of $P$) in $O(1)$ amortized time. 
    The convex hull can be explicitly reported in $O(h)$ time, where $h$ is the number of vertices of the convex hull. 
\end{lemma}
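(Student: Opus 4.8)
The plan is to reduce the window-sliding updates to operations on a first-in-first-out (FIFO) queue and then maintain the convex hull of that queue with the classical ``two stacks simulate a queue'' technique, made to run in $O(1)$ amortized time by exploiting that the points enter in $y$-sorted order. The structural fact that makes everything work is the following: since every inserted point lies strictly below all current points and every deletion removes the current topmost point, the set of present points is always a contiguous window in the $y$-order of the insertion sequence, the topmost present point is exactly the earliest-inserted present point, and deletions therefore proceed in FIFO order (this holds even when insertions and deletions are interleaved). Hence it suffices to support \emph{enqueue} (insert a new lowest point) and \emph{dequeue} (remove the oldest, i.e.\ topmost, point).

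I would keep a back stack $B$ holding the most recently enqueued (lowest) points and a front stack $F$ holding the points nearest to being dequeued (the oldest, topmost ones). Every current point lies in exactly one stack, and crucially all points of $F$ lie strictly above all points of $B$, so $conv(F)$ and $conv(B)$ are separated by a horizontal line. Enqueue pushes onto $B$; dequeue pops from $F$, first transferring all of $B$ onto $F$ (reversing the order) whenever $F$ is empty.

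The key step is maintaining each stack's hull incrementally. Within $B$ the points are pushed in strictly decreasing $y$-order, and within $F$ (after a transfer) they are pushed in strictly increasing $y$-order; in both cases the insertions are monotone along the $y$-axis. I would therefore store each stack's hull as two $y$-monotone chains, a left chain and a right chain running from the topmost to the bottommost vertex, and update them by the standard incremental (Graham-scan) rule: append the new extreme point to each chain and pop vertices until convexity is restored. Because the points arrive sorted along the $y$-axis, each point is appended to a chain once and removed at most once, so each push costs $O(1)$ amortized. For the global bound, every point is enqueued once, transferred from $B$ to $F$ at most once, and dequeued once; a transfer costs time linear in $|B|$ but charges $O(1)$ to each transferred point, and the Graham-scan work it triggers in $F$ is likewise $O(1)$ amortized per point by the charging above. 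Thus each window-sliding update runs in $O(1)$ amortized time. To report the hull explicitly, I would merge the two vertically separated hulls $conv(F)$ and $conv(B)$ by computing their two outer common tangents (the left and right bridges) via a linear walk along the chains, and then concatenate the relevant portions of the four chains, all in $O(h)$ time.

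The main obstacle I anticipate is the amortized analysis across transfers: one must set up a clean potential or charging argument showing that rebuilding $F$'s chains from scratch at each transfer, together with all Graham-scan pops, still sums to $O(1)$ per operation, and simultaneously verify the geometric correctness of the monotone-chain updates (that $y$-sorted insertion keeps the left and right chains valid) and of the bridge-based merge (which relies on the horizontal separation of $F$ and $B$). The remaining details---ties in $y$-coordinate, empty stacks, and the exact tangent-walk for the bridges---are routine.
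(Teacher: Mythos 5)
Note first that the paper does not prove this lemma at all: it is imported verbatim from the cited work~\cite{wang2023dynamic}, so your proposal can only be judged on its own merits rather than against an in-paper argument.

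Your reduction to a FIFO queue is correct (among the present points, insertion order coincides with decreasing $y$-order, so deleting the topmost point deletes the oldest), and the two-stack simulation with $conv(F)$ and $conv(B)$ separated by a horizontal line is a sensible skeleton. The genuine gap is that you never say how $conv(F)$ is updated when a dequeue pops a point off $F$. Your monotone-chain Graham scan handles pushes, but a pop removes the topmost point of $F$, which is always a hull vertex, and after its removal points of $F$ that were previously \emph{not} hull vertices can become hull vertices; the chains you maintain simply do not contain them. This is exactly the hard part of the problem (deletion from a hull), and your amortization paragraph only charges for pushes and transfers, so the claimed $O(1)$ amortized bound for deletions is unsupported. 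Rebuilding $conv(F)$ from scratch after each pop does not rescue it: $n$ insertions followed by $n$ deletions would then cost $\Theta(n^2)$ in total. The standard repair is to observe that pops from $F$ occur in exactly the reverse order of the pushes that built $F$, so the hull after a pop is precisely the hull version that existed just before the corresponding push; hence each push onto $F$ should store an undo record (the chain vertices discarded by that Graham-scan step, plus the appended vertex), and a pop rolls it back. The total rollback work is bounded by the total push work, giving $O(1)$ amortized. Your proposal needs this, or an equivalent persistence mechanism, to become a proof.

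A second, smaller problem is the $O(h)$ reporting claim. Finding the two bridges by a linear walk along the chains costs $O(h_F + h_B)$, where $h_F$ and $h_B$ are the sizes of the two sub-hulls, and this is not $O(h)$ in general: if the points of $F$ lie on a downward-bulging convex arc and $B$ consists of two points far below to the left and right, then $conv(F)$ has $|F|$ vertices but only $O(1)$ of them survive on $conv(F\cup B)$, so $h_F + h_B \gg h$. Genuinely output-sensitive reporting requires either maintaining the two bridges explicitly across updates or a tangent-finding procedure that never scans hidden vertices; this, together with the $O(1)$-time neighbor queries the paper relies on right after the lemma, is part of why the cited construction of Wang is considerably more involved than the two-stack skeleton you describe.
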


We note that in the above data structure, the neighbors of any given vertex on the convex hull can be found in $O(1)$ time. 
Although it is not explicitly mentioned in~\cite{wang2023dynamic}, it can be achieved since each point $p$ (implicitly) maintains two pointers to its left and right neighbors, respectively, if $p$ is a vertex of the current convex hull (see Sections~3.1 and~4 of \cite{wang2023dynamic}).

Observe that as $\sigma_\omega$ moves downward, a point is inserted into $S_a$ below all existing points of $S_a$, or the topmost point of $S_b$ is deleted from $S_b$. 
We refer to each such moment as \emph{an event}. 
Each event corresponds to a window-sliding update of either $S_a$ or $S_b$, and we have at most $2n$ events.  

\paragraph*{Data structures and Main loop}
\label{par:ds_mainloop}
As $\sigma_\omega$ moves downward, we maintain the following using the notation defined above:
\begin{itemize}
    \item the dynamic data structures for $\conv(S_a)$ and $\conv(S_b)$ (from Lemma~\ref{lem:dyn_conv}),
    \item the tangent lines $\ell_1$ and $\ell_2$, with their corresponding points $a_1$, $b_1$, $a_2$, and $b_2$,
    \item the two antipodal pairs of $S_a$ with respect to the orientations $\theta_1$ and $\theta_2$, respectively.\footnote{There may be more than one antipodal pair of $S_a$ with respect to $\theta_1$ (resp., $\theta_2$). 
    Our algorithm works even if one of them is chosen arbitrarily. 
    Thus, for simplicity, we assume throughout this section that there is exactly one antipodal pair with respect to $\theta_1$ (resp., $\theta_2$).}
\end{itemize}

Initially, we start with a position of $\sigma_\omega$ such that $S_a$ contains only $s_1$, the topmost point of $S$. 
If the corresponding set $S_b$ is empty, we stop the algorithm and report YES. Otherwise, we initialize the data structures for $\conv(S_a)$ and $\conv(S_b)$, respectively.
Then we find $\ell_1$ and $\ell_2$, together with $\{a_i, b_i, \theta_i \mid i = 1, 2\}$. 
Note that the two antipodal pairs of $S_a$ with respect to $\theta_1$ and $\theta_2$ are both $(s_1, s_1)$.

In the main loop of our decision algorithm, we do the following as $\sigma_\omega$ moves downward.
If $S_b$ dominates $S_a$, we just update the above data structures.  
Once $S_a$ begins to dominate $S_b$, we include, for every next event, an additional step that tests whether $\width_{[\theta_1, \theta_2]}(S_a) \le \omega$, using the maintained data structures. If this condition holds, we stop the algorithm and report YES; otherwise, we proceed the algorithm. 

If $S_b$ becomes empty, we set $\theta_1 = 0$ and $\theta_2 = \pi$. 
We then test whether $\width_{[\theta_1, \theta_2]}(S_a) = \width(S_a) \le \omega$. 
If this condition holds, we report YES; otherwise, we report NO.

The correctness of this algorithm, assuming that all data structures are maintained properly and the test $\width_{[\theta_1, \theta_2]}(S_a) \le \omega$ is evaluated correctly, is guaranteed by the following lemma.

\begin{lemma}
    \label{lem:decision_correctness}
    For any placement of $\sigma_\omega$ where $S_a$ dominates $S_b$, we have $\width(S_a \cup S_b) \le \omega$ if and only if $\width_{[\theta_1, \theta_2]}(S_a) \le \omega$. 
\end{lemma}

\begin{proof}
    First, observe that $\omega < d(\conv(S_a), \conv(S_b)) $ 
    as $S_a$ and $S_b$ are separated by a horizontal slab of width $\omega$.
    If $\width(S_a \cup S_b) \le \omega$, then $\width(S_a\cup S_b) =width_{[\theta_1, \theta_2]}(S_a)$ by the above observation and Lemma~\ref{lem:dom_properties}(ii).
    Therefore, we have $width_{[\theta_1, \theta_2]}(S_a) \le \omega$. 

    Conversely, if $width_{[\theta_1, \theta_2]}(S_a) \le \omega$, then $width_{[\theta_1, \theta_2]}(S_a \cup S_b) \le \omega$ by Lemma~\ref{lem:dom_properties}(i).
    By definition, $\width(S_a \cup S_b) \le width_{[\theta_1, \theta_2]}(S_a \cup S_b)$.
    Therefore, we have $\width(S_a \cup S_b) \le \omega$.
\end{proof}

\subsubsection{Maintaining Data Structures and Testing the Width Condition}
\label{sec:ds_maintain}

In this section, we describe how to maintain the data structures and test the condition $\width_{[\theta_1, \theta_2]}(S_a) \le \omega$ for each event that occurs as $\sigma_\omega$ moves downward.

We begin by proving Lemma~\ref{lem:width_test_obs}, which motivates the part of the algorithm that tests whether $\width_{[\theta_1, \theta_2]}(S_a) \le \omega$.
Let $\sigma_\omega$ be a horizontal slab of width $\omega$ where $S_a$ dominates $S_b$.
Recall that $\theta_1$ and $\theta_2$ are the orientations of the right and left outer tangent lines, respectively.
% Let $\theta_1$ and $\theta_2$ be the orientations of the tangent lines as defined above. 
By Observation~\ref{obs:domination}, we have $\theta_1 \le \theta_2$.
Now, let $\sigma_\omega'$ be another horizontal slab of width $\omega$, positioned below $\sigma_\omega$ (the two slabs may partially overlap).
We use the notations $S_a'$, $\theta_1'$, and $\theta_2'$ analogously for $\sigma_{\omega}'$.
Then, it always holds that $[\theta_1, \theta_2] \subseteq[\theta_1', \theta_2']$.

\begin{lemma}
    \label{lem:width_test_obs}
    With the above notation, assume that $width_{[\theta_1, \theta_2]}(S_a) > \omega$. 
Then, \\$width_{[\theta_1', \theta_2']} (S_a') \le \omega$ if and only if 
$width_{[\theta_1' , \theta_1]}(S_a') \le \omega$ or
$width_{[\theta_2, \theta_2']}(S_a') \le \omega$.
\end{lemma}

\begin{proof}
    Since $[\theta_1, \theta_2] \subseteq [\theta_1', \theta_2']$,
    we can partition the range $[\theta_1', \theta_2']$ into three subranges: $[\theta_1', \theta_1]$, $[\theta_1, \theta_2]$, and $[\theta_2, \theta_2']$.
    Let $a = width_{[\theta_1', \theta_1]}(S_a')$, 
    $b = width_{[\theta_1, \theta_2]}(S_a')$, and 
    $c = width_{[\theta_2, \theta_2']}(S_a')$.
    By definition, $width_{[\theta_1', \theta_2']}(S_a') = \min \{a, b, c \}$. 

    By the assumption and since $S_a' \supseteq S_a$, we have $b \ge width_{[\theta_1, \theta_2]}(S_a) > \omega$.
    If $width_{[\theta_1', \theta_2']}(S_a') \allowbreak \le \omega$, then $a$ or $c$ should be at most $\omega$.
    Conversely, if $a \le \omega$ or $c \le \omega$, it follows that $width_{[\theta_1', \theta_2']}(S_a') \le \omega$.
\end{proof}

% \begin{figure}[h]
%     \centering
%     \includegraphics[width=0.9 \textwidth]{figs/width_test_obs.pdf}
%     \caption{. }
%     \label{fig:width_test_obs}
% \end{figure}

We first note that the data structures (from Lemma~\ref{lem:dyn_conv}) for $\conv(S_a)$ and $\conv(S_b)$ can be maintained in $O(n)$ total time, since each event corresponds to a window-sliding update.
%of either $S_a$ or $S_b$ .

We now describe how to handle each event depending on its type: either an insertion into $S_a$ or a deletion from $S_b$.

We first observe that if the point being inserted or deleted lies strictly between $\ell_1$ and $\ell_2$, then nothing changes except the convex hulls (see Figure~\ref{fig:simple_events}).
Therefore, we assume that the point corresponding to an event does not lie strictly between $\ell_1$ and $\ell_2$.

\begin{figure}[ht]
    \centering
    \includegraphics[width=0.9 \textwidth]{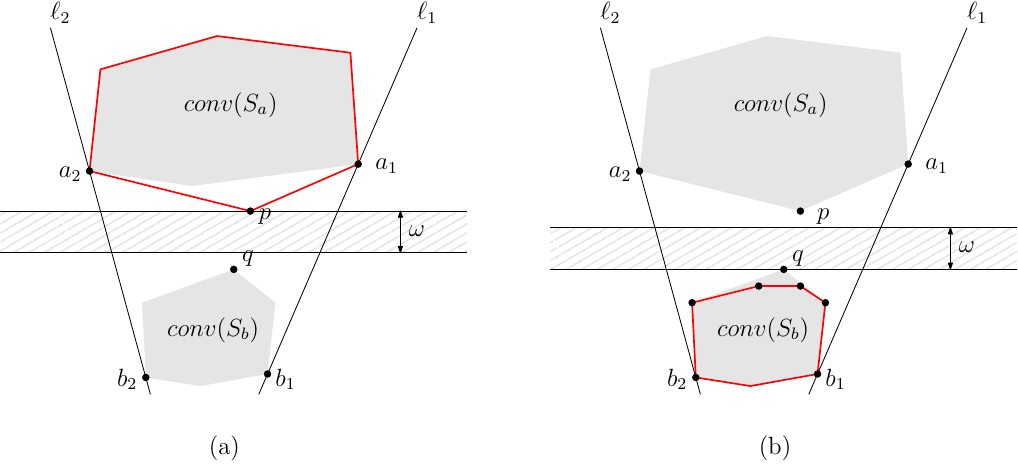}
    \caption{(a) $p$ is inserted into $S_a$. (b) $q$ is deleted from $S_b$.
    %In both (a) and (b), the points inserted or deleted lies strictly between $\ell_1$ and $\ell_2$. 
    The red polygons show the convex hulls immediately after the events. 
    Observe that the tangent lines $\ell_1$ and $\ell_2$ remain unchanged.    
    }
    \label{fig:simple_events}
\end{figure}

\paragraph*{Insertion into $S_a$}
Let $p$ be the point inserted into $S_a$, and assume that $p$ lies to the right of $\ell_1$ (see Figure~\ref{fig:insertions}). The case where $p$ lies to the left of $\ell_2$ can be handled symmetrically.

\textbf{Tangent searching }
By our assumption, $\ell_2$ remains unchanged, while $\ell_1$ changes. 
Let $\ell_1'$ denote the new tangent line, and let $\theta_1'$ denote its orientation. 
Let $S_a' = S_a \cup \{p\}$. 
Analogously, let $a_1' \in S_a'$ and $b_1' \in S_b$ be the points lying on $\ell_1'$. Note that $a_1' = p$. 
We find $b_1'$ by traversing $\conv(S_b)$ clockwise from $b_1$; we refer to this as the \emph{insertion-type tangent searching procedure}. 
This procedure takes $O(1+k)$ time, where $k$ is the number of vertices on $\conv(S_b)$ strictly between $b_1$ and $b_1'$. 
We say that these vertices are \emph{involved} in the procedure.

\textbf{Antipodal pair searching }
Now we find the antipodal pairs of $S_a'$ with respect to $\theta_1'$ and $\theta_2$ (since $\ell_2$ does not change). 
Let $q$ be the point such that $(a_1, q)$ is the antipodal pair of $S_a$ with respect to $\theta_1$. 

For the orientation $\theta_1'$, observe that $a_1' (= p)$ is one point of the antipodal pair of $S_a'$. 
We can find the other point by traversing $\conv(S_a')$ clockwise from $q$. 
Let $q'$ denote the point such that $(a_1', q')$ is the antipodal pair of $S_a'$ with respect to $\theta_1'$. 
We refer to this as the \emph{insertion-type antipodal-pair searching procedure}, which takes $O(1 + k')$ time, where $k'$ is the number of vertices on $\conv(S_a')$ (equivalently, on $\conv(S_a)$) strictly between $q$ and $q'$.
We say that these vertices are \emph{involved} in the procedure.

For the orientation $\theta_2$, observe that $a_2$ is always one point of the antipodal pairs for both $S_a$ and $S_a'$. 
Let $r$ be the point such that $(a_2, r)$ forms an antipodal pair of $S_a$ with respect to $\theta_2$ (see Figure~\ref{fig:insertions}(b)). 
Let $r'$ be the point such that $(a_2, r')$ forms an antipodal pair of $S_a'$ with respect to $\theta_2$. 
Then $r' = r$ if $p$ lies to the left of the line of orientation $\theta_2$ passing through $r$, and $r' = p$ otherwise. 

\textbf{Testing the width condition }
If $S_a'$ does not dominate $S_b$, there is nothing left to do.
This can be easily detected by Observation~\ref{obs:domination}.
Otherwise, we need to test whether $\width_{[\theta_1', \theta_2]}(S_a') \le \omega$.
By Lemma~\ref{lem:width_test_obs}, it suffices to determine whether $\width_{[\theta_1', \theta_1]}(S_a') \le \omega$.
We do this by following the antipodal pairs of $S_a'$ from the one corresponding to $\theta_1'$ to that of $\theta_1$.
Note that $p$ is always one point of the antipodal pairs of $S_a'$ for all orientations between $\theta_1'$ and $\theta_1$, since $p$ lies to the right of $\ell_1$.
Hence, $(p, q)$ is the antipodal pair of $S_a'$ with respect to $\theta_1$, and $(p, q')$ is the one with respect to $\theta_1'$ as explained above.
By traversing $\conv(S_a')$ counterclockwise starting from $q'$, we can visit all antipodal pairs corresponding to the range of orientations $[\theta_1', \theta_1]$ in order.
Following them sequentially allows us to find an orientation $\theta \in [\theta_1', \theta_1]$ that minimizes $\width_\theta(S_a')$.
This procedure takes $O(1 + k')$ time, where, as before, $k'$ is the number of vertices on $\conv(S_a')$ (equivalently, on $\conv(S_a)$) that lie strictly between $q$ and $q'$.
Note that this is bounded by the time required for the insertion-type antipodal-pair searching procedure.

\medbreak
The total time for handling point insertions as $\sigma_\omega$ moves downward depends on the two procedures: the insertion-type tangent searching procedure and the insertion-type antipodal-pair searching procedure. 
We conclude that it takes $O(n)$ time in total by showing that each input point can be involved at most once in each of the procedures.

\begin{lemma}
    \label{lem:insertion_time}
    Each point of $S$ can be involved in the insertion-type tangent searching procedure and the insertion-type antipodal-pair searching procedure at most once, respectively, in the entire algorithm.
\end{lemma}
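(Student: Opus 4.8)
The plan is to reduce the lemma to two monotonicity statements, one per procedure, and then observe that a point becomes \emph{involved} exactly when the relevant tangent point (resp.\ antipodal point) sweeps strictly across it, so that single-direction motion immediately yields ``at most once.'' Concretely, I would track the right tangent point $b_1$ on $conv(S_b)$ for the tangent-searching procedure, and the antipodal partner $q$ on $conv(S_a)$ for the antipodal-pair-searching procedure, and show that each moves in one rotational direction along its hull throughout the entire downward sweep of $\sigma_\omega$.

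For the tangent-searching procedure, I would first characterize $b_1$ as the vertex of $conv(S_b)$ extreme in the outward normal direction of $\ell_1$; as $\theta_1$ varies this direction stays in the lower-right, so $b_1$ always lies on the lower-right arc of $conv(S_b)$ (between its rightmost and bottommost vertices), never near the topmost vertex. The crux is then to show that $\theta_1$ is non-increasing over the whole algorithm: it strictly decreases only during an insertion to the right of $\ell_1$ (exactly the events that invoke this procedure), and is left unchanged by the two other event types. A deletion removes the \emph{topmost} point of $S_b$, which sits on the upper hull; since $b_1$ and its neighbours on the lower-right arc are untouched by erosion of the upper hull, both $\ell_1$ and $\theta_1$ are unchanged. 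An insertion to the left of $\ell_2$ only alters $\ell_2$. Hence the normal direction rotates monotonically clockwise, $b_1$ moves monotonically clockwise along $conv(S_b)$, and—since $conv(S_b)$ only ever loses vertices from the top—each vertex is passed, i.e.\ lies strictly between an old and a new position of $b_1$, at most once.

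For the antipodal-pair-searching procedure, I would characterize $q$ as the vertex of $conv(S_a)$ extreme in the direction \emph{opposite} to the outward normal of $\ell_1$, so $q$ lives on the left chain of $conv(S_a)$ and, as $\theta_1$ decreases, moves clockwise (upward) along it. Here $conv(S_a)$ only \emph{grows}, and every inserted point is placed below all current points of $S_a$, hence on the lower hull—behind $q$ in its clockwise direction of travel; in particular the freshly inserted point becomes the new $a_1'$, never the swept partner $q'$. Deletions do not affect $S_a$ at all. Combining the monotone decrease of $\theta_1$ with the fact that new vertices never appear ahead of $q$, I conclude that $q$ advances monotonically clockwise and that each vertex of $conv(S_a)$ is involved at most once.

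Finally I would note that insertions to the left of $\ell_2$ trigger the mirror-image branches, in which the \emph{left} tangent point $b_2$ sweeps the lower-left arc of $conv(S_b)$ (counterclockwise, as $\theta_2$ increases) and the corresponding antipodal partner sweeps the right chain of $conv(S_a)$; these arcs are disjoint from those traversed by $b_1$ and $q$ (they meet only at the extreme bottommost/topmost vertex), so no point is charged twice across the two symmetric branches. The main obstacle I anticipate is the deletion analysis for the tangent procedure: proving rigorously that removing the topmost point of $S_b$ never nudges $b_1$ backward requires showing that $b_1$ provably stays on the lower-right arc until its work is finished and that exposing previously interior points affects only the upper hull—together with a careful treatment of the degenerate case where $b_1$ (or $q$) reaches the shared bottommost (or topmost) vertex, which I would dispatch as an $O(1)$ boundary case.
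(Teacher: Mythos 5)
Your overall strategy is the same as the paper's: show that the swept object (the tangent point $b_1$ on $conv(S_b)$, resp.\ the antipodal partner $q$ on $conv(S_a)$) advances monotonically in a single rotational direction throughout the entire sweep, so that the ranges swept by successive invocations of each insertion-type procedure are disjoint and each vertex is passed at most once. The paper formalizes exactly this with a $y$-coordinate sandwich: if a point $s$ were involved twice, then $y(b_1) > y(s) > y(b_1')$ at the first invocation and $y(u_1) > y(s) > y(u_1')$ at the later one, while monotone progress gives $y(b_1) > y(b_1') \ge y(u_1) > y(u_1')$, a contradiction (and symmetrically, with increasing $y$-coordinates, for the antipodal-pair procedure).

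However, your justification of monotonicity across deletion events contains a genuine error. You claim that $b_1$ ``always lies on the lower-right arc of $conv(S_b)$, never near the topmost vertex,'' so that deleting the topmost point of $S_b$ leaves $\ell_1$ and $\theta_1$ untouched. This geometric picture is backwards: $\ell_1$ is an \emph{outer} common tangent of $conv(S_a)$ and $conv(S_b)$ with $S_a$ lying above the slab, so $b_1$ lies on the \emph{upper}-right chain of $conv(S_b)$ and can perfectly well be its topmost vertex. Indeed, the case where the deleted topmost point $p$ of $S_b$ satisfies $p = b_1$ is precisely the nontrivial deletion case of the algorithm in Section~\ref{sec:ds_maintain} --- it is the very reason a separate deletion-type tangent searching procedure exists --- and in that case $\ell_1$ does change and $\theta_1$ strictly decreases. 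So the obstacle you flag at the end cannot be dispatched by proving $b_1$ stays away from the top; that statement is false. The correct repair is the opposite observation: deletions also move the tangent point \emph{forward} (clockwise, to strictly smaller $y$, since every remaining point of $S_b$ lies below the deleted topmost point), so the tangent point's $y$-coordinate is non-increasing over the whole algorithm regardless of event type, which is what the inequality $y(b_1') \ge y(u_1)$ in the paper's proof encodes. The same gap recurs in your antipodal-pair argument: your assertion that ``deletions do not affect $S_a$ at all'' conflates the set $S_a$ with the partner $q$, which is defined relative to $\theta_1$; when a deletion changes $\theta_1$, the partner $q$ moves (via the deletion-type antipodal-pair searching), and one must argue that this motion is also forward (upward in $y$) rather than assume it away. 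Your closing point about the two symmetric $\ell_1$/$\ell_2$ branches is a minor accounting matter and not where the difficulty lies.
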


\begin{proof}
    We first show that each input point is involved in the insertion-type tangent searching procedure at most once.
    For a point $p$ inserted into $S_a$, recall that $b_1 \in S_b$ is the original point of tangency on $\ell_1$, and $b_1' \in S_b$ is the new point of tangency on $\ell_1'$ (see Figure~\ref{fig:insertions}(a)).
    We use the following observation:
    when traversing the points on $\conv(S_b)$ in clockwise order from $b_1$ to $b_1'$, their $y$-coordinates decrease.
    
    Let $s$ be a point on $\conv(S_b)$ that lies strictly between $b_1$ and $b_1'$, that is, $y(b_1) > y(s) > y(b_1')$.
    Assume to the contrary that $s$ is involved again in the tangent searching procedure when another point $t$ is inserted after $p$.
    Let $u_1$ and $u_1'$ denote the original and new points of tangency defined analogously for this insertion.
    By the above observation, we have $y(u_1) > y(s) > y(u_1')$.
    Moreover, since the new point of tangency is found by traversing the convex hull in clockwise order, 
    it holds that $y(b_1) > y(b_1') \ge y(u_1) > y(u_1')$.
    This contradicts that $y(b_1) > y(s) > y(b_1')$.

    In a similar way, we can show that each input point is involved in the insertion-type antipodal-pair searching procedure at most once.
    Recall that $q$ and $q'$ are the points such that $(a_1, q)$ is the antipodal pair of $S_a$ with respect to $\theta_1$ and $(a_1',q')$ is the one of $S_a'$ with respect to $\theta_1'$ (see Figure~\ref{fig:insertions}(a)). 
    We use a similar observation: when traversing the points on $\conv(S_a)$ in clockwise order from $q$ to $q'$, their $y$-coordinates increase.
    
    Let $s$ be a point on $\conv(S_a)$ that lies strictly between $q$ and $q'$, that is, $y(q) < y(s) < y(q')$.
    Assume to the contrary that $s$ is involved again in the antipodal-pair searching procedure when another point $t$ is inserted after $p$.
    Analogously, let $v$ and $v'$ denote the points such that $v$ is a point of the original antipodal pair and $v'$ is a point of the new antipodal pair for this insertion.
    By the above observation, we have $y(v) < y(s) < y(v')$.
    Moreover, since a point of the new antipodal pair is found by traversing the convex hull in clockwise order, 
    it holds that $y(q) < y(q') \le y(v) < y(v')$.
    This contradicts $y(q) < y(s) < y(q')$.
\end{proof}

\begin{figure}[ht]
    \centering
    \includegraphics[width=0.9 \textwidth]{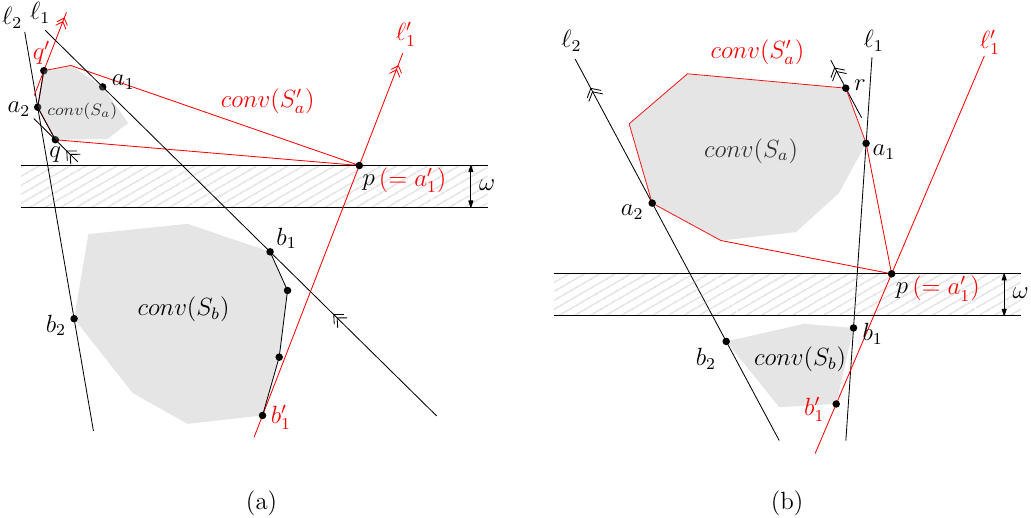}
    \caption{Two cases of insertions. 
    In each figure, the changes after the insertion are shown in red: $\conv(S_a')$, $a_1'$, $b_1'$, and $\ell_1'$. 
    In (a), observe that $p$ becomes a point of the antipodal pair of $S_a'$ with respect to $\theta_2$. 
    In contrast, in (b), the antipodal pair of $S_a'$ with respect to $\theta_2$ remains $(a_2, r)$ even after inserting $p$.
    In (a), two points are involved in the tangent searching procedure, and one point is involved in the antipodal-pair searching procedure.}
    \label{fig:insertions}
\end{figure}

In a symmetric way, we handle the deletion of a point from $S_b$. 

\paragraph*{Deletion from $S_b$}

Let $p$ be the point deleted from $S_b$. 
As we only consider the case where $p$ does not lie strictly between $\ell_1$ and $\ell_2$, $p$ should lie on $\ell_1$ or $\ell_2$. 
We assume that $p$ lies on $\ell_1$, that is $p = b_1$ (see Figure~\ref{fig:deletions}).
The case where $p$ lies on $\ell_2$ can be handled symmetrically.

\textbf{Tangent searching}
By our assumption, $\ell_2$ remains unchanged, while $\ell_1$ changes. 
Let $\ell_1'$ denote the new tangent line, and let $\theta_1'$ denote its orientation. 
Define $S_b' = S_b \setminus \{p\}$. 
Analogously, let $a_1' \in S_a$ and $b_1' \in S_b'$ be the points lying on $\ell_1'$.

We find $a_1'$ and $b_1'$ as follows.  
Let $t$ be the counterclockwise neighbor of $p$ on $\conv(S_b)$.  
Starting from $t$, we traverse $\conv(S_b')$ clockwise until we find a point $v_1$ on $\conv(S_b')$ such that $\overline{a_1 v_1}$ is tangent to $\conv(S_b')$ at $v_1$.
If $\overline{a_1 v_1}$ is also tangent to $\conv(S_a)$ at $a_1$, we are done.  
Otherwise, we set $u = a_1$ and $v = v_1$.  
We then move $u$ to its clockwise neighbor on $\conv(S_a)$, and update $v$ so that $\overline{uv}$ becomes tangent to $\conv(S_b')$ at $v$. 
The point $v$ can be updated by traversing $\conv(S_b')$ clockwise from its previous position.
We repeat this process until $\overline{uv}$ becomes tangent to both $\conv(S_a)$ at $u$ and $\conv(S_b')$ at $v$. 
The final positions of $u$ and $v$ are $a_1'$ and $b_1'$.

% In Claim~\ref{claim:deletion}, we prove that this terminates, and the final positions of $u$ and $v$ are $a_1'$ and $b_1'$, respectively.

We call the above the \emph{deletion-type tangent searching procedure}, which takes $O(1+k_1+k_2)$ time, where $k_1$ is the number of points on $\conv(S_a)$ strictly between $a_1$ and $a_1'$, and $k_2$ is the number of points on $\conv(S_b')$ strictly between $t$ and $b_1'$. 
We say that these $k_1 + k_2$ points are \emph{involved} in the procedure.

% \begin{claim}
%     \label{claim:deletion}
%     The deletion-type tangent searching procedure terminates, and the final positions of $u$ and $v$ are $a_1'$ and $b_1'$, respectively.
% \end{claim}
% \begin{claimproof}
    
% \end{claimproof}

\textbf{Antipodal pair searching}
Now we find the antipodal pairs of $S_a$ with respect to $\theta_1'$. Observe that the antipodal pairs of $S_a$ with respect to $\theta_2$  remain unchanged as $\ell_2$ does not change.
Let $q$ be the point such that $(a_1, q)$ is the antipodal pair of $S_a$ with respect to $\theta_1$ (see Figure~\ref{fig:deletions}).

For orientation $\theta_1'$, observe that $a_1'$ is one point of the antipodal pair of $S_a$. 
We find the other point by traversing $\conv(S_a)$ clockwise from $q$. 
Let $q'$ denote the new point such that $(a_1', q')$ is the antipodal pair of $S_a$ with respect to $\theta_1'$. 
We refer to this as the \emph{deletion-type antipodal-pair searching procedure}, which takes $O(1 + k')$ time, where $k'$ is the number of vertices on $\conv(S_a)$ strictly between $q$ and $q'$.
We say that these vertices are \emph{involved} in the procedure.

\textbf{Testing the width condition }
If $S_a$ does not dominate $S_b'$, there is nothing left to do. 
This can be easily detected by Observation~\ref{obs:domination}.
Otherwise, we need to test whether $\width_{[\theta_1', \theta_2]}(S_a) \le \omega$.
By Lemma~\ref{lem:width_test_obs}, it suffices to determine whether $\width_{[\theta_1', \theta_1]}(S_a) \le \omega$.
We do this by following the antipodal pairs of $S_a$ from the one corresponding to $\theta_1'$ to that of $\theta_1$.
Recall that $(a_1', q')$ and $(a_1, q)$ are the antipodal pairs of $S_a$ with respect to $\theta_1'$ and $\theta_1$, respectively. 
By traversing $\conv(S_a)$ clockwise starting from $a_1$ and $q$ simultaneously, we can visit all antipodal pairs corresponding to the range of orientations $[\theta_1', \theta_1]$.
Following them sequentially allows us to find an orientation $\theta \in [\theta_1', \theta_1]$ that minimizes $\width_\theta(S_a)$.
This takes $O(1 + k_1 + k')$ time, where, as before, $k_1$ is the number of points on $\conv(S_a)$ strictly between $a_1$ and $a_1'$, 
and $k'$ is the number of vertices on $\conv(S_a)$ strictly between $q$ and $q'$.
Therefore, this is bounded by the total time required by the deletion-type tangent searching procedure and the deletion-type antipodal-pair searching procedure.

\medbreak
The total time for handling point deletions as $\sigma_\omega$ moves downward depends on the two procedures: the deletion-type tangent searching procedure and the deletion-type antipodal-pair searching procedure. 
We conclude that it takes $O(n)$ time in total by showing that each input point can be involved at most twice in each of the procedures.

\begin{figure}[ht]
    \centering
    \includegraphics[width=0.5 \textwidth]{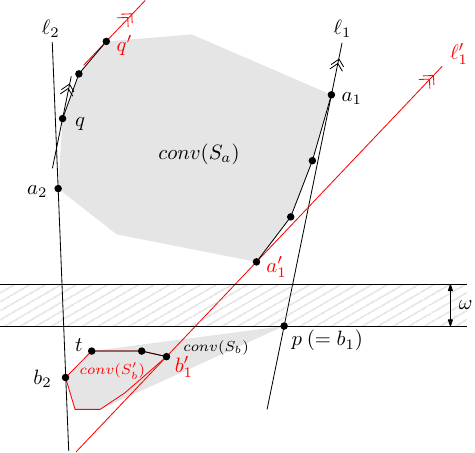}
    \caption{Changes after the deletion of $p$ are shown in red: $\conv(S_b')$, $a_1'$, $b_1'$, and $\ell_1'$. Observe that $(a_1', q')$ becomes the antipodal pair of $S_a$ with respect to $\theta_1'$ after the deletion.
    Observe that three points ($k_1 = 2, k_2 = 1$) are involved in the deletion-type tangent searching procedure and one point is involved in the deletion-type antipodal-pair searching procedure.}
    \label{fig:deletions}
\end{figure}

\begin{lemma}
    \label{lem:deletion_time}
    Each point of $S$ can be involved in the deletion-type tangent searching procedure and the deletion-type antipodal-pair searching procedure at most twice, respectively, in the entire algorithm.
\end{lemma}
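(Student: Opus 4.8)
The plan is to follow the same template as the proof of Lemma~\ref{lem:insertion_time}: establish that along each convex-hull arc traversed by a searching procedure the $y$-coordinates vary monotonically, and then chain these monotonicities across successive deletion events to conclude that, once a vertex has been passed, it can never be passed again in the same role. The essential new feature, compared with the insertion case, is that a deletion from $S_b$ moves the tangent line $\ell_1$ on both sides at once: the tangent point on $conv(S_a)$ moves from $a_1$ to $a_1'$ and the tangent point on $conv(S_b')$ moves from $t$ to $b_1'$. Hence the deletion-type tangent searching procedure consists of two interleaved walks, one clockwise along $conv(S_a)$ and one clockwise along $conv(S_b')$, and a vertex may be charged by either walk. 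This is precisely what upgrades the bound from \emph{once} (as in Lemma~\ref{lem:insertion_time}) to \emph{at most twice}.

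I would first dispose of the $conv(S_b')$ side, which is clean. Since $\sigma_\omega$ only moves downward, $S_b$ only loses its topmost point and never gains a point, so the tangent point on the $S_b$ side progresses monotonically downward over the whole sweep, exactly as in the $S_b$ part of Lemma~\ref{lem:insertion_time}. Using that $y$-coordinates decrease as one walks clockwise from $t$ to $b_1'$, together with this monotone downward progression of $b_1$ across deletion events, the same contradiction as in Lemma~\ref{lem:insertion_time} shows that each point is a $k_2$-witness on $conv(S_b')$ at most once. For the tangent procedure it then remains to show that each point is a $k_1$-witness on $conv(S_a)$ at most once; summing the two hull sides gives the claimed bound of two.

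The delicate part, and the main obstacle, is the $conv(S_a)$ side, which appears both in the tangent walk ($u$ running from $a_1$ to $a_1'$) and in the deletion-type antipodal-pair searching procedure ($q$ running to $q'$). Here $S_a$ \emph{grows}, because insertion events are interleaved with deletion events, so the relevant hull is not fixed between two deletions and a naive global-monotonicity argument fails. The facts I would anchor the argument on are that $\theta_1$ decreases monotonically over the entire sweep (recorded in Figure~\ref{fig:decision_notations}) and that a new point enters $S_a$ only below all current points; consequently the right tangent point $a_1$, being the extreme vertex of $conv(S_a)$ in the direction normal to $\ell_1$, and its antipodal partner $q$ both progress monotonically as the orientation decreases, and any ``backward'' motion they can suffer is confined to the newly created lower portion of $conv(S_a)$ introduced by an insertion. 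I would combine the local $y$-monotonicity along the traversed arc (decreasing for the tangent walk, increasing from $q$ to $q'$ for the partner walk, as in Lemma~\ref{lem:insertion_time}) with this confinement to show that a fixed vertex can be passed in at most one deletion event on the $S_a$-side of the tangent walk, and in at most two deletion events by the partner walk. The crux is to argue that the interleaved insertions can introduce at most one such extra pass; pinning down exactly why this yields two rather than more — rather than the clean one obtained in the insertion case — is where the real work lies, and I expect it to rest on the monotonicity of $\theta_1$ (via Observation~\ref{obs:domination}) together with the bottom-insertion property of the window-sliding updates.
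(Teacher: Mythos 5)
Your decomposition of the tangent procedure's bound of two --- at most once as a $k_1$-witness on $conv(S_a)$ and at most once as a $k_2$-witness on $conv(S_b')$ --- is the same as the paper's, but both halves of your execution have problems, and you have the difficulty located on the wrong side. For the $conv(S_b')$ side, your argument rests on a false premise: you claim that $y$-coordinates decrease along the clockwise walk from $t$ to $b_1'$, ``exactly as in the $S_b$ part of Lemma~\ref{lem:insertion_time}.'' In the insertion case the walk starts at the \emph{old tangent point} $b_1$ and descends the hull, so it is $y$-monotone; in the deletion case the walk starts at $t$, the \emph{counterclockwise neighbor} of the deleted topmost point $p$, and its first portion crosses the region of the hull that $p$ vacated, i.e.\ it passes through newly exposed vertices that were interior to $conv(S_b)$ and can lie well above $y(t)$ (and even below $y(b_1')$, when $t$ is low on the opposite side of the hull). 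So the walk is not $y$-monotone, the involved points need not have $y$-values inside the interval $\bigl(y(b_1'), y(p)\bigr)$, and the disjoint-interval contradiction of Lemma~\ref{lem:insertion_time} simply does not apply. The paper does not attempt a metric argument here at all: it observes that the points traversed on this side become vertices of $conv(S_b')$ at the moment $p$ is deleted and, since $S_b$ only shrinks, remain hull vertices until they are themselves deleted --- a persistence-of-hull-membership argument, after which such a point can never reappear in this role.

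Conversely, the $conv(S_a)$ side, which you single out as ``the delicate part'' where ``a naive global-monotonicity argument fails'' and whose crux you explicitly defer, is precisely where the naive argument works and is what the paper uses. Interleaved insertions are harmless for the very reason you cite elsewhere: a new point enters $S_a$ below all current points, so \emph{every} movement of the tangent point $a_1$ (and of the antipodal partner $q$) across the whole sweep --- whether triggered by an insertion or a deletion --- is clockwise with monotone $y$. Hence the within-event walk from $a_1$ to $a_1'$ (respectively from $q$ to $q'$) is $y$-monotone, the starting tangent point of any later event satisfies $y(u_1) \le y(a_1')$, the open $y$-intervals of distinct events are disjoint, and Lemma~\ref{lem:insertion_time}'s contradiction gives \emph{at most once} for the $k_1$-walk and at most once (not the ``at most two'' you aim for) for the antipodal walk. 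Your proposal never supplies the confinement argument it promises for this side, and the extra factor of two it tries to extract from interleaved insertions is both unnecessary and unproven; the factor of two in the lemma comes solely from the two hulls in the tangent procedure, which you identified correctly but did not correctly establish.
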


\begin{proof}
    For a point $p$ deleted from $S_b$, recall that $k_1$ denotes the number of points on $\conv(S_a)$ strictly between $a_1$ and $a_1'$, and $k_2$ is the number of points on $\conv(S_b')$ strictly between $t$ and $b_1'$, where $t$ is the counterclockwise neighbor of $p$ on $\conv(S_b)$.

    We first show that each input point is involved in the procedure as one of the $k_1$ points at most once. 
    We use the following observation:
    when traversing the points on $\conv(S_a)$ in clockwise order from $a_1$ to $a_1'$, their $y$-coordinates decrease.
    Let $s$ be a point on $\conv(S_a)$ that lies strictly between $a_1$ and $a_1'$, that is, $y(a_1) > y(s) > y(a_1')$.
    Assume to the contrary that $s$ is involved again in the same procedure when another point $t$ is inserted after $p$.
    Let $u_1$ and $u_1'$ denote the original and new points of tangency of $S_a$ (of that moment) defined analogously for this insertion.
    By the above observation, we have $y(u_1) > y(s) > y(u_1')$.
    Moreover, since the new point of tangency is found by traversing the convex hull in clockwise order, 
    it holds that $y(a_1) > y(a_1') \ge y(u_1) > y(u_1')$.
    This contradicts that $y(a_1) > y(s) > y(a_1')$.

    We now show that each input point is involved in the procedure as one of the $k_2$ points at most once. This follows directly from the following observation:
    A point involved in this procedure (as one of the $k_2$ points) becomes a vertex of the convex hull at the moment when $p$ is deleted and remains on the convex hull until it is deleted during the sweeping process.
    Once a point is involved in this procedure, it cannot be involved again, since after being deleted from the convex hull, it cannot appear on it again.

    It now remains to show that each input point is involved in the deletion-type antipodal-pair searching procedure (i.e., as one of the $k'$ points) at most once.
    Recall that $(a_1, q)$ and $(a_1', q')$ are the antipodal pairs of $S_a$ with respect to $\theta_1$ and $\theta_1'$, respectively.
    We use a similar observation with the case for $k_1$: when traversing the points on $\conv(S_a)$ in clockwise order from $q$ to $q'$, their $y$-coordinates increase.
    Let $s$ be a point on $\conv(S_a)$ that lies strictly between $q$ and $q'$, that is, $y(q) < y(s) < y(q')$.
    Assume to the contrary that $s$ is involved again in the antipodal-pair searching procedure when another point $t$ is inserted after $p$.
    Analogously, let $v$ and $v'$ denote the points such that $v$ is a point of the original antipodal pair and $v'$ is a point of the new antipodal pair for this insertion.
    By the above observation, we have $y(v) < y(s) < y(v')$.
    Moreover, since a point of the new antipodal pair is found by traversing the convex hull in clockwise order, 
    it holds that $y(q) < y(q') \le y(v) < y(v')$.
    This contradicts $y(q) < y(s) < y(q')$.    
\end{proof}

Observe that the initialization of all data structures can be done in $O(n)$ time by Lemma~\ref{lem:dyn_conv}.
By Lemma~\ref{lem:insertion_time} and~\ref{lem:deletion_time}, the total time to process all events and to evaluate the condition $\width_{[\theta_1', \theta_1]}(S_a) \le \omega$ as $\sigma_\omega$ moves downward is $O(n)$. Now we conclude the following theorem.

\begin{theorem}
    \label{thm:one_fixed_decision}
    Given a set $S$ of $n$ points in the plane, sorted by their $y$-coordinates, and a real value $\omega > 0$, we can decide in $O(n)$ time whether there exists a pair of slabs of width $\omega$ whose union covers $S$ and one of which is horizontal.
\end{theorem}

\subsection{Optimization Algorithm}
\label{sec:one_fixed_optimization}

We give a brief summary of our approach to obtain Theorem~\ref{thm:one_fixed}. A complete proof is given in Appendix~\ref{sec:one_fixed_appendix}. The optimality of our algorithm is given in Appendix~\ref{sec:lowerbound}.
We first sort the points of $S$ so that $s_1, \ldots, s_n$ are ordered in decreasing order of their $y$-coordinates.
Let $W$ denote the set of all pairwise differences between the $y$-coordinates of points in $S$.
Since $W$ can be represented as a sorted matrix~\cite{frederickson1982complexity}, we can identify two consecutive values $w_0, w_1 \in W$ such that $w_0 < w^* \le w_1$ in $O(n \log n)$ time, by combining our decision algorithm described in Section~\ref{sec:one_fixed_decision} with an efficient selection algorithm for sorted matrices~\cite{frederickson1984generalized}.

Observe that the sequence of changes in the sets $S_a$ and $S_b$ that occur as $\sigma_w$ moves downward is the same for any $w_0 <w<w_1$.
Using this observation, we show that $w^*$ can be computed by tracking ${width}(S_a \cup S_b)$ as $\sigma_w$ moves downward, for any $w_0 < w < w_1$. 
We track ${width}(S_a \cup S_b)$ by following essentially the same procedure as in the decision algorithm. 
The authors of~\cite{ahn2024constrained} also use this idea, but their algorithm requires an additional data structure, which results in an extra logarithmic factor.

\begin{restatable}{theorem}{onefixed}
    \label{thm:one_fixed}
    Given a set $S$ of $n$ points in the plane and an orientation $\theta$, we can find a pair of slabs such that one of them has orientation $\theta$ and their union covers $S$ while minimizing the maximum of their widths in $O(n\log n)$ time.
\end{restatable}

\subsection{Approximation Algorithm}
\label{sec:one_fixed_approximation}

Recall that in Section~\ref{sec:eps_certificate}, we showed that an $\eps$-certificate $Q \subseteq S$ of size $O(1/\eps^2)$ can be computed in $O(n)$ time. 
After obtaining $Q$, we find a pair of slabs $(\sigma_1, \sigma_2)$ such that their union covers $Q$, $\sigma_1$ has orientation $\theta$, and the maximum of their widths is minimized.
Without loss of generality, we assume that $\sigma_1$ and $\sigma_2$ have the same width (otherwise, we expand the smaller one).
Since $Q \subseteq S$, it follows that $w(\sigma_1) = w(\sigma_2) \le w^*$. 
Such a pair can be found in $O(|Q|\log |Q|)$ time by Theorem~\ref{thm:one_fixed}. 

Let $(\sigma_1', \sigma_2')$ be the pair of slabs obtained by taking the $\eps$-expansions of $\sigma_1$ and $\sigma_2$, respectively. 
By the definition of an $\eps$-expansion, $\sigma_1'$ has orientation $\theta$ and $w(\sigma_1')= w(\sigma_2') \le (1+\eps)w^*$. 
Furthermore, by the definition of an $\eps$-certificate, $\sigma_1' \cup \sigma_2'$ covers $S$. 
Therefore, $(\sigma_1', \sigma_2')$ forms a $(1+\eps)$-approximate solution. 

It takes $O(n + |Q|\log |Q|)$ time to obtain $(\sigma_1', \sigma_2')$. 
Let $\mu = \min \{n, 1/\eps^2\}$, then we have $|Q| = O(\mu)$. 
Therefore, the total running time is $O(n + \mu \log \mu)$, which can be expressed as both $O(n \log (1/\eps))$ and $O(n + (1/\eps^2) \log (1/\eps))$. 
We now conclude the following theorem.

\begin{restatable}{theorem}{onefixedapx}
    \label{thm:one_fixed_apx}
    Given a set $S$ of $n$ points in the plane and an orientation $\theta$, we can find a pair of slabs such that their union covers $S$, one of them has orientation $\theta$, and the maximum of their widths is at most $(1+\eps)w^*$ in $O(n+ \mu\log \mu)$ time, where $\mu = \min \{n, 1/\eps^2\}$.  
\end{restatable}

\section{General Two-Line-Center}
\label{sec:general}

% % (Section Intro)
% Given a set $S$ of $n$ points in the plane, we study the problem of finding a pair of slabs whose union covers $S$ while minimizing the maximum of their widths. 
% In this section, we present an $O((n/\eps)\log (1/\eps))$-time $(1+\eps)$-approximation algorithm.
Throughout this section, we denote by $\Sigma^* = (\sigma_1^*, \sigma_2^*)$ an optimal pair of slabs for the general two-line-center problem. Let $w^* = \max \{w(\sigma_1^*), w(\sigma_2^*) \}$.

% Let $\Sigma^* = (\sigma_1^*, \sigma_2^* )$ be an optimal pair of slabs for the general two-line-center problem.
% Let $w^* = \max \{w(\sigma_1^*), w(\sigma_2^*) \}$.

We begin by proving Lemma~\ref{lem:angles}, which will be used to compute a candidate set of orientations for one of the slabs in a $(1+\eps)$-approximate solution.

\begin{lemma}
    \label{lem:angles}
    Let $p$ and $q$ be two points in $\mathbb R^2$, let $\widetilde w > 0$ be  a real number, let $c\ge 1$ be a constant, and let $\eps> 0$ be a real number.
    In $O(1/\eps)$ time, we can compute a set $\Gamma$ of $O(1/\eps)$ orientations, such that the following holds:\\
    For every finite set $P$ of points in $\mathbb R^2$ and for every slab $\sigma$, if 
    
    \mytxt{(1)} $P\subset \sigma$, 
    \mytxt{(2)} $p, q \in \sigma$, 
    \mytxt{(3)} $d(p,q) \ge \diam(P)/4$, and
    \mytxt{(4)} $w(\sigma) \le \widetilde w \le c\cdot w(\sigma)$,
    %%%%% Enumerating
    % \begin{enumerate}[(1)]
    %     \item $P\subset \sigma$, 
    %     \item $p, q \in \sigma$, 
    %     \item $d(p,q) \ge diam(P)/4$, and 
    %     \item $w(\sigma) \le \widetilde w \le c\cdot w(\sigma)$,
    % \end{enumerate}
    %%%%%
    \\then there exists an orientation $\gamma\in \Gamma$ and a slab $\sigma'$, such that
    
    \mytxt{(1)} $P \subset \sigma'$, 
    \mytxt{(2)} $w(\sigma') \le (1+ \eps) \cdot w(\sigma)$, and
    \mytxt{(3)} $\sigma'$ has the orientation $\gamma$.
    % %%%%% Enumerating
    % \begin{enumerate}[(1)]
    %     \item $P \subset \sigma'$,
    %     \item $w(\sigma') \le (1+ \eps) \cdot w(\sigma)$,
    %     \item  $\sigma'$ has the orientation $\gamma$.
    % \end{enumerate}
    %%%%%%%%%%%%
\end{lemma}

\begin{proof}
    We remark that this proof is based on the approach in~\cite{agarwalTwo}.
    Without loss of generality, assume that $p$ and $q$ lie on a horizontal line. 
    Let $D = d(p, q)$. 
    Let $\theta$ be the orientation with $0< \theta \le \pi/2$ such that $\sin\theta = \min \{1, \widetilde w/{D} \}$.
    Let $\delta =  c_\delta \eps$ for some constant $0 < c_\delta < 1$. 
    Let $\Gamma = \{\gamma_i = (i- \lceil \frac{1}{\delta}\rceil)\cdot \delta\theta  \mid 0 \le i \le 2\lceil \frac{1}{\delta}\rceil \}$.
    Observe that $\Gamma$ contains uniformly placed orientations in the range $[-\theta, \theta]$, and $|\Gamma| = O(1/\eps)$. (Here we slightly abuse the notation by redefining orientations to lie in $[-\pi/2, \pi/2]$ instead of $[0, \pi]$.)
    In the following, we prove that $\Gamma$ satisfies the conditions in the statement.

    Let $P$ be a set of points in $\mathbb{R}^2$ and let $\sigma$ be a slab, where $P$ and $\sigma$ satisfy the four conditions in the statement.
    Let $\alpha$ be the orientation of $\sigma$, and let $w = w(\sigma)$. 
    For simplicity, we assume that $0\le \alpha < \pi/2$. The other case can be shown in a symmetric way. 
    Since $p, q \in \sigma$, we have $\sin \alpha \le w/D \le \widetilde w/D$ (see the right triangle having $\overline{pq}$ as one of its sides in Figure~\ref{fig:angle_lemma}).
    Observe that if $\widetilde w/ D < 1$, then $\sin \theta = \widetilde w/ D$. If $\widetilde w/ D \ge 1$, then $\sin \theta =1$. In both cases, it holds that $\sin \alpha \le \sin \theta$. Therefore, we have $\alpha \le \theta$.

     Let $\gamma$ be the largest orientation in $\Gamma$ such that $\gamma \le \alpha$. 
     Since $ \alpha \le  \theta$, we have $\alpha - \gamma < \delta \theta$. 
     Let $R$ be the minimum bounding box of $P$ with orientation $\alpha$. See Figure~\ref{fig:angle_lemma}.
     Denote by $o$ and $t$ the lowest and highest vertices of $R$, respectively. 
    Now consider a slab $\sigma'$ of orientation $\gamma$ whose two bounding lines pass through $o$ and $t$. 
    Clearly, $P \subset \sigma'$. 
    We next show that $w(\sigma') \le (1+\eps)w$, which completes the proof.

    Let $b$ be a vertex of $R$ other than $o$ and $t$. 
    Consider the line $\ell$ orthogonal to $\sigma'$ that passes through $b$. 
    The line $\ell$ intersects the two bounding lines of $\sigma'$. 
    Observe that one intersection point lies inside $\sigma$, while the other lies outside. 
    Let $a$ and $c$ denote these two intersection points, with $a$ being the one inside $\sigma$. 
    Then $w(\sigma') = \overline{ab} + \overline{bc} \le w + \overline{bc}$.

     We have $\overline{bc} = \overline{oc}\cdot \tan(\alpha - \gamma) 
    \le \overline {oc} \cdot \tan (\delta \theta)$.
    We consider the following two cases.

     \textbf{Case 1. $\widetilde w / D \le 1/2$. } That is, $\theta \le  \pi/6$.
     In this case, since $\delta < 1$ and $\theta \le \pi/6$, we have 
     $\tan(\delta \theta) 
    \le \delta \tan \theta$.
    Note that $\delta  \tan \theta  = \delta  \sin \theta/ \cos \theta $
    $\le (2/\sqrt 3)  \delta  \sin\theta \le (2/\sqrt 3)  \delta  \widetilde w /D \le (2/\sqrt 3) c \delta  w /D  $.
    Since $\overline{oc} \le {diam}(P) \le 4D$, it follows that $\overline{bc} \le (8/\sqrt 3) c \delta  w $.

    \textbf{Case 2. $\widetilde w / D > 1/2$. }
    We have
    $\tan(\delta \theta) 
    = \sin (\delta \theta) / \cos (\delta \theta) 
    \le 2 \cdot \sin (\delta \theta) \le 2 \delta \theta \le \pi \delta $  
    since $\delta \theta \le \pi/3$ assuming $\delta \le 2/3$.
    Also, $\overline{oc} \le {diam}(P) \le 4D \le 8\widetilde w \le 8 c w$.  
    Combining these bounds gives $\overline{bc} \le 8\pi \delta c w$. 
    
    By choosing $\delta = \min \{ 2/3, \sqrt 3 \eps /(8c) ,  \eps / (8\pi c)\}$, it holds that $\overline{bc} \le \eps w$ in both cases. 
\end{proof}

\begin{figure}[ht]
    \centering
    \includegraphics[width=0.45 \textwidth]{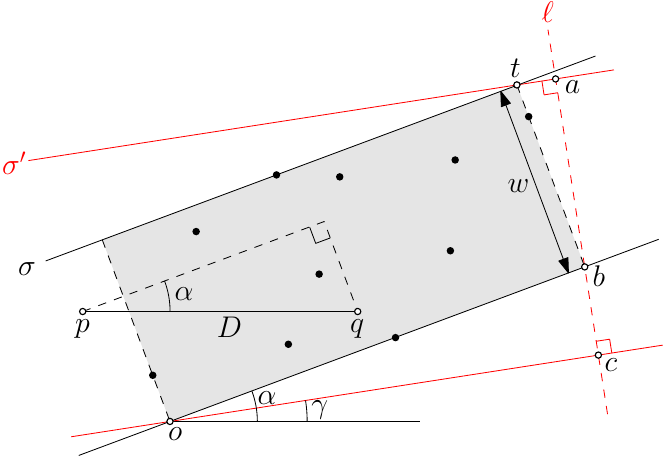}
    \caption{The black points represent the points in $P$ and the gray region represents the minimum bounding box of $P$ with orientation $\alpha$.}
    \label{fig:angle_lemma}
\end{figure}

By Lemma~\ref{lem:anchor_pair}, we can compute a set $F$ of at most $11$ pairs such that one of them is an anchor pair of $\Sigma^*$ in $O(n)$ time.  
For the moment, assume that we have an anchor pair $(p, q)$ of $\Sigma^*$. (We will repeat the following process for every pair in $F$.)
Without loss of generality, assume that $p, q \in \sigma_1^*$ and $d(p, q) \ge {diam}((\sigma_1^* \setminus \sigma_2^*) \cap S)/4$.

By Theorem~\ref{thm:10apx}, we can compute $\widetilde w$ such that $w^* \le \widetilde w \le 10 w^*$ in $O(n)$ time.

Consider a slab ${\sigma}_0^*$ that has the same orientation as $\sigma_1^*$,  has width $w^*$, and contains $\sigma_1^*$.  
In particular, if $w(\sigma_1^*) = w^*$, then ${\sigma}_0^*$ is $\sigma_1^*$.

Observe that $P = S \cap (\sigma_1^* \setminus \sigma_2^*)$ and $\sigma_0^*$ satisfy the four conditions in the statement of Lemma~\ref{lem:angles} for $(p, q)$ and the value $\widetilde w$ computed above, with $c = 10$.  
Therefore, in $O(1/\eps)$ time, we can compute a set $\Gamma$ of $O(1/\eps)$ orientations such that there exists a slab $\sigma'$ with orientation $\gamma \in \Gamma$ that covers $S \cap (\sigma_1^* \setminus \sigma_2^*)$ and satisfies $w(\sigma') \le (1+\eps) w^*$.  
This implies that there exists a pair of slabs whose union covers $S$, one of which has an orientation from $\Gamma$, and whose maximum width is at most $(1+\eps) w^*$.

For every $\gamma \in \Gamma$, we find a pair of slabs whose union covers $S$, one of which has orientation $\gamma$, and whose maximum width is minimized. 
By Theorem~\ref{thm:one_fixed}, this can be done in $O((n/\eps) \log n)$ time in total.  
We repeat this procedure for every pair in $F$, which adds only a constant factor to the running time.
Observe that each of the pairs of slabs we obtain has the property that its union covers $S$.
Among these pairs, we output the one with the smallest maximum width.  
Thus, we conclude the following lemma.

\begin{lemma}  
    \label{lem:general_temp}
    We can find a pair of slabs such that their union covers $S$ and the maximum of their widths is at most $(1+\eps)w^*$ in $O((n/\eps) \log n)$ time. 
\end{lemma}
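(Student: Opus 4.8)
The plan is to assemble the three ingredients developed above—an anchor pair, the constant-factor estimate $\widetilde w$, and the candidate orientation set of Lemma~\ref{lem:angles}—and then invoke the exact one-fixed-orientation solver of Theorem~\ref{thm:one_fixed} once per candidate orientation.

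First I would establish the existence of a good candidate orientation. Fix the anchor pair $(p,q)$ of $\Sigma^*$ with $p,q\in\sigma_1^*$ and recall $w^*\le\widetilde w\le 10w^*$ from Theorem~\ref{thm:10apx}. Setting $P=S\cap(\sigma_1^*\setminus\sigma_2^*)$ and taking $\sigma_0^*$ to be the width-$w^*$ slab parallel to $\sigma_1^*$ that contains it, I would verify the four hypotheses of Lemma~\ref{lem:angles} with $c=10$: conditions (1) and (2) hold since $P\subseteq\sigma_1^*\subseteq\sigma_0^*$ and $p,q\in\sigma_1^*\subseteq\sigma_0^*$; condition (3) is exactly the anchor-pair inequality $d(p,q)\ge diam(P)/4$; and condition (4) holds because $w(\sigma_0^*)=w^*\le\widetilde w\le 10w^*=10\,w(\sigma_0^*)$. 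Lemma~\ref{lem:angles} then yields, in $O(1/\eps)$ time, a set $\Gamma$ of $O(1/\eps)$ orientations—computed from $(p,q)$ and $\widetilde w$ alone, without any knowledge of $\sigma_1^*$—together with a slab $\sigma'$ of some orientation $\gamma\in\Gamma$ that covers $P$ and satisfies $w(\sigma')\le(1+\eps)w^*$.

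Next I would observe that this single good slab already completes an approximate pair. Since $\sigma_1^*\cup\sigma_2^*$ covers $S$, every point of $S$ lying outside $P=\sigma_1^*\setminus\sigma_2^*$ must lie in $\sigma_2^*$; hence $S\setminus P\subseteq\sigma_2^*$ and $width(S\setminus P)\le w(\sigma_2^*)\le w^*$. Thus the pair $(\sigma',\sigma_2^*)$ covers $S$, has maximum width at most $(1+\eps)w^*$, and has one slab of orientation $\gamma\in\Gamma$. This proves the \emph{existence} of a $(1+\eps)$-approximate pair in which one slab's orientation belongs to $\Gamma$.

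Finally, for the algorithm I would run, for each $\gamma\in\Gamma$, the exact one-fixed-orientation solver of Theorem~\ref{thm:one_fixed} on $S$ with fixed orientation $\gamma$, keeping the covering pair of smallest maximum width; since the true anchor pair is unknown, I would repeat the whole process for each of the $O(1)$ candidate pairs in $F$ from Lemma~\ref{lem:anchor_pair} and return the global best. Correctness follows because, for the correct pair $(p,q)$ and the correct $\gamma$, the optimal fixed-orientation pair is no worse than $(\sigma',\sigma_2^*)$, so its maximum width is at most $(1+\eps)w^*$. For the running time, each call to Theorem~\ref{thm:one_fixed} costs $O(n\log n)$, there are $|\Gamma|=O(1/\eps)$ orientations and $|F|=O(1)$ pairs, and all preprocessing (Lemma~\ref{lem:anchor_pair}, Theorem~\ref{thm:10apx}, and the construction of $\Gamma$) is $O(n)$, for a total of $O((n/\eps)\log n)$. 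The only real subtlety, and the step I would check most carefully, is that $\Gamma$ is built purely from the guessed data $(p,q)$ and $\widetilde w$, so that its existential guarantee is unconditional on the unknown optimum $\sigma_1^*$; everything else is bookkeeping.
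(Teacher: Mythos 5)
Your proposal is correct and follows essentially the same route as the paper: compute the candidate anchor pairs (Lemma~\ref{lem:anchor_pair}) and the estimate $\widetilde w$ (Theorem~\ref{thm:10apx}), apply Lemma~\ref{lem:angles} with $c=10$ to $P=S\cap(\sigma_1^*\setminus\sigma_2^*)$ and the width-$w^*$ slab $\sigma_0^*\supseteq\sigma_1^*$, and then run the exact one-fixed-orientation algorithm (Theorem~\ref{thm:one_fixed}) for each of the $O(1/\eps)$ candidate orientations and each candidate pair, returning the best result. Your explicit verification that $(\sigma',\sigma_2^*)$ covers $S$ (via $S\setminus P\subseteq\sigma_2^*$) is a detail the paper leaves implicit, but the argument and complexity analysis are otherwise identical.
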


To further refine the logarithmic term, we compute an $\eps$-certificate $Q$ of $S$ (see Section~\ref{sec:eps_certificate}). 
By Theorem~\ref{thm:computing_eps_certificate}, this can be done in $O(n)$ time and $|Q| = O(1/\eps^2)$. 
We then compute a $(1+\eps)$-approximate pair $(\sigma_1, \sigma_2)$ for the general two-line-center problem on $Q$, which can be done in $O((|Q|/\eps)\log |Q|)$ time by Lemma~\ref{lem:general_temp}.
Without loss of generality, we assume that $\sigma_1$ and $\sigma_2$ have the same width (otherwise, we expand the smaller one).
Let $w = w(\sigma_1) =  w(\sigma_2)$. 
Since $Q \subseteq S$, it is clear that $w \le (1+\eps)w^*$.

Let $(\sigma_1', \sigma_2')$ be the pair of slabs obtained by taking the $\eps$-expansions of $\sigma_1$ and $\sigma_2$, respectively. 
By definition, $\sigma_1' \cup \sigma_2'$ covers $S$, and 
$w(\sigma_1')= w(\sigma_2') \le (1+\eps)^2 w^* \le (1 +3\eps) w^*$.

The total running time is 
$O(n + (\mu/\eps)\log \mu )$, where $ \mu = \min\{n,\, 1/\eps^2\}$, 
which can also be expressed as both $O((n/\eps)\log(1/\eps))$ and $O(n + (1/\eps^3)\log(1/\eps))$. 
By replacing $\eps$ with $\eps/3$, we obtain the following theorem.

\begin{theorem}
    Given a set $S$ of $n$ points in the plane, we can find a pair of slabs such that their union covers $S$ and the maximum of their widths is at most $(1+\eps)w^*$ in $O(n + (\mu/\eps) \log \mu)$ time, where $\mu = \min \{n, 1/\eps^2 \}$.
\end{theorem}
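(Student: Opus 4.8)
The plan is to reduce the general two-line-center problem to a small number of one-fixed-orientation instances and then accelerate the whole computation with an $\eps$-certificate. The guiding observation is that if we knew a good orientation for one of the two optimal slabs, say $\sigma_1^*$, we could fix that orientation and call the exact one-fixed-orientation algorithm of Theorem~\ref{thm:one_fixed}. So the first goal is to produce a set $\Gamma$ of $O(1/\eps)$ candidate orientations guaranteed to contain, up to a $(1+\eps)$ loss, a usable orientation for $\sigma_1^*$.

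To build $\Gamma$, I would first apply Lemma~\ref{lem:anchor_pair} to obtain, in $O(n)$ time, a set $F$ of at most $11$ pairs, one of which is an anchor pair $(p,q)$ of $\Sigma^*$; fixing this pair, I assume without loss of generality that $p,q\in\sigma_1^*$ and $d(p,q)\ge diam((\sigma_1^*\setminus\sigma_2^*)\cap S)/4$. Next, by Theorem~\ref{thm:10apx}, I compute a value $\widetilde w$ with $w^*\le\widetilde w\le 10w^*$ in $O(n)$ time. Consider the auxiliary slab $\sigma_0^*$ of width exactly $w^*$ that shares the orientation of $\sigma_1^*$ and contains $\sigma_1^*$, and set $P=S\cap(\sigma_1^*\setminus\sigma_2^*)$. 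The pair $(P,\sigma_0^*)$ satisfies the four hypotheses of Lemma~\ref{lem:angles} with $c=10$: $P\subset\sigma_0^*$, $p,q\in\sigma_0^*$, $d(p,q)\ge diam(P)/4$, and $w(\sigma_0^*)\le\widetilde w\le 10\,w(\sigma_0^*)$. Lemma~\ref{lem:angles} then yields, in $O(1/\eps)$ time, a set $\Gamma$ of $O(1/\eps)$ orientations guaranteeing a slab $\sigma'$ of some orientation $\gamma\in\Gamma$ that covers $P$ with $w(\sigma')\le(1+\eps)w^*$. Since the complementary slab may cover $S\setminus\sigma'$ with width at most $w^*$, this certifies the existence of a $(1+\eps)$-approximate pair with one slab oriented in $\Gamma$.

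With $\Gamma$ in hand, for each $\gamma\in\Gamma$ I would run the exact one-fixed-orientation algorithm of Theorem~\ref{thm:one_fixed} in $O(n\log n)$ time, and repeat over all pairs in $F$; returning the best pair found establishes Lemma~\ref{lem:general_temp}, a $(1+\eps)$-approximation in $O((n/\eps)\log n)$ time. To strip the dependence of the logarithm on $n$, I would first replace $S$ by an $\eps$-certificate $Q$ of size $O(1/\eps^2)$, computable in $O(n)$ time by Theorem~\ref{thm:computing_eps_certificate}, then apply Lemma~\ref{lem:general_temp} to $Q$ in $O((|Q|/\eps)\log|Q|)$ time, and finally take $\eps$-expansions of the two resulting slabs. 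The $\eps$-certificate property ensures these expansions cover all of $S$ while inflating each width by only a further $(1+\eps)$ factor, so the maximum width stays $(1+O(\eps))w^*$. Substituting $|Q|=O(\mu)$ gives total time $O(n+(\mu/\eps)\log\mu)$, and the claimed bound follows after rescaling $\eps$ by a constant.

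I expect the main subtlety to lie in verifying that the hypotheses of Lemma~\ref{lem:angles} are genuinely met by the \emph{unknown} optimal slab, specifically that the auxiliary slab $\sigma_0^*$, whose orientation we cannot compute, nonetheless witnesses conditions (1)--(4); this is precisely where the anchor-pair guarantee $d(p,q)\ge diam(P)/4$ and the constant-factor bound $\widetilde w\le 10w^*$ are both indispensable. By contrast, the reduction to one-fixed-orientation instances and the $\eps$-certificate speedup are routine once the correctness of $\Gamma$ is secured, and the final running time is immediate from the stated lemmas.
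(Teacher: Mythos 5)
Your proposal is correct and follows the paper's own proof essentially step for step: Lemma~\ref{lem:anchor_pair} and Theorem~\ref{thm:10apx} supply the hypotheses of Lemma~\ref{lem:angles} (with $c=10$, applied to $P=S\cap(\sigma_1^*\setminus\sigma_2^*)$ and the auxiliary slab $\sigma_0^*$), Theorem~\ref{thm:one_fixed} is run once per orientation in $\Gamma$ and per pair in $F$ to obtain Lemma~\ref{lem:general_temp}, and the $\eps$-certificate of Theorem~\ref{thm:computing_eps_certificate} reduces $n$ to $\mu$. The only detail you gloss over is that the $\eps$-certificate guarantee applies only to pairs of slabs of \emph{equal} width, so before taking $\eps$-expansions of the pair computed on $Q$ you must first widen the smaller slab to match the larger --- a harmless normalization that the paper states explicitly.
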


\section{Two Fixed Orientations}
\label{sec:two_fixed}

% (Section Intro)
% Given a set $S$ of $n$ points in the plane and two orientations, $\theta_1$ and $\theta_2$, we study the problem of finding two slabs, one has orientation $\theta_1$ and the other has orientation $\theta_2$, whose union covers $S$ while minimizing the maximum of their widths. In this section, we present a $(1+\eps)$-approximation algorithm for this problem that runs in $O(n+ 1/\eps)$ time. 

By rotating the coordinate system, we assume without loss of generality that one of the given orientations is horizontal. Thus, the two orientations can be taken as $0$ and $\theta$, where $0 \leq \theta < \pi$.
Let $(\sigma_h^*, \sigma_t^* )$ be an optimal pair of slabs where $\sigma_h^*$ is horizontal and $\sigma_t^*$ has orientation $\theta$.
Let $w^* = \max \{w(\sigma_h^*), w(\sigma_t^*) \}$.
Recall that $S$ denote the input set of $n$ points in the plane.

% (Ruling out w^* = 0)
% We first rule out the case where $w^* = 0$ in linear time.
% Choose an arbitrary point $p \in S$, and then find another point $q \in S$ with a different $y$-coordinate from $p$. If no such point exists, then all points of $S$ lie on a horizontal line, which implies $w^*=0$.
% If such $q$ exists, then for $w^* = 0$ it must be the case that $p$ or $q$ (or possibly both) lies on $\sigma_t^*$.
% Let $\ell_p$ be the line of orientation $\theta$ passing through $p$, and define $\ell_q$ analogously for $q$.
% If $w^* = 0$, then $\sigma_t^*$ must be either $\ell_p$ or $\ell_q$.
% We test both cases by checking whether all points of $S$ not lying on $\ell_p$ (resp., $\ell_q$) lie on a common horizontal line. 
% The procedure above takes linear time in total.
% From now on, we assume that $w^* > 0$.

\subsection{A \texorpdfstring{$2$}{2}-Approximation Algorithm}
\label{sec:2_apx_two_fixed}

Let $p_t$ and $p_b$ denote the highest and lowest points in $S$, respectively.
Let $p_\ell$ and $p_r$ be the two extreme points in $S$ in the direction orthogonal to $\theta$. See Figure~\ref{fig:parallelogram}. 
Since we assume $w^* > 0$, the size of $\{p_b, p_t, p_\ell, p_r\}$ is at least two.
Then the following holds.

\begin{lemma}
    \label{lem:two_fixed_extreme_pts}
    There exists an optimal pair $(\sigma_h^*, \sigma_t^*)$ of slabs such that two distinct points $p$ and $q$ from $\{p_b, p_t, p_\ell, p_r\}$ satisfy $p \in \sigma_h^*$ and $q \in \sigma_t^*$.
    (Note that $p$ or $q$ may belong to $\sigma_h^* \cap \sigma_t^*$.)
\end{lemma}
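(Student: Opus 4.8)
The plan is to argue by cases on how the four extreme points $\{p_b, p_t, p_\ell, p_r\}$ are split between $\sigma_h^*$ and $\sigma_t^*$, modifying the optimal pair only in a degenerate situation. First I would record the basic fact that, since $(\sigma_h^*, \sigma_t^*)$ covers $S$, each of the four extreme points lies in $\sigma_h^*$ or in $\sigma_t^*$ (or both). Let $H$ denote the set of these extreme points lying in $\sigma_h^*$ and $T$ those lying in $\sigma_t^*$, so $H \cup T = \{p_b, p_t, p_\ell, p_r\}$, a set of at least two distinct points by the assumption $w^* > 0$ recorded just before the lemma.

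The easy case is when both $H$ and $T$ are nonempty. Then, because $|H \cup T| \ge 2$, I can select $p \in H$ and $q \in T$ with $p \ne q$: if every choice forced $p = q$, then $H = T = \{p\}$ and $|H \cup T| = 1$, contradicting $|H \cup T| \ge 2$. Such $p \in \sigma_h^*$ and $q \in \sigma_t^*$ give the conclusion directly, using the original optimal pair unchanged.

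The remaining case is when one of $H, T$ is empty; by the symmetry between the two slabs it suffices to treat $T = \emptyset$, i.e.\ all four extreme points lie in $\sigma_h^*$. The key geometric observation here is that a horizontal slab containing both the topmost point $p_t$ and the bottommost point $p_b$ must contain all of $S$, since every point of $S$ has $y$-coordinate between $y(p_b)$ and $y(p_t)$; hence $\sigma_h^*$ already covers $S$ by itself. I would then replace $\sigma_t^*$ by a degenerate slab of orientation $\theta$ and width $0$ passing through one extreme point $q$: this does not increase the maximum width and, since $\sigma_h^*$ already covers $S$, the union still covers $S$, so the new pair remains optimal. Choosing a second extreme point $p \ne q$ (which exists since $|\{p_b,p_t,p_\ell,p_r\}| \ge 2$, and lies in $\sigma_h^*$ because $\sigma_h^*$ covers everything) yields $p \in \sigma_h^*$ and $q \in \sigma_t^*$, as required. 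The case $H = \emptyset$ is entirely symmetric, using instead that a $\theta$-slab containing both $p_\ell$ and $p_r$ covers all of $S$.

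The only delicate point, which I expect to be the main obstacle, is the degenerate case: one must verify carefully that shrinking the unused slab to width $0$ through a chosen extreme point genuinely preserves optimality (it does, because the other slab already covers $S$ and width $0 \le w^*$), and that two \emph{distinct} extreme points are available to serve as $p$ and $q$. This is exactly where $w^* > 0$ is used, since it guarantees $|\{p_b,p_t,p_\ell,p_r\}| \ge 2$ (and in fact $p_t \ne p_b$ and $p_\ell \ne p_r$, as otherwise $S$ would lie on a line of the corresponding orientation and be coverable by a single width-$0$ slab, forcing $w^* = 0$).
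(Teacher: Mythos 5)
Your proof is correct and takes essentially the same route as the paper's: the only nontrivial situation is when all the extreme points lie in a single slab, in which case that slab alone covers $S$ (the paper argues this via the minimum bounding parallelogram, you via $p_t, p_b$ or $p_\ell, p_r$ directly), and the unused slab is then replaced by a degenerate width-$0$ slab of the required orientation through one extreme point. The paper packages this as a proof by contradiction over all optimal pairs rather than your direct case analysis, but the underlying argument, including the use of $w^* > 0$ to guarantee at least two distinct extreme points, is the same.
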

\begin{proof}
    Suppose to the contrary. 
    Then for every optimal pair $(\sigma_h^*, \sigma_t^*)$ of slabs, either $\{p_b, p_t, p_\ell, p_r\} \subset \sigma_h^* \setminus \sigma_t^*$ or $\{p_b, p_t, p_\ell, p_r\} \subset \sigma_t^* \setminus \sigma_h^*$.
    By definition, the points $\{p_b, p_t, p_\ell, p_r\}$ determine the minimum bounding parallelogram of $S$; two of its sides are horizontal and the other two sides have orientation~$\theta$ (see~Figure~\ref{fig:parallelogram}). 
    Let $R$ denote the parallelogram. 
    
    If $\{p_b, p_t, p_\ell, p_r\} \subset \sigma_h^* \setminus \sigma_t^*$, then $R \subset \sigma_h^*$, and hence $S \subset \sigma_h^*$.
    In this case, choose any point $p \in \{p_b, p_t, p_\ell, p_r\}$ and consider the line $\ell_p$ of orientation $\theta$ passing through $p$. 
    Then $(\sigma_h^*, \ell_p)$ forms an optimal pair of slabs, with one slab degenerating to a line.
    This contradicts our assumption, since any point in $\{p_b, p_t, p_\ell, p_r\}\setminus\{p\}$ lies in $\sigma_h^*$ while $p \in \ell_p$.
    
    Analogously, if $\{p_b, p_t, p_\ell, p_r\} \subset \sigma_t^* \setminus \sigma_h^*$, consider the horizontal line $\ell_p$ passing through $p \in \{p_b, p_t, p_\ell, p_r\}$.
    Then $(\ell_p, \sigma_t^*)$ forms an optimal pair of slabs, which contradicts our assumption in the same way.
\end{proof}

\begin{figure}[ht]
\centering
\includegraphics[width=0.23 \textwidth]{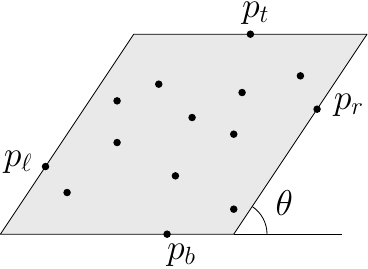}
\caption{The four extreme points $\{ p_b, p_t, p_\ell, p_r \}$ determine the minimum bounding parallelogram. }
\label{fig:parallelogram}
\end{figure}

% \subsection{A $2$-approximation algorithm}
% \label{sec:2_apx_two_fixed}

Using Lemma~\ref{lem:two_fixed_extreme_pts}, we first give a $2$-approximation algorithm.

\begin{lemma}
    \label{lem:two_fixed_const_apx}
    We can compute a $2$-approximation of an optimal pair of slabs in $O(n)$ time. 
\end{lemma}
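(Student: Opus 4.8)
The plan is to enumerate over the $O(1)$ anchor assignments guaranteed by Lemma~\ref{lem:two_fixed_extreme_pts}, and, for each assignment, to reduce the problem to a one-parameter covering problem that is solvable in a single linear scan. Concretely, by Lemma~\ref{lem:two_fixed_extreme_pts} there is an optimal pair $(\sigma_h^*,\sigma_t^*)$ and two distinct points $p,q\in\{p_b,p_t,p_\ell,p_r\}$ with $p\in\sigma_h^*$ and $q\in\sigma_t^*$. We do not know which ordered pair witnesses this, but since $|\{p_b,p_t,p_\ell,p_r\}|\ge 2$ (using $w^*>0$), there are at most $4\cdot 3=12$ ordered pairs of distinct extreme points, so I would simply try each of them, treating the first coordinate as the horizontal anchor and the second as the orientation-$\theta$ anchor, and keep the best covering pair produced.

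Fix one ordered pair $(p,q)$. For a width $w\ge 0$, let $H_w$ be the horizontal slab of width $w$ whose center line passes through $p$, and let $T_w$ be the slab of orientation $\theta$ and width $w$ whose center line passes through $q$. For each $s\in S$, let $a(s)$ be twice the vertical distance from $s$ to $p$ and let $b(s)$ be twice the orthogonal distance from $s$ to the line of orientation $\theta$ through $q$; equivalently, $a(s)$ is the smallest width for which $s\in H_w$, and $b(s)$ is the smallest width for which $s\in T_w$. Then $H_w\cup T_w$ covers $S$ if and only if $w\ge\min\{a(s),b(s)\}$ for every $s$, so the smallest admissible common width is
\[
 w_{pq}=\max_{s\in S}\min\{a(s),b(s)\},
\]
which can be evaluated in $O(n)$ time. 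The pair $(H_{w_{pq}},T_{w_{pq}})$ then has the correct two orientations, covers $S$, and has maximum width $w_{pq}$.

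It remains to bound $w_{pq}$ for the witnessing pair. The key geometric observation is that any slab $\sigma$ of width $w\le w^*$ containing a point $r$ is contained in the slab of the same orientation, width $2w^*$, and center line through $r$: for every $x\in\sigma$, both $x$ and $r$ lie in the width-$w$ strip, so the orthogonal distance from $x$ to the center line through $r$ is at most $w\le w^*$. Applying this to the witnessing $p$ and $q$ gives $\sigma_h^*\subseteq H_{2w^*}$ and $\sigma_t^*\subseteq T_{2w^*}$, hence $S\subseteq\sigma_h^*\cup\sigma_t^*\subseteq H_{2w^*}\cup T_{2w^*}$. Thus $\min\{a(s),b(s)\}\le 2w^*$ for every $s$, i.e.\ $w_{pq}\le 2w^*$. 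Since every pair produced by the enumeration is a valid cover, the best one found has maximum width at most $2w^*$, and the whole procedure runs in $O(1)\cdot O(n)=O(n)$ time.

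The main point to get right is the geometric containment fact together with the choice to \emph{center} the candidate slabs on the anchor points rather than use arbitrary supporting lines: this simultaneously yields the factor $2$ and keeps the inner optimization a trivial linear-time maximum of pointwise minima. The rest (the constant number of assignments and the linear scan) is routine.
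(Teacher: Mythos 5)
Your proposal is correct and takes essentially the same approach as the paper: enumerate the constant number of anchor assignments guaranteed by Lemma~\ref{lem:two_fixed_extreme_pts}, draw the horizontal line and the orientation-$\theta$ line through the two anchors, and exploit that every point of $S$ lies within distance $w^*$ of at least one of these lines (since each anchor line is contained in its optimal slab). The only cosmetic difference is that you center both output slabs on the anchor lines with the common width $\max_{s\in S}\min\{a(s),b(s)\}$, whereas the paper assigns each point greedily to its nearer anchor line and takes the minimum enclosing slab of each group, bounding its width by the sum of two distances each at most $w^*$; both realizations yield the factor $2$ in $O(n)$ time by the same underlying distance bound.
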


\begin{proof}
    For the moment, 
    assume that we are given a pair $(p_1, p_2)$ of points in $S$ such that  
    $p_1 \in \sigma_h^*$ and $p_2\in \sigma_t^*$.
    Let $\ell_h$ be the horizontal line through $p_1$, and let $\ell_t$ be the line of orientation $\theta$ through $p_2$. 
    Then we have $\ell_h \subset \sigma_h^*$ and $\ell_t \subset  \sigma_t^*$. 

    Now we determine by which slab each input point should be covered in a greedy way. 
    We initialize two sets $H$ and $T$ as empty sets. 
    For each $p\in S$,  we insert $p$ into $H$ if $d(p, \ell_h) < d(p, \ell_t)$. 
    Otherwise, we insert it into $T$. 
    After applying this process to every input point, we compute the minimum horizontal slab enclosing $H$, denoted by $\sigma_h$, and the minimum slab of orientation $\theta$ enclosing $T$, denoted by $\sigma_t$ (see Figure~\ref{fig:two_fixed_const}).
    Now we show that $(\sigma_h, \sigma_t)$ is a $2$-approximate solution.

    We first show that $ w(\sigma_h) \le 2 w^*$.
    Let $q$ and $q'$ be the lowest and highest points in $H$ (see Figure~\ref{fig:two_fixed_const}). 
    Then $w(\sigma_h) \le d(q, \ell_h) + d(q', \ell_h)$.  (Equality holds if $q$ and $q'$ are on different sides of $\ell_h$.)
    If $q\in \sigma_h^*$, then $d(q, \ell_h) \le w^*$ as $\ell_h \subset \sigma_h^*$. 
    Otherwise, $ d(q, \ell_t) \le w^*$  as $q$ should be covered by $\sigma_t^*$.
    We also have $d(q, \ell_h) < d(q, \ell_t)$ as $q\in H$. Therefore it holds that $d(q, \ell_h) \le w^*$ in any case. 
    Similarly, we can also show that $d(q', \ell_h) \le w^*$. 
    Therefore, $w(\sigma_h) \le d(q, \ell_h) + d(q', \ell_h) \le 2w^*$. We can similarly show that $w(\sigma_t) \le 2w^*$ by choosing the two extreme points of $T$ in the direction orthogonal to $\theta$. 
    Note that once we have $p_1$ and $p_2$, we can find $\sigma_h$ and $\sigma_t$ in $O(n)$ time by finding the extreme points in $H$ and $T$, respectively. 

    By Lemma~\ref{lem:two_fixed_extreme_pts}, 
    we have a constant number of candidates for $(p_1, p_2)$. 
    We apply the above procedure to each candidate pair and select the one that minimizes $\max\{w(\sigma_h), w(\sigma_t)\}$.
    The resulting pair $(\sigma_h, \sigma_t)$ is a $2$-approximate solution.
\end{proof}

    \begin{figure}[ht!]
    \centering
    \includegraphics[width=0.45 \textwidth]{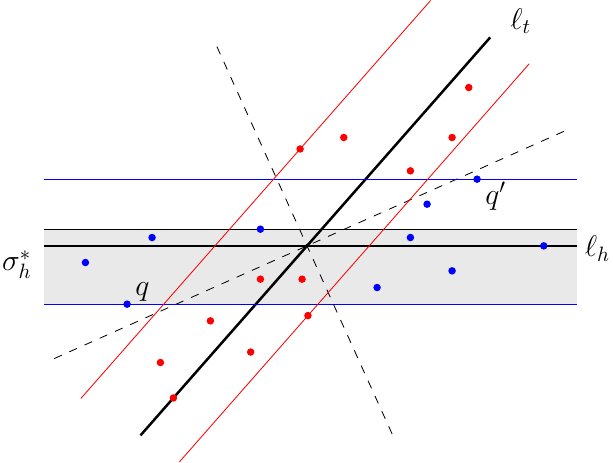}
    \caption{Two lines $\ell_h$ and $\ell_t$ and their bisectors (dashed).
    Blue points belong to $H$, while red points belong to $T$. 
    The blue lines indicate the bounding lines of $\sigma_h$, and the red lines indicate the bounding lines of $\sigma_t$.  
    Points $q$ and $q'$ are the vertical extreme points of $H$. 
    The gray region represents $\sigma_h^*$. 
    Note that $q \in \sigma_h^*$ and $q' \notin \sigma_h^*$.}
    \label{fig:two_fixed_const}
    \end{figure}

\subsection{A \texorpdfstring{$(1+ \eps)$}{(1+epsilon)}-Approximation Algorithm}
\label{sec:eps_apx_two_fixed}

By Lemma~\ref{lem:two_fixed_const_apx}, we can compute $\widetilde w$ such that $w^* \le \widetilde w \le 2w^*$ in $O(n)$ time. 
By Lemma~\ref{lem:two_fixed_extreme_pts}, 
we can compute a set $G$ of at most six pairs of points such that there exists a pair $(p_1, p_2) \in G$ where  
$p_1 \in \sigma_h^*$ and $p_2\in \sigma_t^*$ in $O(n)$ time.
For the moment, assume that we have such a pair $(p_1, p_2)$ of $G$. (We will repeat the following process for every pair in $G$.)

Let $\ell_h$ be the horizontal line through $p_1$, and let $\ell_t$ be the line of orientation $\theta$ through $p_2$. 
Then we have $\ell_h \subset \sigma_h^*$ and $\ell_t \subset  \sigma_t^*$. 
Without loss of generality, we assume that $\ell_h$ is $y=0$.
We also assume that $w(\sigma_h^*) \ge w(\sigma_t^*)$. We can handle the other case by running the same algorithm in the coordinate system rotated by $\theta$.

Let $L = \{\ell_i: y = ({\eps \widetilde w}/{4})\cdot (i- 1 -\lceil {4}/{\eps} \rceil )  
\mid 1 \le i \le 2\lceil 4/\eps \rceil +1 \}$ be the set of horizontal lines.
Note that $\ell_1$ is $y = - ({\eps \widetilde w}/{4})\cdot \lceil {4}/{\eps} \rceil \le -\widetilde w $ 
and $\ell_{2\lceil 4/\eps \rceil +1 }$ is $y = ({\eps \widetilde w}/{4})\cdot \lceil {4}/{\eps} \rceil \ge \widetilde w $. 
The distance between two consecutive lines is  $\eps \widetilde w / 4$ and $|L| =  2\lceil 4/\eps \rceil +1$. 

For two indices $ 1 \le i \le j \le 2\lceil 4/\eps \rceil +1 $, 
let $\sigma(i, j)$ be the horizontal slab bounded by $\ell_i$ and $\ell_j$ (see Figure~\ref{fig:two_fixed_gridlines}). 
If $i=j$, then $\sigma(i, j)$ is $\ell_i$, which covers the points lying on it.

\begin{figure}[ht]
\centering
\includegraphics[width=0.4 \textwidth]{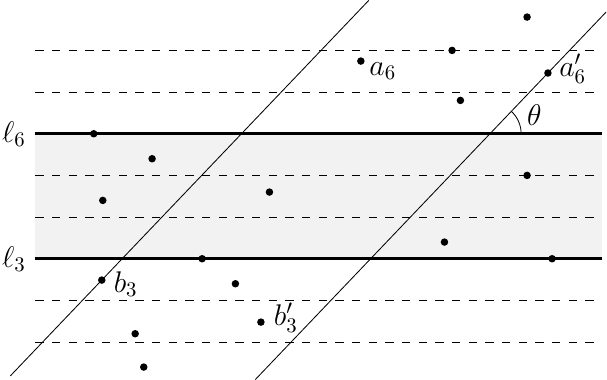}
\caption{The lines in $L$ (dashed), $\ell_3$, and $\ell_6$ are shown. The gray region represents $\sigma(3,6)$.
Points $a_6$ and $a'_6$ (resp., $b_3$ and $b_3'$) are the two extreme points in the direction orthogonal to $\theta$ among the points lying strictly above $\ell_6$ (resp., strictly below $\ell_3$).
Note that the minimum slab of orientation $\theta$ enclosing the points outside $\sigma(3, 6)$ is determined by $a_6'$ and $b_3$. 
}
\label{fig:two_fixed_gridlines}
\end{figure}

\begin{lemma}
    \label{lem:two_fixed_gridlines}
    There is a horizontal slab $\sigma$ such that  
    both bounding lines are in $L$, 
    $\sigma_h^* \subseteq \sigma $, and $w(\sigma) \le (1 + \eps) w^*$.
\end{lemma}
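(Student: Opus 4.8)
The plan is to take the optimal horizontal slab $\sigma_h^*$ itself and ``snap'' each of its two bounding lines outward to the nearest line of $L$, then argue that this enlarges the width by at most $\eps w^*$. The construction is purely a rounding argument; the only thing that needs verifying is that the grid $L$ is positioned and wide enough to contain both bounding lines of $\sigma_h^*$.

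First I would record the position of $\sigma_h^*$ relative to the grid. Since $p_1 \in \sigma_h^*$ and $\sigma_h^*$ is horizontal, the whole line $\ell_h$ lies in $\sigma_h^*$; as $\ell_h$ is $y=0$, the two bounding lines of $\sigma_h^*$ are $y = y_b$ and $y = y_t$ with $y_b \le 0 \le y_t$. Because $w(\sigma_h^*) \le w^* \le \widetilde w$, we get $y_t \le \widetilde w$ and $y_b \ge -\widetilde w$, so both bounding lines lie in the interval $[-\widetilde w, \widetilde w]$. On the grid side, recall that the lowest line $\ell_1$ lies at $y \le -\widetilde w$, the highest line $\ell_{2\lceil 4/\eps\rceil+1}$ lies at $y \ge \widetilde w$, and consecutive lines of $L$ are spaced $\eps\widetilde w/4$ apart; hence $L$ covers $[-\widetilde w, \widetilde w]$.

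Next I would define $\sigma$ by snapping outward. Let $\ell_j$ be the lowest line of $L$ lying at or above $y = y_t$, and let $\ell_i$ be the highest line of $L$ lying at or below $y = y_b$; both exist by the previous paragraph. Set $\sigma = \sigma(i,j)$. By construction $\ell_i$ is at or below $y_b \le 0 \le y_t$ is at or below $\ell_j$, so the strip $\{\,y_b \le y \le y_t\,\} = \sigma_h^*$ is contained in $\sigma$, i.e.\ $\sigma_h^* \subseteq \sigma$, and both bounding lines of $\sigma$ belong to $L$. Finally I would bound the width: snapping the top line up moves it by at most the grid spacing $\eps\widetilde w/4$, and likewise snapping the bottom line down moves it by at most $\eps\widetilde w/4$, so
\[
w(\sigma) \;\le\; w(\sigma_h^*) + 2\cdot \frac{\eps\widetilde w}{4} \;\le\; w^* + \frac{\eps\widetilde w}{2} \;\le\; w^* + \eps w^* \;=\; (1+\eps)\,w^*,
\]
where the last inequality uses $\widetilde w \le 2w^*$. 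This $\sigma$ is the desired slab.

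There is no deep obstacle here. The only points requiring care are confirming that $L$ is wide enough to contain both bounding lines of $\sigma_h^*$ — which is exactly why the grid is anchored around $\ell_h = \{y=0\}\subset\sigma_h^*$ and extended out to $\pm\widetilde w \ge \pm w^*$ — and tracking that the total outward shift is $\eps\widetilde w/2 \le \eps w^*$, so that the multiplicative factor comes out to $(1+\eps)$ rather than something larger.
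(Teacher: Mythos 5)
Your proof is correct and follows essentially the same route as the paper: the paper's ``minimal slab $\sigma(i',j')$ enclosing $\sigma_h^*$ among all $\sigma(i,j)$'' is exactly your outward-snapping construction, and both arguments justify existence via the grid spanning $[-\widetilde w,\widetilde w]$ around $\ell_h \subset \sigma_h^*$ and bound the width by $w(\sigma_h^*) + \eps\widetilde w/2 \le (1+\eps)w^*$ using $\widetilde w \le 2w^*$.
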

\begin{proof}
    Let $i'$ and $j'$ be the indices such that $\sigma(i', j')$ is the minimal slab that encloses $\sigma_h^*$ among all slabs $\sigma(i,j)$ with $1 \le i \le j \le 2\lceil 4/\eps \rceil + 1$.
    Such indices exist because $d(\ell_1, \ell_h) \ge \widetilde w \ge w^*$ and $d(\ell_{2 \lceil 4/\eps \rceil +1 }, \ell_h)\ge \widetilde w \ge w^*$, and $\ell_h \subset \sigma_h^*$. 
    Then $\sigma_h^* \subseteq \sigma(i', j')$ and $w(\sigma(i', j')) \le w(\sigma_h^*) + \eps \widetilde{w}/2$. 
    Since $w(\sigma_h^*) \le w^*$ and $\widetilde w \le 2w^*$, it follows that $w(\sigma(i', j')) \le (1+\eps) w^*$. 
\end{proof}

Observe that once we find a slab $\sigma$ satisfying $\sigma_h^* \subseteq \sigma$ and 
$w(\sigma) \le (1+\eps)w^*$, the other slab is determined immediately: it is the minimum-width slab 
$\sigma'$ of orientation $\theta$ that encloses all points not covered by $\sigma$.  
Note that $(\sigma, \sigma')$ forms a $(1+\eps)$-approximate solution.

By Lemma~\ref{lem:two_fixed_gridlines},   there are at most $|L|^2 = O(1/\eps^2)$ candidates for such a slab $\sigma$.  
However, we preprocess the input points so that we avoid testing all $O(1/\eps^2)$ candidates.

\paragraph*{Preprocessing}
The lines in $L$ partition the plane into $O(1/\eps)$ regions. 
For each point $p \in S$, we determine the region containing $p$, which can be done in constant time using the floor function.  
Within each region, we find the two extreme points in the direction orthogonal to $\theta$. 
This takes $O(n)$ time in total. 

For each line $\ell_i \in L$, we store the two extreme points in the direction orthogonal to $\theta$ for the points that lie strictly above $\ell_i$, and denote them by $a_i$ and $a'_i$ (see Figure~\ref{fig:two_fixed_gridlines}).
Similarly, for each line $\ell_i \in L$, we store the two extreme points for those lying strictly below $\ell_i$, and denote them by $b_i$ and $b'_i$.
This can be done in $O(n)$ total time by visiting the partitioned regions in order.

\paragraph*{Algorithm} 
For fixed $i$ and $j$, $ 1 \le i \le j \le 2\lceil 4/\eps \rceil +1  $, note that the smallest slab of orientation $\theta$ that encloses the input points strictly outside of $\sigma(i, j)$ is the smallest slab of orientation $\theta$ that encloses $\{a_i, a'_i, b_j, b'_j \}$ (see Figure~\ref{fig:two_fixed_gridlines}). 
That is, $\width_\theta(S \setminus \sigma(i, j)) = \width_\theta (\{a_i, a'_i, b_j, b'_j \})$.
(Recall the definition of $\width_\theta$ in Section~\ref{sec:preliminaries}.)
We can compute $\width_\theta(S \setminus \sigma(i, j))$ in constant time once we know $\{a_i, a'_i, b_j, b'_j \}$. 

For  $  1 \le i, j \le 2\lceil 4/\eps \rceil +1$, 
we define $M[i, j] = d(\ell_i, \ell_j)$ if $i\le j$ and 
$d(\ell_i, \ell_j) \ge w_\theta(S \setminus \sigma(i, j))$. 
Let $M[i, j] = -\infty$ if $i>j$ or $d(\ell_i, \ell_j) < w_\theta(S \setminus \sigma(i, j))$. 
We first show that $M$ is a sorted matrix in the following lemma. 
A smallest non-negative element in a sorted matrix can be found in time that is linear in the number of rows plus the number of columns if each element can be computed in constant time.
Observe that each element of $M$ can be computed in constant time by the preprocessing step.

\begin{lemma}
    $M$ is a sorted matrix. In particular, each row is a non-decreasing sequence and each column is a non-increasing sequence. 
\end{lemma}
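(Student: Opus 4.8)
The plan is to verify the two required monotonicity properties — each row non-decreasing in the column index $j$, and each column non-increasing in the row index $i$ — by reducing both to a pair of elementary monotonicities of the two quantities that enter the definition of $M[i,j]$, namely the gap $d(\ell_i,\ell_j)$ and the orientation-constrained width $width_\theta(S\setminus\sigma(i,j))$ of the points left uncovered. First I would record these two facts. Since $\ell_i$ has $y$-coordinate increasing in $i$, for $i\le j$ we have $d(\ell_i,\ell_j)=(j-i)\cdot \eps\widetilde w/4$, which is strictly increasing in $j$ for fixed $i$ and strictly decreasing in $i$ for fixed $j$. For the second fact, observe that the slab $\sigma(i,j)$ is monotone under inclusion: increasing $j$ or decreasing $i$ only widens the slab, so $S\setminus\sigma(i,j)$ shrinks, and since $width_\theta$ is monotone under set inclusion, $width_\theta(S\setminus\sigma(i,j))$ cannot increase when $j$ grows or $i$ decreases.

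Next I would handle the row monotonicity. Fix $i$ and take $j<j'$; I only need to consider the case where $M[i,j]$ is finite, since otherwise $M[i,j]=-\infty$ is dominated by any value. Finiteness means $i\le j$ and $d(\ell_i,\ell_j)\ge width_\theta(S\setminus\sigma(i,j))$. Combining the two facts above,
\[
d(\ell_i,\ell_{j'}) > d(\ell_i,\ell_j) \ge width_\theta(S\setminus\sigma(i,j)) \ge width_\theta(S\setminus\sigma(i,j')),
\]
so the defining inequality also holds at $(i,j')$; hence $M[i,j']=d(\ell_i,\ell_{j'})>d(\ell_i,\ell_j)=M[i,j]$, giving a non-decreasing row. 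The column case is symmetric: fix $j$, take $i<i'$, and assume $M[i',j]$ is finite (otherwise it equals $-\infty$ and $M[i,j]\ge M[i',j]$ trivially). Then $i\le j$, and the same two facts yield
\[
d(\ell_i,\ell_j) > d(\ell_{i'},\ell_j) \ge width_\theta(S\setminus\sigma(i',j)) \ge width_\theta(S\setminus\sigma(i,j)),
\]
so $M[i,j]=d(\ell_i,\ell_j)>d(\ell_{i'},\ell_j)=M[i',j]$, giving a non-increasing column.

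The only real subtlety — and the step I would be most careful with — is the bookkeeping of the $-\infty$ entries: I must ensure that monotonicity is never violated at a boundary between a finite entry and a $-\infty$ entry. The displays above handle this cleanly by showing that the feasibility condition $d(\ell_i,\ell_j)\ge width_\theta(S\setminus\sigma(i,j))$ \emph{propagates} in exactly the directions that increase $d(\ell_i,\ell_j)$ and decrease the uncovered width: once it holds, it continues to hold as $j$ increases (within a row) or as $i$ decreases (within a column). Equivalently, the finite entries form a suffix of each row and a prefix of each column, which is precisely what is needed for the finite-to-$-\infty$ transitions to respect the claimed monotonicity and hence for $M$ to be a sorted matrix.
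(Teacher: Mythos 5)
Your proof is correct and follows essentially the same approach as the paper: both arguments reduce to showing that the finite entries form a suffix of each row and a prefix of each column, using the monotonicity of $d(\ell_i,\ell_j)$ in the indices. In fact, you make explicit a step the paper leaves implicit — that the feasibility condition $d(\ell_i,\ell_j)\ge width_\theta(S\setminus\sigma(i,j))$ propagates because the uncovered set $S\setminus\sigma(i,j)$ shrinks (and hence its $\theta$-width cannot increase) as the slab widens — which is a welcome clarification rather than a deviation.
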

\begin{proof}

    Let $t = 2\lceil 4/\eps \rceil + 1$, which is the number of rows (resp., the number of columns) of $M$.

    Fix an index $i$ with $1 \le i \le t$.  
    For every $j$ with $1 \le j < i$, we have $M[i,j] = -\infty$.
    Let $k$ be the smallest index with $i \le k \le t$ such that $d(\ell_i, \ell_k) \ge w_\theta(S \setminus \sigma(i,k))$.  
    If no such $k$ exists, then $M[i,j] = -\infty$ for all $j \ge i$.  
    If such a $k$ exists, then $M[i,j] = -\infty$ for all $j$ with $i \le j < k$, and $M[i,j] = d(\ell_i, \ell_j)$ for all $j$ with $k \le j \le t$.  
    In both cases, the $i$-th row of $M$ is a non-decreasing sequence.
    
    Fix an index $j$ with $1 \le j \le t$.  
    For every $i$ with $j < i$, we have $M[i,j] = -\infty$.
    Let $k'$ be the largest index with $1 \le k' \le j$ such that $d(\ell_{k'}, \ell_j) \ge w_\theta(S \setminus \sigma(k',j))$.  
    If no such $k'$ exists, then $M[i,j] = -\infty$ for all $i \le j$.  
    If such a $k'$ exists, then $M[i,j] = -\infty$ for all $i$ with $k' < i \le j$, and $M[i,j] = d(\ell_i, \ell_j)$ for all $i$ with $1 \le i \le k'$.  
    In both cases, the $j$-th column of $M$ is a non-increasing sequence.
\end{proof}

Let $(i', j')$ be the pair of indices such that $\sigma(i', j')$ is the minimal slab that encloses $\sigma_h^*$ among all slabs $\sigma(i,j)$ with 
$1 \le i \le j \le 2\lceil 4/\eps \rceil + 1$.
By our assumption that $w(\sigma_h^*) \ge w(\sigma_t^*)$, we have 
$d(\ell_{i'}, \ell_{j'}) \ge w(\sigma_h^*) \ge w(\sigma_t^*)  \ge w_\theta(S \setminus \sigma(i', j'))$. 
By Lemma~\ref{lem:two_fixed_gridlines}, 
$d(\ell_{i'}, \ell_{j'}) \le (1+\eps)w^*$. 
Therefore,
$w^* \le M[i', j'] = d(\ell_{i'}, \ell_{j'}) \le (1+\eps)w^*,$
which implies that a smallest non-negative element in $M$ exists and is at most $(1+\eps)w^*$.

We compute a pair $(i'', j'')$ of indices such that $M[i'', j'']$ is a smallest non-negative element in $M$. This can be done in $O(1/\eps)$ time.
This directly yields a pair of slabs: $\sigma(i'', j'')$ and the smallest slab of orientation $\theta$ that encloses $S \setminus \sigma(i'', j'')$.  
Observe that these two slabs together form a $(1+\eps)$-approximate pair of slabs.

Now, we conclude the following theorem.

\begin{theorem}
    \label{thm:two_fixed_apx}
    % We can compute a pair of slabs that forms a $(1+\eps)$-approximate solution for the two-fixed-orientations two-line-center problem in $O(n + 1/\eps)$ time.
    Given a set $S$ of $n$ points in the plane and two orientations $\theta_1$ and $\theta_2$, 
    we can find a pair of slabs, one with orientation $\theta_1$ and the other with orientation $\theta_2$, such that their union covers $S$ and their maximum width is at most $(1+\eps)w^*$ in $O(n + 1/\eps)$ time.
\end{theorem}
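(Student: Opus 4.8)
The plan is to assemble the ingredients already developed in this section into a single linear-time routine, since the technical work is complete. First I would invoke Lemma~\ref{lem:two_fixed_const_apx} to compute in $O(n)$ time a value $\widetilde w$ with $w^* \le \widetilde w \le 2w^*$, which calibrates the spacing $\eps\widetilde w/4$ of the grid lines in $L$. Next, by Lemma~\ref{lem:two_fixed_extreme_pts}, I would compute in $O(n)$ time a set $G$ of at most six candidate pairs, one of which is a pair $(p_1,p_2)$ with $p_1\in\sigma_h^*$ and $p_2\in\sigma_t^*$; the rest of the argument is run once for each pair in $G$, adding only a constant factor. For the correct candidate, the lines $\ell_h$ (horizontal through $p_1$) and $\ell_t$ (orientation $\theta$ through $p_2$) satisfy $\ell_h\subset\sigma_h^*$ and $\ell_t\subset\sigma_t^*$, which is precisely the hypothesis the grid construction requires.

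With $(p_1,p_2)$ fixed, I would place the $O(1/\eps)$ horizontal lines of $L$ and run the preprocessing step, recording for every line $\ell_i$ the two extreme points $a_i,a_i'$ strictly above it and $b_i,b_i'$ strictly below it in the direction orthogonal to $\theta$; a single sweep over the $O(1/\eps)$ grid regions does this in $O(n)$ time. These stored points allow each entry of the matrix $M$ to be evaluated in $O(1)$ time, since $width_\theta(S\setminus\sigma(i,j)) = width_\theta(\{a_i,a_i',b_j,b_j'\})$, where $M[i,j]=d(\ell_i,\ell_j)$ exactly when $d(\ell_i,\ell_j)\ge width_\theta(S\setminus\sigma(i,j))$ and $M[i,j]=-\infty$ otherwise.

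The core of the argument is that $M$ is a sorted matrix, so I would apply the sorted-matrix selection of~\cite{frederickson1984generalized} to find, in $O(1/\eps)$ time, a pair $(i'',j'')$ attaining the smallest non-negative entry $M[i'',j'']$. To certify this value I would use the pair $(i',j')$ for which $\sigma(i',j')$ is the minimal grid slab enclosing $\sigma_h^*$: under the assumption $w(\sigma_h^*)\ge w(\sigma_t^*)$ the chain $d(\ell_{i'},\ell_{j'})\ge w(\sigma_h^*)\ge w(\sigma_t^*)\ge width_\theta(S\setminus\sigma(i',j'))$ shows $M[i',j']$ is a well-defined non-negative entry, and Lemma~\ref{lem:two_fixed_gridlines} bounds it by $(1+\eps)w^*$. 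Moreover, every non-negative entry is the maximum width of a genuinely feasible pair (the horizontal slab $\sigma(i,j)$ together with the minimum-width $\theta$-slab covering the strictly-outside points), so its value is at least $w^*$. Hence $w^*\le M[i'',j'']\le(1+\eps)w^*$, and the returned pair $\bigl(\sigma(i'',j''),\,\sigma'\bigr)$, with $\sigma'$ the minimum $\theta$-slab enclosing $S\setminus\sigma(i'',j'')$, is a $(1+\eps)$-approximate solution.

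Finally I would discharge the assumption $w(\sigma_h^*)\ge w(\sigma_t^*)$ by repeating the entire routine in the coordinate system rotated by $\theta$, which swaps the roles of the two orientations, and outputting the best result over all candidate pairs of $G$ in both runs. Each candidate costs $O(n)$ for the constant-factor approximation, the candidate set, and the preprocessing, plus $O(1/\eps)$ for the matrix search; with only constantly many candidates this yields total time $O(n+1/\eps)$. I expect the only delicate point to be the bookkeeping of these reductions — confirming that across the constant number of candidate pairs and the two orientation roles the correct configuration is always examined — rather than any single estimate, since the approximation guarantee itself follows immediately from Lemma~\ref{lem:two_fixed_gridlines}.
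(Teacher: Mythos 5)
Your proposal is correct and follows essentially the same route as the paper: the constant-factor estimate $\widetilde w$ from Lemma~\ref{lem:two_fixed_const_apx}, the candidate pairs from Lemma~\ref{lem:two_fixed_extreme_pts}, the grid $L$ with the preprocessing of extreme points $a_i,a_i',b_j,b_j'$, the sorted matrix $M$ searched for its smallest non-negative entry in $O(1/\eps)$ time, the certification via Lemma~\ref{lem:two_fixed_gridlines} under the assumption $w(\sigma_h^*)\ge w(\sigma_t^*)$, and the rotation by $\theta$ to discharge that assumption. The only cosmetic difference is that you take the sortedness of $M$ as given (the paper proves it in a separate lemma) and you make explicit the easy observation that every non-negative entry is the maximum width of a feasible pair, which the paper leaves implicit.
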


\section{Parallel Two-Line-Center}
\label{sec:parallel}

% (Section Intro)
% Given a set $S$ of $n$ points in the plane, we study the problem of finding a pair of parallel slabs whose union covers $S$ while minimizing the maximum of their widths. 
% In this section, we present a $(1+\eps)$-approximation algorithm for this problem that runs in $O(n/\eps)$ time. 

Throughout this section, let $\Sigma^* = (\sigma_1^*, \sigma_2^* )$ be an optimal pair of slabs for the parallel-two-line-center problem.
Let $w^* = \max \{w(\sigma_1^*), w(\sigma_2^*) \}$.
Recall that $S$ denotes the input set of $n$ points in the plane. 

% (Ruling out w^* = 0) 
% We first rule out the case where $w^* = 0$ in $O(n)$ time. 
% Choose two arbitrary points $p$ and $q$ in $S$, and find a point $r \in S$ that does not lie on $\ell_{pq}$. 
% If no such point exists, then all points of $S$ lie on $\ell_{pq}$, which implies $w^* = 0$.
% Assume such a point $r$ exists. 
% Let $\theta_{pq}$, $\theta_{pr}$, and $\theta_{qr}$ be the orientations of $\ell_{pq}$, $\ell_{pr}$, and $\ell_{qr}$, respectively. 
% If $w^* = 0$, then the orientation of an optimal pair must be one of 
% $\theta_{pq}$, $\theta_{pr}$, or $\theta_{qr}$.
% We test all three cases. 
% For example, to test whether there exists an optimal pair of slabs (of width $0$) with orientation $\theta_{pq}$, we check whether all points of $S$ not lying on $\ell_{pq}$ lie on the line of orientation $\theta_{pq}$ passing through $r$. 
% We test the orientations $\theta_{pr}$ and $\theta_{qr}$ in a similar manner. 
% This entire procedure takes linear time.
% From now on, we assume that $w^* > 0$.

\textbf{Notation} 
Let $\rho$ be a fixed constant with $1/2\le \rho < 1$. 
Given a pair of interior-disjoint parallel slabs $\Sigma = (\sigma_1, \sigma_2)$, 
let $W(\Sigma)$ denote the largest distance between a boundary of $\sigma_1$ and a boundary of $\sigma_2$.
Let $g(\Sigma)$ denote the smallest distance between a boundary of $\sigma_1$ and a boundary of $\sigma_2$.
See Figure~\ref{fig:gap_ratio}.

\begin{definition}[Gap-ratio]
    Given a pair of interior-disjoint parallel slabs $\Sigma = (\sigma_1, \sigma_2)$, the \emph{gap-ratio} of $\Sigma$ is defined as $g(\Sigma)/ W(\Sigma)$. 
    If $W(\Sigma)= 0$, the gap-ratio is not defined.
\end{definition}

\begin{figure}[ht]
\centering
\includegraphics[width=0.32 \textwidth]{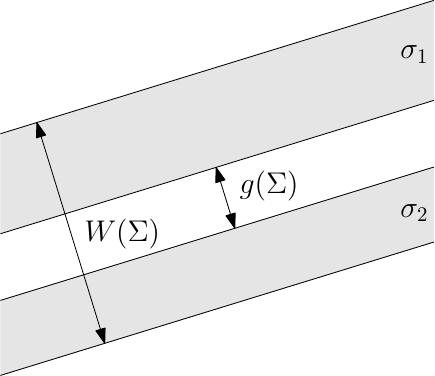}
\caption{For a pair of interior-disjoint parallel slabs $\Sigma = (\sigma_1, \sigma_2)$, we denote by $W(\Sigma)$ and $g(\Sigma)$ the largest and smallest distances, respectively, between a boundary of $\sigma_1$ and a boundary of $\sigma_2$.}
\label{fig:gap_ratio}
\end{figure}

To handle this problem, we consider two cases: whether the gap-ratio of $\Sigma^*$ is at most $\rho$, or not. 
In the following, we present two different algorithms, one for each case. 
In Sections~\ref{sec:small_gap_ratio} and~\ref{sec:large_gap_ratio}, we prove the corresponding theorems.

\begin{theorem}[Small gap-ratio]
    \label{thm:small_gap_ratio}
    Assuming that there exists an optimal pair $\Sigma^* = (\sigma_1^*, \sigma_2^*)$ for the parallel two-line-center problem whose gap ratio is at most $\rho$, 
    we can compute a pair of parallel slabs such that their union covers $S$ and the maximum of their widths is at most $(1+\eps)w^*$ in $O(n/\eps)$ time. 
\end{theorem}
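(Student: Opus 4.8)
The plan is to reduce the problem to guessing a near-optimal common orientation for the two parallel slabs and then, for each candidate orientation, solving the resulting one-dimensional cover problem. First I would extract the consequences of the small-gap-ratio assumption. Writing $w_1=w(\sigma_1^*)$, $w_2=w(\sigma_2^*)$ and letting $g$ be the gap of $\Sigma^*$, projecting onto the direction orthogonal to the slabs gives $W(\Sigma^*)=w_1+g+w_2$, so $w^*\le W(\Sigma^*)$. The hypothesis $g\le \rho\,W(\Sigma^*)$ yields $W(\Sigma^*)(1-\rho)\le w_1+w_2\le 2w^*$, hence $w^*\le W(\Sigma^*)\le \tfrac{2}{1-\rho}w^*$. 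Consequently the single slab $\sigma^*$ of orientation $\theta^*$ (the common orientation of $\Sigma^*$) and width $W(\Sigma^*)$ contains $\sigma_1^*\cup\sigma_2^*\supseteq S$, so $width(S)\le W(\Sigma^*)$; covering $S$ by its minimum-width slab together with a parallel degenerate slab shows $w^*\le width(S)$. Thus $w^*$, $width(S)$ and $W(\Sigma^*)$ are all within constant factors of one another, the constants depending only on the fixed $\rho$.

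Next I would manufacture a constant-factor estimate of $W(\Sigma^*)$ without knowing $\theta^*$. Running the streaming width approximation of Theorem~\ref{thm:6apx} on $S$ produces, in $O(n)$ time, a value $\widehat w$ with $width(S)\le \widehat w\le 6\,width(S)$, and Observation~\ref{obs:2apx_diam} gives in $O(n)$ time a pair $p,q\in S$ with $d(p,q)\ge diam(S)/2$. Set $\widetilde w=\tfrac{2}{1-\rho}\widehat w$ and $c=\tfrac{12}{1-\rho}$. From the chain above one checks $W(\Sigma^*)\le \widetilde w\le c\,W(\Sigma^*)$, while $d(p,q)\ge diam(S)/4$, $p,q\in\sigma^*$ and $S\subset\sigma^*$. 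Hence $P=S$, $\sigma=\sigma^*$, the pair $p,q$, and the values $\widetilde w$, $c$ satisfy the four hypotheses of Lemma~\ref{lem:angles}, which in $O(1/\eps)$ time produces a set $\Gamma$ of $O(1/\eps)$ orientations.

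The heart of the argument is to show that one orientation $\gamma\in\Gamma$ is good for the two-slab cover, not merely for a single enclosing slab. Let $S_i=S\cap\sigma_i^*$, so $width_{\theta^*}(S_i)\le w_i\le w^*$ and $diam(S_i)\le diam(S)\le 2d(p,q)$. For the orientation $\gamma\in\Gamma$ nearest $\theta^*$ from below, the proof of Lemma~\ref{lem:angles} bounds the growth of the minimum enclosing slab of any point set of diameter at most $diam(S)$ under the rotation $\theta^*\mapsto\gamma$ by an absolute quantity, depending only on $d(p,q)$ and the rotation angle, that the spacing of $\Gamma$ forces to be at most $\eps'\,W(\Sigma^*)$, where I take $\eps'=\tfrac{1-\rho}{2}\eps$. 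Since $\sigma_1^*$ and $\sigma_2^*$ share the orientation $\theta^*$, this same $\gamma$ and the same bound apply to each subset $S_i$ in place of $S$, giving $width_\gamma(S_i)\le w_i+\eps'\,W(\Sigma^*)\le w^*+\eps'\cdot \tfrac{2}{1-\rho}w^*=(1+\eps)w^*$. Therefore the parallel pair $(\sigma_\gamma(S_1),\sigma_\gamma(S_2))$ covers $S$ with maximum width at most $(1+\eps)w^*$, so the \emph{optimal} cover of the projection of $S$ onto the normal of $\gamma$ by two intervals has cost at most $(1+\eps)w^*$.

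Finally, for each $\gamma\in\Gamma$ I would project $S$ onto the direction orthogonal to $\gamma$ and solve, in $O(n)$ time, the one-dimensional problem of covering the projected points by two intervals minimizing the larger length; because the cost of a split is non-decreasing on one side of the split and non-increasing on the other, the optimal split is located by a linear-time prune-and-search over the projected coordinates, with no sorting. Reporting the two slabs of orientation $\gamma$ that realize the minimum cost over all $\gamma\in\Gamma$ yields a pair of parallel slabs whose union covers $S$ with maximum width at most $(1+\eps)w^*$, in total time $O(n/\eps)$. The main obstacle is the third paragraph: the single perturbation bound of Lemma~\ref{lem:angles} must be reused simultaneously for the two \emph{unknown} subsets $S_1,S_2$ at one common orientation $\gamma$, which works precisely because that bound depends only on $diam(S)$ and the rotation angle and because the small-gap-ratio hypothesis guarantees $W(\Sigma^*)=\Theta(w^*)$; a secondary technical point is justifying that the one-dimensional two-interval cover is computed exactly in linear time.
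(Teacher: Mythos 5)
Your proposal is correct and follows essentially the same route as the paper: a constant-factor scale estimate obtained from a width approximation combined with the gap-ratio bound, an $O(1/\eps)$ grid of candidate orientations around $\ell_{pq}$, and an exact $O(n)$ solve for each candidate orientation (the paper's Observation~\ref{obs:par_fixed_ori} plays the role of your one-dimensional two-interval step). The per-subset rotation bound that you extract by re-opening the proof of Lemma~\ref{lem:angles} and applying it to the two unknown subsets $S\cap\sigma_1^*$ and $S\cap\sigma_2^*$ is precisely what the paper packages as its standalone Lemma~\ref{lem:parallel_angles}, which it states separately exactly because the statement of Lemma~\ref{lem:angles} alone does not cover a pair of parallel slabs.
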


\begin{theorem}[Large gap-ratio]
    \label{thm:large_gap_ratio}
    Assuming that there exists an optimal pair $\Sigma^* = (\sigma_1^*, \sigma_2^*)$ for the parallel two-line-center problem whose gap ratio is at least $\rho$, 
    we can compute a pair of parallel slabs such that their union covers $S$ and the maximum of their widths is at most $(1+\eps)w^*$ in $O(n+ \mu\log \mu)$ time, where $\mu = \min \{n, 1/\eps^2\}$. 
    If the assumption on the gap-ratio is not satisfied, the algorithm may return NO.
\end{theorem}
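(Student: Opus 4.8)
The plan is to reduce to a small instance, exploit that a pair of parallel slabs shares a single orientation, and use the large gap to pin down the partition of the points into the two clusters exactly. First I would compute an $\eps$-certificate $Q$ of $S$ by Theorem~\ref{thm:computing_eps_certificate} in $O(n)$ time, so that $|Q| = O(1/\eps^2)$ and hence $|Q| = O(\mu)$. A pair of parallel slabs is in particular a pair of equal-width slabs (expanding the narrower one keeps them parallel), so it suffices to solve the parallel problem on $Q$ and take $\eps$-expansions: an optimal equal-width parallel pair for $Q$ has maximum width at most $w^*$ since $Q\subseteq S$, and its $\eps$-expansion covers $S$ with the same common orientation and maximum width at most $(1+\eps)w^*$. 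Thus the task becomes: in $O(\mu\log\mu)$ time, find a near-optimal parallel pair for $Q$ under the large-gap assumption, or report NO.

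The key structural observation is that both slabs of $\Sigma^*$ share one orientation $\phi^*$, so it suffices to determine $\phi^*$, equivalently the partition of the points into the two clusters. Writing $W = W(\Sigma^*)$ and $g = g(\Sigma^*)$, the assumption $g \ge \rho W$ gives $w(\sigma_1^*) + w(\sigma_2^*) = W - g \le (1-\rho)W$, whence $g \ge \tfrac{\rho}{1-\rho}\bigl(w(\sigma_1^*)+w(\sigma_2^*)\bigr) \ge \tfrac{\rho}{1-\rho}\,w^*$. I would fix $\rho$ to be a constant close to $1$, forcing $g \ge C\,w^*$ for a large constant $C$. Two consequences drive the algorithm. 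First, in the projection of the points onto the direction orthogonal to $\phi^*$, the separating gap has length $g > W/2$, so it is the unique gap longer than half the spread; hence the optimal partition is exactly ``split at the largest gap,'' and this partition is stable under small perturbations of $\phi^*$. Second, once the correct partition $(Q_1, Q_2)$ of $Q$ is known, the optimum equals $\min_{\phi}\max\{width_\phi(Q_1),\,width_\phi(Q_2)\}$; I would compute each $width_\phi(Q_i)$ as a piecewise-sinusoidal function of $\phi$ from $conv(Q_i)$ by rotating calipers, merge the two functions, and minimize the pointwise maximum, all in $O(\mu\log\mu)$ time.

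It remains to recover the partition, and here I would use the anchor-pair machinery. By Lemma~\ref{lem:anchor_pair}, one of at most $11$ computable pairs is an anchor pair $(p,q)$ of $\Sigma^*$; without loss of generality $p,q\in\sigma_1^*$ and $d(p,q)\ge diam(S_1)/4$, where $S_1 = \sigma_1^*\cap S$ (the clusters are disjoint since $g>0$). Running the $O(n)$-time routine behind Lemma~\ref{lem:10apx_existence} and Theorem~\ref{thm:10apx} on $(p,q)$ yields a slab $\tau = \sigma(p,q;r)$ that covers $S_1$, has width at most $10w^*$, and whose orientation differs from $\phi^*$ by a small angle. Since the half-width of $\tau$ is at most $5w^*$ while the other cluster lies beyond the gap of length $g \ge C\,w^*$, choosing $C$ large makes $\tau$ unable to reach $S_2$, so $S\cap\tau = S_1$ and we recover the exact partition. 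I would run this for all $11$ anchor candidates, optimize the common orientation for each induced partition as above, keep the pair of smallest maximum width, $\eps$-expand it, and finally verify that the returned pair has gap-ratio at least $\rho$; if none does, report NO.

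The hard part will be guaranteeing that the recovered partition is exactly correct. The slab $\tau$ is only approximately aligned with $\phi^*$, and its tilt, accumulated over the extent of the point set along $\phi^*$, can push $\tau$ across the gap toward $S_2$ precisely when $S_2$ is far more elongated along $\phi^*$ than $S_1$; the tilt bound $\sin\angle \le w^*/d(p,q)$ is only useful when $d(p,q)$ is comparable to the larger cluster's extent. Thus the delicate point is to ensure that a usable anchor pair lies in the more elongated cluster, so that $d(p,q)$ is large enough to control the tilt — something Lemma~\ref{lem:anchor_pair} does not directly promise. I expect to handle this by exploiting the specific candidates produced in the proof of Lemma~\ref{lem:anchor_pair} (the near-diameter pair and the two half-set diameter pairs) and, once a near-optimal orientation is in hand, adding a single re-split-at-the-largest-gap refinement to correct any residual misclassification. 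Pinning down these quantitative angle-and-reach inequalities, and confirming that the final verification makes the NO-branch sound, is the main obstacle.
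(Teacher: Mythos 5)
Your opening reduction --- compute an $\eps$-certificate $Q$ by Theorem~\ref{thm:computing_eps_certificate}, solve the parallel problem on $Q$, equalize the two widths, and take $\eps$-expansions --- matches the paper exactly, and the time accounting $O(n+\mu\log\mu)$ is right. The divergence, and the gap, lies in how the instance on $Q$ is solved. The paper does not build a solver from scratch: it invokes the exact $O(n\log n)$ algorithm of~\cite{chung2025parallel} for the large-gap-ratio case (Theorem~\ref{thm:large_gap_prev}) as a black box, and its only new ingredient is Lemma~\ref{lem:Q_nonempty} --- a short contradiction argument using the certificate's expansion property and $\rho\ge 1/2$ --- which guarantees that each slab of $\Sigma^*$ contains a point of $Q$, so that the black box's precondition holds and its output on $Q$ has maximum width at most $w^*$.

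Your replacement of that citation by a hand-built partition-recovery scheme has two concrete holes, the second of which you flag yourself. First, you assert that the routine behind Theorem~\ref{thm:10apx} ``yields a slab $\tau=\sigma(p,q;r)$ that covers $S_1$''; it does not. Lemma~\ref{lem:10apx_existence} is only an existence statement: the binary search of Theorem~\ref{thm:10apx} returns an index $r$ with $\max\{f(r),g(r)\}\le 10w^*$, i.e., width bounds on both slabs, and the returned slab $\sigma(p,q;p_r)$ may split $S_1$, covering only part of it while the leftover $S\setminus\sigma(p,q;p_r)$ still has small width. So $S\cap\tau=S_1$ cannot be concluded, and the partition recovery fails at its first step. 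Second, even for a slab that does cover $S_1$, the tilt bound is $\sin\angle\le 4w^*/diam(S_1)$, so over the extent of the other cluster the centerline $\ell_{pq}$ drifts by up to $diam(S)\cdot 4w^*/diam(S_1)$, which is unbounded relative to $g\ge C\,w^*$ whenever $S_2$ is far more elongated than $S_1$; no choice of the constant $\rho$ (equivalently $C$) repairs this, and taking $\rho$ close to $1$ anyway deviates from the statement, where $\rho$ is the given constant in $[1/2,1)$ shared with Theorem~\ref{thm:small_gap_ratio}. The proposed fixes --- exploiting the specific candidates in Lemma~\ref{lem:anchor_pair}, a re-split refinement, and the soundness of the final gap-ratio verification that underlies the NO branch --- are precisely the unproven content, so this is a genuine gap rather than a routine omission; without an ingredient playing the role of Theorem~\ref{thm:large_gap_prev}, the proposal does not close.
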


We first prove  Theorem~\ref{thm:parallel} by combining Theorem~\ref{thm:small_gap_ratio} and Theorem~\ref{thm:large_gap_ratio} with an $\eps$-certificate (see Section~\ref{sec:eps_certificate}).

\begin{theorem}
    \label{thm:parallel}
    Given a set $S$ of $n$ points in the plane, we can find a pair of parallel slabs such that their union covers $S$ and the maximum of their widths is at most $(1+\eps)w^*$~in $O(n + \mu /\eps)$ time, where $\mu = \min \{n, 1/\eps^2\}$.  
\end{theorem}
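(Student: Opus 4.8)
The plan is to combine the two regime-specific algorithms (Theorems~\ref{thm:small_gap_ratio} and~\ref{thm:large_gap_ratio}) with the $\eps$-certificate machinery of Section~\ref{sec:eps_certificate}, so that the overall running time collapses to $O(n + \mu/\eps)$. First I would invoke Theorem~\ref{thm:computing_eps_certificate} to compute, in $O(n)$ time, an $\eps$-certificate $Q \subseteq S$ with $|Q| = O(1/\eps^2)$. The key point of an $\eps$-certificate is that any pair of equal-width slabs covering $Q$, after $\eps$-expansion, covers all of $S$; this lets us replace $S$ by the much smaller set $Q$ whenever the running time of a subroutine depends polynomially on the point count. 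Note, however, that the $\eps$-certificate as defined guarantees coverage only for pairs of slabs with the \emph{same width}, so I must be careful to work with equal-width pairs throughout and expand the smaller slab when necessary before applying the certificate property.

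Next I would dispatch on the (unknown) gap-ratio of the optimal pair $\Sigma^*$. Since we do not know in advance whether the gap-ratio of $\Sigma^*$ is at most $\rho$ or at least $\rho$, I would simply run both algorithms and take the better of the two resulting solutions. For the small-gap-ratio case, Theorem~\ref{thm:small_gap_ratio} already runs in $O(n/\eps)$ time, which is within the target bound; I would run it directly on $S$ (or, if preferable, observe that its dependence is already linear in $n$ and does not need the certificate). For the large-gap-ratio case, Theorem~\ref{thm:large_gap_ratio} runs in $O(n + \mu\log\mu)$ time, which I would apply after reducing to the certificate: run it on $Q$ to obtain an equal-width pair $(\sigma_1, \sigma_2)$ with $w(\sigma_1)=w(\sigma_2) \le w^*$ (since $Q \subseteq S$ implies the optimum on $Q$ is no larger than $w^*$), then take $\eps$-expansions $(\sigma_1', \sigma_2')$; by the certificate property their union covers $S$, and by the definition of $\eps$-expansion $w(\sigma_i') \le (1+\eps)w^*$, yielding a valid $(1+\eps)$-approximation. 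The output is whichever of the two candidate solutions has smaller maximum width.

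The running-time bookkeeping is where I would concentrate care. The certificate computation is $O(n)$; the small-gap-ratio branch is $O(n/\eps)$; the large-gap-ratio branch on the reduced instance $Q$ of size $O(1/\eps^2)$ costs $O(n + T(|Q|))$ where $T$ is the running time from Theorem~\ref{thm:large_gap_ratio} with the point count set to $|Q| = O(\mu)$. Since $\mu = \min\{n, 1/\eps^2\}$, the term $\mu\log\mu$ is dominated by $\mu/\eps$ (as $\log\mu = O(\log(1/\eps)) = O(1/\eps)$), and the small-gap-ratio term $O(n/\eps)$ must be reconciled with the target $O(n + \mu/\eps)$; this requires verifying that when $n > 1/\eps^2$ the small-gap-ratio algorithm is itself run on the certificate (or on a linear-size proxy) rather than on all of $S$, so that the $n/\eps$ factor becomes $\mu/\eps$. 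The main obstacle, therefore, is not any single geometric argument but the careful composition: ensuring each subroutine is invoked on the correctly-sized input so that every $\eps^{-1}$ factor is multiplied by $\mu$ rather than $n$, and confirming that the two-sided expansion (certificate expansion composed with whatever internal expansion each branch already performs) still yields an overall $(1 + O(\eps))$ factor, which one absorbs into $(1+\eps)$ by rescaling $\eps$ by a constant at the outset.
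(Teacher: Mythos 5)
Your proposal is correct and follows essentially the same route as the paper: compute the $\eps$-certificate $Q$ in $O(n)$ time, run \emph{both} the small-gap-ratio and large-gap-ratio algorithms on $Q$ (as your bookkeeping paragraph correctly concludes one must, superseding your earlier suggestion to run the small-gap branch on all of $S$), take the better of the two equal-width pairs, and $\eps$-expand it. The paper handles the compounded factor exactly as you indicate, bounding the result by $(1+\eps)^2 w^*(Q) \le (1+\eps)^2 w^* = (1+O(\eps))w^*$ and absorbing the constant into $\eps$; your only imprecision --- claiming the large-gap branch returns a pair of width at most $w^*$ rather than $(1+\eps)w^*(Q)$ --- is harmless for precisely this reason.
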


\begin{proof}
    We first compute an $\eps$-certificate $Q\subseteq S$ of $S$. 
By Theorem~\ref{thm:computing_eps_certificate}, this takes $O(n)$ time, and $|Q| = O(\mu)$ where $\mu = \min \{n, 1/\eps^2\}$. 
Let $w^*(Q)$ denote the maximum width of an optimal solution for the parallel two-line-center problem on $Q$. It is clear that $w^*(Q) \le w^*$ since~$Q\subseteq S$. 

We run the algorithms of Theorem~\ref{thm:small_gap_ratio} and Theorem~\ref{thm:large_gap_ratio} on $Q$. 
Observe that at least one of them returns a pair of parallel slabs whose union covers $Q$ and whose maximum width is at most $(1+\eps)w^*(Q)$.
Among the (at most) two pairs, let $(\sigma_1, \sigma_2)$ be the one whose maximum width is smaller.
Without loss of generality, assume that $\sigma_1$ and $\sigma_2$ have the same width (otherwise, we expand the smaller one). Therefore, $w(\sigma_1) = w(\sigma_2) \le (1+\eps)w^*(Q)$.

Let $(\sigma_1', \sigma_2')$ be the pair of slabs obtained by taking the $\eps$-expansions of $\sigma_1$ and $\sigma_2$, respectively.  
By the definition of an $\eps$-expansion, we have $w(\sigma_1') \le (1+\eps)w(\sigma_1) \le (1+\eps)^2w^*(Q) \le (1+\eps)^2w^*$. 
Therefore, $w(\sigma_1')\le (1+O(\eps)) w^*$, and the same bound holds for $w(\sigma_2')$.
Moreover, by the definition of an $\eps$-certificate, $\sigma_1' \cup \sigma_2'$ covers $S$.  
Therefore, $(\sigma_1', \sigma_2')$ forms a $(1+O(\eps))$-approximate solution.

The total running time is $O(n + \mu/\eps + \mu \log \mu)$ time, which is $O(n + \mu /\eps)$. 
This can be expressed as both $O(n/\eps)$ and $O(n + 1/\eps^3)$.
\end{proof}

\subsection{Small Gap-Ratio (proof of Theorem~\ref{thm:small_gap_ratio})}
\label{sec:small_gap_ratio}
In this section, we assume that there exists an optimal pair $\Sigma^* = (\sigma_1^*, \sigma_2^*)$ for the parallel two-line-center problem whose gap ratio is at most $\rho$. 
Recall that $w^* = \max \{w(\sigma_1^*), w(\sigma_2^*) \}$.
We present an $O(n/\eps)$-time algorithm that outputs a pair of parallel slabs such that their union covers $S$ and the maximum of their widths is at most $(1+\eps)w^*$ under the assumption.

\begin{observation}
    \label{obs:gap_ratio}
    For a pair $\Sigma = (\sigma_1, \sigma_2)$ of parallel slabs  whose gap-ratio is at most $\rho$,
    it holds that $g(\Sigma) \le \frac{2\rho}{1-\rho} \cdot w_{\max}$ where $w_{\max} = \max \{w(\sigma_1), w(\sigma_2)\}$. 

\begin{proof}
    Let $g = g(\Sigma)$ and $W = W(\Sigma)$.
    By definition, we have $W = w(\sigma_1) + w(\sigma_2) + g \le 2w_{\max} + g$.
    Since the gap-ratio of $\Sigma$ is at most $\rho$, it holds that $\rho \ge g/W \ge g/(2w_{\max} + g)$.
    Therefore, $2\rho\cdot w_{\max} \ge (1-\rho)g$, which implies $g \le \frac{2\rho}{1-\rho}\cdot w_{\max}$.   
\end{proof}
    
\end{observation}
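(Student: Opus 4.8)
The plan is to reduce the statement to a one-dimensional computation along the common normal direction of the two parallel slabs, followed by an elementary inequality. Since $\sigma_1$ and $\sigma_2$ are parallel and interior-disjoint, all four of their bounding lines are orthogonal to a single direction, and the whole configuration is linearly ordered along that direction (one slab lies entirely on one side of the other). First I would establish the key identity $W(\Sigma) = w(\sigma_1) + w(\sigma_2) + g(\Sigma)$: among the four distances between a boundary of $\sigma_1$ and a boundary of $\sigma_2$, the smallest is realized by the two innermost (facing) boundaries and equals the gap $g(\Sigma)$, while the largest is realized by the two outermost boundaries and therefore equals $g(\Sigma)$ plus the two widths.

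With this identity in hand, the remaining work is routine. Bounding $w(\sigma_1) + w(\sigma_2) \le 2w_{\max}$ gives $W(\Sigma) \le 2w_{\max} + g(\Sigma)$. Combining with the hypothesis $g(\Sigma)/W(\Sigma) \le \rho$ yields $g(\Sigma) \le \rho\, W(\Sigma) \le \rho(2w_{\max} + g(\Sigma))$, which rearranges to $(1-\rho)\,g(\Sigma) \le 2\rho\, w_{\max}$. Since $\rho < 1$ guarantees $1-\rho > 0$, I may divide to obtain $g(\Sigma) \le \frac{2\rho}{1-\rho}\,w_{\max}$, as claimed.

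I do not expect a genuine obstacle: the only step needing care is the geometric identity $W(\Sigma) = w(\sigma_1) + w(\sigma_2) + g(\Sigma)$, which becomes transparent once one records that interior-disjoint parallel slabs are linearly separated along their normal, so the two facing boundaries bound the gap and the two extreme boundaries bound the total extent. I would also note in passing that the gap-ratio is assumed well-defined here (so $W(\Sigma) > 0$), and that the standing assumption $1/2 \le \rho < 1$ keeps the denominator $1-\rho$ strictly positive, so the derived bound is finite and meaningful.
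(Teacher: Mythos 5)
Your proof is correct and follows essentially the same route as the paper's: both use the identity $W(\Sigma) = w(\sigma_1) + w(\sigma_2) + g(\Sigma)$, bound the widths by $2w_{\max}$, and rearrange the gap-ratio inequality $g/W \le \rho$ to get the claim. Your explicit justification of the identity and of $1-\rho > 0$ simply spells out what the paper takes as immediate from the definitions.
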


\begin{lemma}
\label{lem:small_gap_const}
We can compute $\widetilde w$ such that $w^* \le \widetilde w \le (2 + \frac{2\rho}{1-\rho} ) w^* $ in $O(n)$ time.
\end{lemma}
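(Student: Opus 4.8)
The plan is to show that the plain width $width(S)$ of the whole point set already sits inside the target interval $[\,w^*,\ (2+\tfrac{2\rho}{1-\rho})w^*\,]$ (note $2+\tfrac{2\rho}{1-\rho}=\tfrac{2}{1-\rho}$), and then to replace the exact width by an $O(n)$-time over-estimate that still respects the upper constant. First I would prove the lower bound $w^*\le width(S)$. Let $\theta_0$ realize $width(S)$, i.e. $w(\sigma_{\theta_0}(S))=width(S)$. Then $\sigma_{\theta_0}(S)$ together with a degenerate second slab (a line) parallel to it is a valid pair of parallel slabs whose union contains all of $S$, and its maximum width equals $width(S)$. Since $w^*$ is the optimum over all such pairs, $w^*\le width(S)$.

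For the upper bound, let $\theta^*$ be the common orientation of the optimal pair $\Sigma^*=(\sigma_1^*,\sigma_2^*)$ and let $m$ be the direction normal to $\theta^*$. Because $S\subseteq\sigma_1^*\cup\sigma_2^*$ and the extent of this union along $m$ is exactly $W(\Sigma^*)$, the extent of $S$ along $m$ is at most $W(\Sigma^*)$; that extent is precisely $width_{\theta^*}(S)$, so $width(S)\le width_{\theta^*}(S)\le W(\Sigma^*)$. Writing $W(\Sigma^*)=w(\sigma_1^*)+w(\sigma_2^*)+g(\Sigma^*)\le 2w^*+g(\Sigma^*)$ and bounding $g(\Sigma^*)\le \tfrac{2\rho}{1-\rho}w^*$ via Observation~\ref{obs:gap_ratio} (with $w_{\max}=w^*$) gives $width(S)\le(2+\tfrac{2\rho}{1-\rho})w^*$. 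Combined with the previous paragraph, $width(S)$ lies in the claimed interval, so $\widetilde w=width(S)$ would satisfy the statement.

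The main obstacle is the running time, and it is genuinely the delicate point: an exact width computation needs $\Omega(n\log n)$, whereas the lemma demands $O(n)$, and a generic constant-factor width approximation is useless here, since scaling the upper endpoint by any factor $c>1$ destroys the exact constant $2+\tfrac{2\rho}{1-\rho}$. The resolution I would pursue is a diameter-anchored over-estimate. Using Observation~\ref{obs:2apx_diam} I would compute, in $O(n)$ time, a pair $(p,q)$ with $d(p,q)\ge diam(S)/2$, and take $\widetilde w$ to be the extent of $S$ in the direction of $\overline{pq}$, i.e.\ the width of the thinnest slab parallel to $\ell_{pq}$ that contains $S$; this is obtained by one linear scan. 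Any slab width is at least $width(S)\ge w^*$, so the lower bound $\widetilde w\ge w^*$ comes for free. The crux—the step I expect to need the most care—is the matching upper bound $\widetilde w\le(2+\tfrac{2\rho}{1-\rho})w^*$: one must argue that the extent perpendicular to a (near-)diameter cannot exceed the $W(\Sigma^*)$-scale, which I would handle by a case split on whether $S$ is ``thin'' (so this extent is $\Theta(width(S))$, already in range) or ``fat'' (so $width(S)=\Theta(diam(S))$ dominates the perpendicular extent), in both cases routing the final estimate through Observation~\ref{obs:gap_ratio}. If this case analysis fails to yield the exact constant, the fallback is to accept a slightly larger universal constant in place of $2+\tfrac{2\rho}{1-\rho}$, which would not affect the asymptotics of the downstream algorithm in Theorem~\ref{thm:small_gap_ratio}.
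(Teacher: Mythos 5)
Your first two paragraphs are fine as far as they go: indeed $w^*\le width(S)\le W(\Sigma^*)\le(2+\frac{2\rho}{1-\rho})w^*$. But the proof collapses exactly at the step you flag as delicate, and the decisive wrong turn is the sentence dismissing a constant-factor width approximation as useless. That is precisely the tool the paper uses, and it preserves the exact constant because the lower bound can be strengthened by a factor of $2$: splitting the minimum-width slab enclosing $S$ along its center line yields a pair of parallel slabs of width $width(S)/2$ whose union covers $S$, hence $w^*\le width(S)/2$, not merely $w^*\le width(S)$. Now compute in $O(n)$ time a $2$-approximation $w'$ of the width~\cite{chan2006faster}, so that $width(S)\le w'\le 2\,width(S)$, and set $\widetilde w=w'/2$. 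Then $\widetilde w\ge width(S)/2\ge w^*$, and $\widetilde w\le width(S)\le W(\Sigma^*)=g(\Sigma^*)+w(\sigma_1^*)+w(\sigma_2^*)\le(2+\frac{2\rho}{1-\rho})w^*$ by Observation~\ref{obs:gap_ratio}. The factor-$2$ slack created by the halving argument exactly absorbs the approximation factor; nothing is destroyed. This halving observation is the idea your proposal is missing.

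The diameter-anchored substitute you propose instead provably cannot achieve the stated constant, so the step you expect to "need the most care" cannot be completed. From the guarantee $d(p,q)\ge diam(S)/2$ alone, the angle $\alpha$ between $\ell_{pq}$ and the optimal orientation satisfies only $|\sin\alpha|\le 2W(\Sigma^*)/diam(S)$, which gives $\widetilde w\le diam(S)\,|\sin\alpha|+W(\Sigma^*)\le 3\,W(\Sigma^*)\le\frac{6}{1-\rho}\,w^*$, a factor $3$ above the target $\frac{2}{1-\rho}\,w^*$, and this loss is realized. Take $\rho=1/2$ (so the target is $4w^*$) and $S=\{0,1,\ldots,D\}\times\{0,1,2,3\}$ with $D$ large and even: the optimum is $w^*=1$, attained by the horizontal slabs $0\le y\le 1$ and $2\le y\le 3$ with gap-ratio $1/3\le\rho$, yet Observation~\ref{obs:2apx_diam} may legitimately return $p=(D/2,0)$ and $q=(D,3)$ (a farthest point from $p$, and $d(p,q)\ge diam(S)/2$). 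The extent of $S$ orthogonal to $\ell_{pq}$ is at least the distance between the projections of $(0,3)$ and $(D,0)$ onto the normal of $\ell_{pq}$, namely $(9D/2)/\sqrt{D^2/4+9}$, which tends to $9>4$ as $D\to\infty$. Hence no thin/fat case analysis can rescue the exact constant, and your declared fallback of accepting a larger constant does not prove the lemma as stated (it would also force the constant $c$ hard-wired into condition (3) of Lemma~\ref{lem:parallel_angles} to be changed accordingly).
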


\begin{proof}
We first compute $w'$ such that $\width(S) \le w' \le 2\cdot \width(S)$ using a $2$-approximation algorithm for the width, 
which can be done in $O(n)$ time~\cite{chan2006faster}. 
Let $\widetilde w = w'/2$. 
We show that $\widetilde w$ satisfies the condition in the statement. Observe that $w^* \le \width(S)/2$ as a slab of $\width(S)$ enclosing $S$ directly gives us a pair of parallel slabs of width $\width(S)/2$ whose union covers $S$. 
Therefore, we have $w^* \le \width(S)/2 \le \widetilde w$. It holds that $\width(S) \le W(\Sigma^*) = g(\Sigma^*) + w(\sigma_1^*) + w(\sigma_2^*) \le g(\Sigma^*) + 2w^*$. Since $\widetilde w \le \width(S)$ and by Observation~\ref{obs:gap_ratio}, we obtain $\widetilde w \le (2 + \frac{2\rho}{1-\rho}) w^*$.
\end{proof}

% \begin{lemma}
%     \label{lem:diff_angle}
%     Let $\Sigma = (\sigma_1, \sigma_2)$ be a pair of parallel slabs whose gap-ratio is at most $\rho$.
%     Let $p$ and $q$ be two points in $\mathbb R^2$ where $p, q \in \sigma_1 \cup \sigma_2$.
%     Let $\alpha$ be the difference between the orientation of $\ell_{pq}$ and that of $\sigma_1$ (equivalently, $\sigma_2$). 
%     Without loss of generality, we assume $ 0\le \alpha \le \pi/2$. 
%     Then, $\sin \alpha \le (2 + \frac{2\rho}{1-\rho})\cdot  w_{\max}/d(p, q)$ where $w_{\max }= \max \{w(\sigma_1), w(\sigma_2) \}$. 
% \end{lemma}

% \begin{proof}
%     Observe that $\sin \alpha \le W(\Sigma)/ d(p, q)$. (Even when $\alpha = \pi/2$.)
%     $W(\Sigma) = g(\Sigma) + w(\sigma_1) + w(\sigma_2) \le g(\Sigma) + 2w_{\max} $. 
%     By Lemma~\ref{obs:gap_ratio}, it holds that $W(\Sigma) \le ( \frac{2\rho}{1-\rho} + 2) w_{\max} $.
%     Therefore, $\sin \alpha \le ( \frac{2\rho}{1-\rho} + 2) \cdot w_{\max}/d(p, q)$.
% \end{proof}

\begin{lemma}
    \label{lem:parallel_angles}
    Let $p$ and $q$ be two points in $\mathbb R^2$ and let $\widetilde w > 0$ and $\eps> 0$ be real numbers.
    In $O(1/\eps)$ time, we can compute a set $\Gamma$ of $O(1/\eps)$ orientations, such that the following holds:
    
    For every finite set $P$ of points in $\mathbb R^2$ and for every pair of parallel slabs $\Sigma= (\sigma_1, \sigma_2)$, if 
    \begin{enumerate}[(1)]
        \item The gap-ratio of $\Sigma$ is at most $\rho$, 
        \item $P\subset \sigma_1 \cup \sigma_2$, 
        \item $w_{\max} \le \widetilde w \le c\cdot w_{\max}$ where $w_{\max} = \max \{w(\sigma_1), w(\sigma_2) \}$ and $c = 2 + \frac{2\rho}{1-\rho}$,
        \item $p, q \in \sigma_1 \cup \sigma_2$ and $d(p,q) \ge \diam(P)/2$,
    \end{enumerate}
    then there exists an orientation $\gamma \in \Gamma$ and  a pair of parallel slabs $\Sigma' = (\sigma_1', \sigma_2')$, such that
    \begin{enumerate}[(1)]
        \item $P \subset \sigma_1' \cup \sigma_2'$,
        \item $w'_{\max} \le (1+ \eps) \cdot w_{\max}$ where $w'_{\max} = \max \{w(\sigma'_1), w(\sigma'_2) \}$,
        \item both $\sigma'_1$ and $\sigma_2'$ have the orientation $\gamma$.
    \end{enumerate}
\end{lemma}

\begin{proof}
    We remark that the idea of this proof is based on the approach in~\cite{agarwalTwo}.
    Without loss of generality, assume that $p$ and $q$ lie on a horizontal line. 
    Let $D = d(p, q)$. 
    Let $\theta$ be the orientation with $0\le \theta \le \pi/2$ such that $\sin\theta = \min \{1, c \widetilde w/{D} \}$ for $c= 2 + \frac{2\rho}{1-\rho}$.
    Let $\delta =  c_\delta \eps$ for some constant $0 < c_\delta < 1$. 
    Let $\Gamma = \{\gamma_i = (i- \lceil \frac{1}{\delta}\rceil)\cdot \delta\theta  \mid 0 \le i \le 2\lceil \frac{1}{\delta}\rceil \}$.
    Observe that $\Gamma$ contains uniformly placed orientations in the range $[-\theta, \theta]$, and $|\Gamma| = O(1/\eps)$. (Here we slightly abuse the notation by redefining orientations to lie in $[-\pi/2, \pi/2]$ instead of $[0, \pi]$.)
    In the following, we prove that $\Gamma$ satisfies the conditions in the statement.

    Let $P$ be set of points in $\mathbb R^2$ and $\Sigma = (\sigma_1, \sigma_2)$ be a pair of parallel slabs, where $P$ and $\Sigma$ satisfy the four conditions in the statement.
    Let $\alpha$ be the orientation of the slabs in $\Sigma$. 
    For simplicity, assume that $0\le \alpha \le \pi/2$. The other case can be shown in a symmetric way. 
    Observe that $\sin \alpha \le W(\Sigma)/D$, even when $\alpha = \pi/2$ (see the right triangle having $\overline{pq}$ as one of its sides in Figure~\ref{fig:parallel_angle_lemma}).
    We have $W(\Sigma) = g(\Sigma) + w(\sigma_1) + w(\sigma_2) \le g(\Sigma) + 2w_{\max} $. 
    By Observation~\ref{obs:gap_ratio} and as $w_{\max} \le \widetilde w$, $W(\Sigma) \le ( 2 + \frac{2\rho}{1-\rho} ) w_{\max}  \le c \widetilde w$.
    Therefore,
    we have $\sin \alpha \le c \widetilde w /D$.
    Observe that if $ c \widetilde w/ D < 1$, then $\sin \theta =  c \widetilde w/ D$. If $ c \widetilde w/ D \ge 1$, then $\sin \theta =1$. In both cases, it holds that $\sin \alpha \le \sin \theta$. Therefore, we have $\alpha \le \theta$.

     Let $\gamma$ be the largest orientation in $\Gamma$ such that $\gamma \le \alpha$. 
     Since $ \alpha \le  \theta$, we have $\alpha - \gamma < \delta \theta$. 
     Let $R_1$ be the minimum bounding box of $P \cap \sigma_1$ with orientation $\alpha$. See Figure~\ref{fig:parallel_angle_lemma}.
     Denote by $o$ and $t$ the lowest and highest vertices of $R_1$, respectively. 
    Now consider a slab $\sigma_1'$ of orientation $\gamma$ whose two bounding lines pass through $o$ and $t$. 
    Clearly, $P\cap \sigma_1 \subset \sigma'_1$. 
    We next show that $w(\sigma_1') \le (1+\eps)w_{\max}$.

    Let $b$ be a vertex of $R_1$ other than $o$ and $t$. 
    Consider the line $\ell$ orthogonal to $\sigma_1'$ that passes through $b$. 
    The line $\ell$ intersects the two bounding lines of $\sigma_1'$. 
    Observe that one intersection point lies inside $\sigma_1$, while the other lies outside. 
    Let $a$ and $c$ denote these two intersection points, with $a$ being the one inside $\sigma_1$. 
    Then $w(\sigma_1') = \overline{ab} + \overline{bc} \le w(\sigma_1) + \overline{bc} \le w_{\max} + \overline{bc} $.

     We have $\overline{bc} = \overline{oc}\cdot \tan(\alpha - \gamma) 
    \le \overline {oc} \cdot \tan (\delta \theta)$. 
    Observe that $\overline{oc} \le \overline{ob} \le \diam(P)$. 
    As $\diam(P)/2 \le D$, we have $\overline{oc} \le 2D$. 
    We now consider the following two cases.

     \textbf{Case 1. $c \widetilde w / D \le 1/2$. } That is, $\theta \le  \pi/6$.
     In this case, since $\delta < 1$ and $\theta \le \pi/6$, we have 
     $\tan(\delta \theta) 
    \le \delta \tan \theta$.
    Note that $\delta  \tan \theta  = \delta  \sin \theta/ \cos \theta $
    $\le (2/\sqrt 3)  \delta  \sin\theta = (2/\sqrt 3)    c \delta \widetilde w /D \le (2/\sqrt 3) c^2 \delta  w_{\max} /D  $.
    Since $\overline{oc}  \le 2D$, it follows that $\overline{bc} \le (4/\sqrt 3) c^2 \delta  w_{\max}$.

    \textbf{Case 2. $c\widetilde w / D > 1/2$. }
    Since $\delta \theta \le \pi/3$ (because $\theta \le \pi/2$ and we assume $\delta \le 2/3$), 
    we have
    $\tan(\delta \theta) 
    = \sin (\delta \theta) / \cos (\delta \theta) 
    \le 2 \cdot \sin (\delta \theta) \le 2 \delta \theta \le \pi \delta $.
    Also, $\overline{oc} \le 2D \le 4 c \widetilde w \le 4 c^2 w_{\max}$.  
    Combining these bounds gives $\overline{bc} \le 4\pi  c^2 \delta w_{\max}$. 
    
    By choosing $\delta = \min \{ 2/3, \sqrt 3 \eps / (4 c^2) ,  \eps / 4\pi c^2\}$, it holds that $\overline{bc} \le \eps w_{\max}$ in both cases. 
    Therefore, we have $w(\sigma_1') \le (1+\eps)w_{\max}$.

    In a symmetric way, we obtain $\sigma_2'$ of orientation $\gamma$ and it can be shown that $P\cap \sigma_2 \subset \sigma_2'$ and $w(\sigma_2')  \le (1+\eps)w_{\max}$.
\end{proof}

\begin{figure}[ht]
    \centering
    \includegraphics[width=0.57 \textwidth]{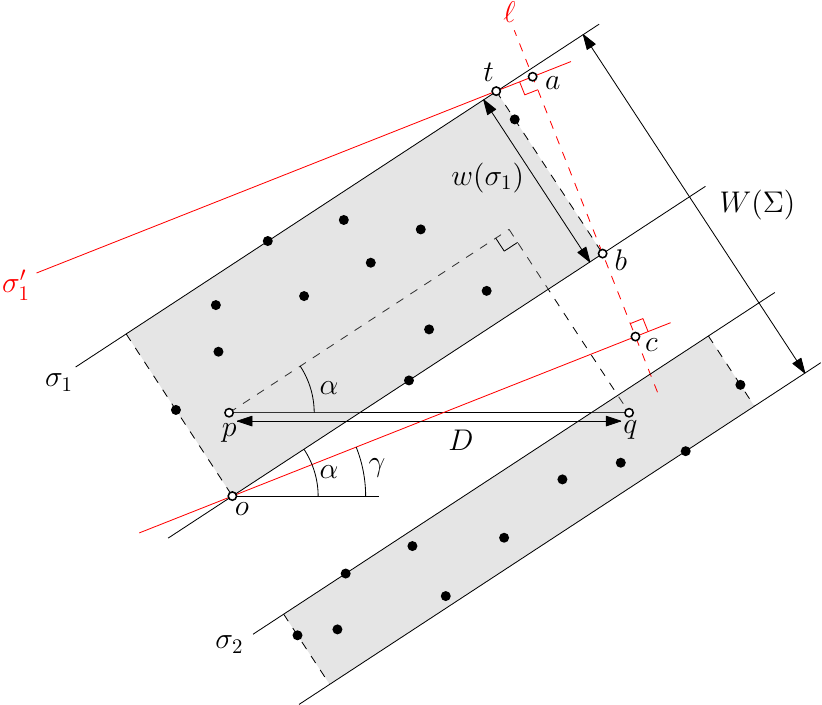}
    \caption{A pair $(\sigma_1, \sigma_2)$ of parallel slabs with orientation $\alpha$ and two points $p, q \in \sigma_1 \cup \sigma_2$ are shown.
    The black points represent the points in $P$.
    The two gray regions represent the minimum bounding boxes of $P\cap \sigma_1$ and $P \cap \sigma_2$ with orientation $\alpha$, respectively. 
    }
    \label{fig:parallel_angle_lemma}
\end{figure}

\begin{observation}
    \label{obs:par_fixed_ori}
    Given an orientation $\theta$, we can find a pair of parallel slabs of orientation $\theta$ such that their union covers $S$ and their maximum width is minimized in $O(n)$ time.
\end{observation}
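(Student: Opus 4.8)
The plan is to reduce the problem to a one-dimensional interval-covering question and solve it with a closed-form expression, avoiding any sorting. Since both slabs must have the fixed orientation $\theta$, I would first project every point of $S$ onto the unit direction $u$ orthogonal to $\theta$; write $t_p = \langle p, u \rangle$ for each $p \in S$. A slab of orientation $\theta$ then corresponds exactly to an interval of $t$-values, and its width equals the length of that interval. Thus the task becomes: cover the multiset $T = \{ t_p : p \in S\}$ by two intervals while minimizing the larger of their two lengths. All projections are computed in $O(n)$ time.

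Let $t_{\min} = \min_p t_p$, $t_{\max} = \max_p t_p$, $D = t_{\max} - t_{\min}$, and $m = (t_{\min}+t_{\max})/2$; each is found by a single linear scan. The key structural claim I would prove is that a covering with both widths at most $w$ exists if and only if no value of $T$ lies in the open interval $(t_{\min}+w,\, t_{\max}-w)$. The \emph{if} direction is immediate: assign every point with $t_p \le t_{\min}+w$ to the interval $[t_{\min}, t_{\min}+w]$ and the rest to $[t_{\max}-w, t_{\max}]$. For the \emph{only if} direction I would argue that when $2w < D$ the extreme values $t_{\min}$ and $t_{\max}$ cannot be covered by a single width-$w$ interval, so they lie in distinct slabs whose reach is confined to $[t_{\min}, t_{\min}+w]$ and $[t_{\max}-w, t_{\max}]$ respectively; any $t_p$ strictly between $t_{\min}+w$ and $t_{\max}-w$ would then be uncovered. (For $2w \ge D$ the middle interval is empty and the statement is trivial.)

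Given this characterization, the optimum $w^*$ is the smallest $w$ for which the open middle interval is empty. Writing that interval as $(m-s,\, m+s)$ with $s = D/2 - w$, emptiness is equivalent to $s \le \min_p |t_p - m|$, so the largest admissible $s$ is $s^* = \min_p |t_p - m|$ and hence $w^* = D/2 - s^*$. Note $s^* \le D/2$ since $t_{\min}, t_{\max}$ are at distance $D/2$ from $m$, so $w^* \ge 0$. This value, together with the two realizing slabs corresponding to $[t_{\min}, m-s^*]$ and $[m+s^*, t_{\max}]$ (interpreted in orientation $\theta$), is produced by one more linear scan, giving $O(n)$ time overall.

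The only real obstacle is attaining linear rather than $O(n\log n)$ time: the naive split-point approach would sort $T$ and scan sorted order for the best partition of the points into a ``low'' and a ``high'' group. The closed form $w^* = D/2 - \min_p |t_p - m|$ sidesteps sorting entirely, reducing everything to computing $t_{\min}$, $t_{\max}$, and the point nearest the midpoint $m$, each of which is a single pass over $S$.
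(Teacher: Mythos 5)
Your proposal is correct and takes essentially the same approach as the paper: both reduce the problem to the one-dimensional projection along the direction orthogonal to $\theta$ and split at the midpoint $m$ of the extent, with the inner slab boundaries determined by the points nearest to $m$ on each side (the paper phrases this as translating the center line of the minimum enclosing slab toward each bounding line until it first touches a point of $S$). Your closed form $w^* = D/2 - \min_p |t_p - m|$ is exactly the maximum width of the paper's pair; the only difference is that you spell out the feasibility characterization and optimality argument, which the paper leaves as an unproved ``observe that.''
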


\begin{proof}
    We can compute the minimum-width slab $\sigma$ of orientation $\theta$ that encloses $S$ in $O(n)$ time by finding two extreme points in the direction orthogonal to $\theta$.
    Let $\ell_1$ and $\ell_2$ be the two bounding lines of $\sigma$, and let $m$ be its center line.
    See Figure~\ref{fig:parallel_fixed_ori}.
    
    Let $m_1$ be the line obtained by translating $m$ toward $\ell_1$ until it touches the first point of $S$.
    Symmetrically, let $m_2$ be the line obtained by translating $m$ toward $\ell_2$ until it touches the first point of $S$.
    Both $m_1$ and $m_2$ can be computed in $O(n)$ time.
    
    Let $\sigma_1$ be the slab having $\ell_1$ and $m_1$ as its bounding lines, and let $\sigma_2$ be the slab having $\ell_2$ and $m_2$ as its bounding lines. 
    Observe that $(\sigma_1, \sigma_2)$ is a pair of slabs of orientation $\theta$ whose union encloses $S$ while minimizing the maximum of their widths. 
\end{proof}

\begin{figure}[ht]
    \centering
    \includegraphics[width=0.4 \textwidth]{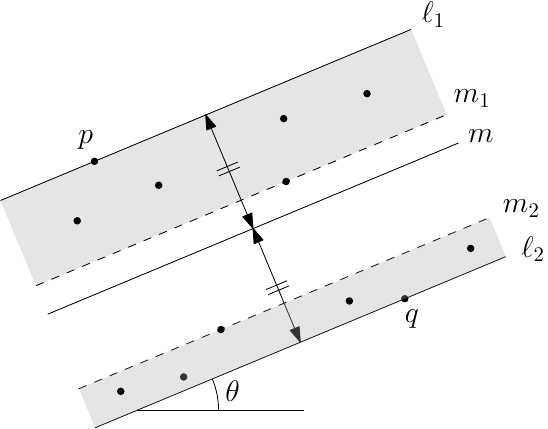}
    \caption{Black points indicate the point in $S$. Observe that $p$ and $q$ are the extreme points in the direction orthogonal to $\theta$. The lines $\ell_1, \ell_2, m_1, m_2$, and $m$ are shown as described above.
    Observe that the pair $(\sigma_1, \sigma_2)$, represented by the two gray regions in the figure, is a pair of slabs of orientation $\theta$ whose union encloses $S$ while minimizing the maximum of their widths. }
    \label{fig:parallel_fixed_ori}
\end{figure}

Now we prove Theorem~\ref{thm:small_gap_ratio} using the above observations and lemmas. 
We first compute a pair of points $p, q$ in $S$ where $d(p, q) \ge \diam(S)/2$, which can be done in $O(n)$ time.
By Lemma~\ref{lem:small_gap_const}, we can compute $\widetilde w$ with  $w^* \le \widetilde w \le (2 + \frac{2\rho}{1-\rho} ) w^* $ in $O(n)$ time.

Observe that $S$ and $\Sigma^*$ satisfy the four conditions in the statement of Lemma~\ref{lem:parallel_angles} for $(p, q)$ and the value $\widetilde w$ computed above.
Therefore, in $O(1/\eps)$ time, we can compute a set $\Gamma$ of $O(1/\eps)$ orientations, such that there exists a pair of parallel slabs $\Sigma' = (\sigma_1', \sigma_2')$  of orientation $\gamma \in \Gamma$ where $S\subset \sigma_1'\cup \sigma_2'$ and $\max \{w(\sigma_1'), w(\sigma_2') \} \le (1+\eps) w^*$.

Therefore, for every orientation $\gamma \in \Gamma$, we compute a pair of parallel slabs of orientation $\gamma$ whose union covers $S$ while minimizing the maximum of their widths. 
Among these $O(1/\eps)$ pairs of slabs, we output the one that minimizes the maximum of their widths.
By Observation~\ref{obs:par_fixed_ori}, it takes $O(n/\eps)$ time in total.

\subsection{Large Gap-Ratio (proof of Theorem~\ref{thm:large_gap_ratio})}
\label{sec:large_gap_ratio}
In this section, we assume that there exists an optimal pair $\Sigma^* = (\sigma_1^*, \sigma_2^*)$ for the parallel two-line-center problem whose gap ratio is at least $\rho$. 
Recall that $w^* = \max \{w(\sigma_1^*), w(\sigma_2^*) \}$.
We present an algorithm that outputs a pair of parallel slabs such that their union covers $S$ and the maximum of their widths is at most $(1+\eps)w^*$ under~the~assumption. 
It runs in $O(n + \mu \log \mu)$ time, where $\mu = \min \{n, 1/\eps^2 \}$, which can be expressed as both $O(n \log (1/\eps))$ and $O(n + (1/\eps^2) \log (1/\eps))$. 

We first recall an algorithm of~\cite{chung2025parallel} that computes a pair of parallel slabs with a gap-ratio at least $\rho$ whose union cover $S$ while minimizing their maximum width.

\begin{theorem}
    [\cite{chung2025parallel}]
    \label{thm:large_gap_prev}
    Given a set $P$ of $n$ points in the plane and a constant $0< \rho < 1$, we can do the following in $O(n\log n)$ time. If there exists a pair of parallel slabs such that (1) their union covers $P$, (2) each slab contains at least one point in $P$, and (3) its gap-ratio is at least $\rho$, then we report one such pair that minimizes the maximum of their widths.
    Otherwise, we report NO.
\end{theorem}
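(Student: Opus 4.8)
The plan is to reduce, for each candidate orientation $\theta$, to a one-dimensional problem, and then to sweep over all orientations. Fixing $\theta$, project $P$ orthogonally onto the direction perpendicular to $\theta$, obtaining reals $y_1 \le \cdots \le y_n$. Any pair of parallel slabs of orientation $\theta$ whose union covers $P$ and whose gap separates the points corresponds to a split index $k$: the left slab encloses $y_1,\dots,y_k$ and the right slab encloses $y_{k+1},\dots,y_n$, so that $w_1 = y_k - y_1$, $w_2 = y_n - y_{k+1}$, $g = y_{k+1}-y_k$, and $W = y_n - y_1 = width_\theta(P)$. The crucial simplification is that the constraint $g \ge \rho W$ forces the empty gap to occupy at least a $\rho$-fraction of the total extent; since the $n-1$ consecutive gaps sum to exactly $W$, at most $\lfloor 1/\rho\rfloor = O(1)$ split indices can be feasible for a given $\theta$. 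Hence each orientation admits only $O(1)$ candidate configurations, and among the feasible ones we keep the one minimizing $\max\{w_1,w_2\}$.

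Next I would isolate the four points that drive the objective as $\theta$ varies. For a minimum-width pair, the two far boundaries are exactly the supporting lines of $conv(P)$ perpendicular to $\theta$, so $W = width_\theta(P)$ is realized by an antipodal pair $(u_1,u_2)$ of $conv(P)$, and $W(\theta)$ is the standard piecewise-sinusoidal width function. The gap is bounded by the two inner points $m_1,m_2$ straddling it (the extremes of the two groups toward the gap), giving $g$ as the projected distance between $m_1,m_2$, while $w_1$ and $w_2$ are the projected distances between $u_1,m_1$ and between $u_2,m_2$, respectively. Thus a configuration is pinned down by $(u_1,u_2,m_1,m_2)$, and on any maximal orientation interval where these four points and the split are fixed, $w_1(\theta)$ and $w_2(\theta)$ are sinusoids, the feasibility condition $g(\theta)\ge \rho\,W(\theta)$ is a single inequality in $\theta$, and the minimum of $\max\{w_1,w_2\}$ over the feasible sub-interval can be found in $O(1)$ by comparing the interval endpoints, the interior stationary point of each sinusoid, and the crossing $w_1(\theta)=w_2(\theta)$.

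The algorithm then rotates $\theta$ from $0$ to $\pi$, maintaining $conv(P)$, the antipodal pair $(u_1,u_2)$ by rotating calipers, and the inner straddling points $(m_1,m_2)$ with the current feasible split, updating the running minimum at each combinatorial event. Sorting the points and building $conv(P)$ costs $O(n\log n)$; the caliper pair advances monotonically around the hull, contributing $O(n)$ events overall. If the feasible orientation set is empty throughout the sweep (no gap ever attains ratio $\ge\rho$), we report NO; otherwise we output the best configuration encountered.

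The hard part will be bounding and processing the events at which $m_1,m_2$ or the split index change, since $P_1$ and $P_2$ are themselves functions of $\theta$ and a point can in principle pass from one group to the other as the projection order changes. The main effort is therefore a monotonicity/charging argument showing that, thanks to the $\rho$-gap constraint, the straddling pair changes only $O(n)$ times over the full rotation, so each event is handled in amortized $O(\log n)$ time and the total running time is $O(n\log n)$. I would establish this by arguing that a feasible large gap confines the two groups to configurations whose facing chains are traversed by $m_1$ and $m_2$ monotonically, mirroring the monotone advance of the rotating calipers for $(u_1,u_2)$, so that each point plays the role of an inner point during only $O(1)$ orientation intervals.
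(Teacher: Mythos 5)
This statement is not proved in the paper at all: Theorem~\ref{thm:large_gap_prev} is imported verbatim, as a black box, from the prior work \cite{chung2025parallel}, and the paper merely invokes it (on the $\eps$-certificate $Q$) in Section~\ref{sec:large_gap_ratio}. So there is no in-paper proof to compare your argument against; your proposal has to stand on its own as a reconstruction of the external result, and as such it has a genuine gap.

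The fixed-orientation part of your argument is fine: restricting each slab to the minimal slab enclosing its group is without loss of generality (enlarging a slab can only decrease the gap-ratio), the split-index reduction is correct, and since the $n-1$ consecutive projection gaps sum to $W = width_\theta(P)$, at most $\lfloor 1/\rho \rfloor$ splits are feasible at any fixed $\theta$. The problem is the rotational sweep. First, the central combinatorial claim --- that the straddling pair $(m_1, m_2)$ changes only $O(n)$ times because ``each point plays the role of an inner point during only $O(1)$ orientation intervals'' --- is precisely the hard core of the theorem, and you assert it rather than prove it. Note that $m_1$ and $m_2$ need not be vertices of $conv(P)$ (take the four corners of a square plus its center point: for $\rho = 1/2$ the center point bounds a feasible gap), and the chains you want them to traverse monotonically belong to $conv(P_1)$ and $conv(P_2)$, which are themselves $\theta$-dependent sets; so there is no fixed structure on which a calipers-style monotonicity argument can run, and any charging argument must cope with points migrating between the two groups as the feasible split changes. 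Second, even granting an $O(n)$ bound on such events, your sweep has no mechanism for \emph{detecting} them: a split that is infeasible at the current orientation may become feasible later (its gap-ratio crosses $\rho W$ from below), and to notice this at the right moment you would have to monitor the ratios of all $n-1$ current projection gaps, whose combinatorial description changes $\Theta(n^2)$ times over a half-rotation, since every pair of points swaps projection order. Without an idea for anticipating newly feasible splits --- which is presumably where the actual work of \cite{chung2025parallel} lies --- the algorithm as described does not run in $O(n \log n)$ time.
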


By our assumption that $w^* > 0$ (see Appendix~\ref{sec:w_0} for the case where $w^*=0$), we have $S \cap \sigma_1^* \neq \emptyset$ and $S \cap \sigma_2^* \neq \emptyset$.  
Otherwise, if one of the slabs contains no point of $S$, we can divide the other slab into two halves, obtaining a pair of parallel slabs of width $w^*/2$ whose union still covers $S$, contradicting the optimality.

Let $Q\subseteq S$ be an $\eps$-certificate of $S$ (see Section~\ref{sec:eps_certificate}). Then we have the following.

\begin{lemma}
    \label{lem:Q_nonempty}
    $Q\cap \sigma_1^* \neq \emptyset$ and $Q\cap \sigma_2^* \neq \emptyset$.
\end{lemma}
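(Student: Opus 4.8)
The plan is to argue by contradiction, exploiting the defining property of an $\eps$-certificate together with the optimality of $\Sigma^*$. Suppose the claim fails; without loss of generality assume $Q \cap \sigma_1^* = \emptyset$ (the case $Q \cap \sigma_2^* = \emptyset$ is symmetric). Since $Q \subseteq S$ and $S \subseteq \sigma_1^* \cup \sigma_2^*$ (because the two optimal slabs cover $S$), the assumption forces every point of $Q$ into $\sigma_2^*$, i.e.\ $Q \subseteq \sigma_2^*$. The goal is then to turn this ``wasteful'' placement of $Q$ into a covering pair of parallel slabs with maximum width strictly below $w^*$, contradicting optimality.

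The key step is to build such a pair directly from $\sigma_2^*$. I would split $\sigma_2^*$ along its center line into two parallel slabs $\tau_1$ and $\tau_2$, each of width $w(\sigma_2^*)/2$ and of the same orientation as $\sigma_2^*$. By construction $\tau_1 \cup \tau_2 = \sigma_2^*$, and since $Q \subseteq \sigma_2^*$ their union covers $Q$. Crucially, $\tau_1$ and $\tau_2$ have \emph{equal} width, so the pair $(\tau_1,\tau_2)$ is exactly of the form required by the two-dimensional $\eps$-certificate definition. Invoking that $Q$ is an $\eps$-certificate of $S$, the union of the $\eps$-expansions $\tau_1'$ and $\tau_2'$ must therefore cover all of $S$.

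Finally I would bound the widths: each $\tau_i'$ has width $(1+\eps)\cdot w(\sigma_2^*)/2 \le (1+\eps)w^*/2$, using $w(\sigma_2^*) \le w^*$. For $\eps$ sufficiently small (any $\eps < 1$ suffices) this is strictly less than $w^*$. Hence $(\tau_1',\tau_2')$ is a pair of parallel slabs whose union covers $S$ and whose maximum width is strictly below $w^*$, contradicting the optimality of $\Sigma^*$. This contradiction gives $Q \cap \sigma_1^* \neq \emptyset$, and the symmetric argument gives $Q \cap \sigma_2^* \neq \emptyset$.

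I expect the only point needing care is verifying that the construction genuinely fits the $\eps$-certificate hypothesis: the definition in $\mathbb{R}^2$ demands the two covering slabs have the \emph{same} width, which is precisely why I split $\sigma_2^*$ into two equal halves rather than using $\sigma_2^*$ together with some auxiliary slab. The remaining arithmetic (the factor $(1+\eps)/2 < 1$) is routine. It is worth noting that this lemma does not actually use the large-gap-ratio hypothesis; it relies only on the optimality of $\Sigma^*$ as a global optimum and on $S \cap \sigma_1^*, S\cap \sigma_2^* \neq \emptyset$, which was established just before the statement.
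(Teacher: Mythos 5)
Your proof is correct, and it shares the paper's construction but reaches the contradiction by a genuinely different route. Both proofs assume $Q \cap \sigma_1^* = \emptyset$, conclude $Q \subseteq \sigma_2^*$, split $\sigma_2^*$ along its center line into equal-width halves $\tau_1, \tau_2$, and apply the two-dimensional $\eps$-certificate property to this same-width pair, so that $\tau_1' \cup \tau_2' \supseteq S$. The paper then contradicts the certificate property itself: it picks $p \in S \cap \sigma_1^*$ (nonempty because $w^* > 0$) and uses the large-gap-ratio hypothesis to get $d(p, \tau_1) \ge g(\Sigma^*) \ge \rho\, W(\Sigma^*) \ge \rho w^* \ge w^*/2$, which exceeds the expansion amount (at most $(1+\eps)w^*/4$), so $p$ is left uncovered by $\tau_1' \cup \tau_2'$. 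You instead accept the covering and contradict optimality: $(\tau_1', \tau_2')$ is a pair of parallel slabs covering $S$ of maximum width $(1+\eps)\,w(\sigma_2^*)/2 \le (1+\eps)w^*/2 < w^*$ whenever $\eps < 1$ and $w^* > 0$. Your version is both simpler and slightly stronger: it makes no use of the gap-ratio assumption $g(\Sigma^*)/W(\Sigma^*) \ge \rho$ (nor of $\rho \ge 1/2$), so it establishes the lemma for any optimal pair of parallel slabs, not only under the hypothesis of Section~\ref{sec:large_gap_ratio}; since the lemma is invoked only inside that section, the paper loses nothing by using the hypothesis, but your argument shows it is not needed. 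One small correction to your closing remark: your argument does not actually rely on $S \cap \sigma_1^* \neq \emptyset$ either --- it needs only $w^* > 0$ (for the strict inequality) and the global optimality of $w^*$.
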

\begin{proof}
    Suppose $Q\cap \sigma_1^* = \emptyset$. That is, $Q\subseteq S \cap \sigma_2^*$. 
    Let $\Sigma= (\tau_1, \tau_2)$ be the pair of slabs obtained by dividing $\sigma_2^*$ into two halves. 
    See Figure~\ref{fig:large_gap_Q}(a). 
    That is $w(\tau_1) = w(\tau_2) = w(\sigma_2^*)/2$. 
    Observe that $Q\subset \tau_1 \cup \tau_2$. 
    Let $\Sigma'= (\tau_1', \tau_2')$ be the pair of slabs where $\tau_1'$ and  $\tau_2'$ are the $\eps$-expansions of $\tau_1$ and $\tau_2$, respectively.
    Let $D$ be the amount by which each bounding line is shifted by the $\eps$-expansion.
    That is, $D= (1+\eps)w(\tau_1)/2 = (1+\eps)w(\tau_2)/2 \le (1+\eps)w^*/4$.
    See Figure~\ref{fig:large_gap_Q}(b).
    
    Let $p$ a point in $S\cap \sigma_1^*$.
    Without loss of generality, assume that $\tau_1$ is closer to $p$ than $\tau_2$. 
    We now show that $\tau_1'$ does not cover $p$ (which also implies that $\tau_2'$ does not cover $p$), contradicting that $Q$ is an $\eps$-certificate of $S$.
    We have $ d(p, \tau_1)\ge g(\Sigma^*) \ge \rho W(\Sigma^*) \ge \rho w^*$.
    Since $\rho \ge 1/2 >  (1+\eps)/4$, it follows that $ d(p, \tau_1)\ge \rho w^* > (1+\eps)w^*/4$. 
    As shown above, $D \le (1+\eps)w^*/4$.
    Therefore, $\tau_1$ does not cover $p$. 

    Symmetrically, we can show that $Q\cap \sigma_2^* = \emptyset$ leads to a contradiction.
\end{proof}

\begin{figure}[ht]
    \centering
    \includegraphics[width=0.8 \textwidth]{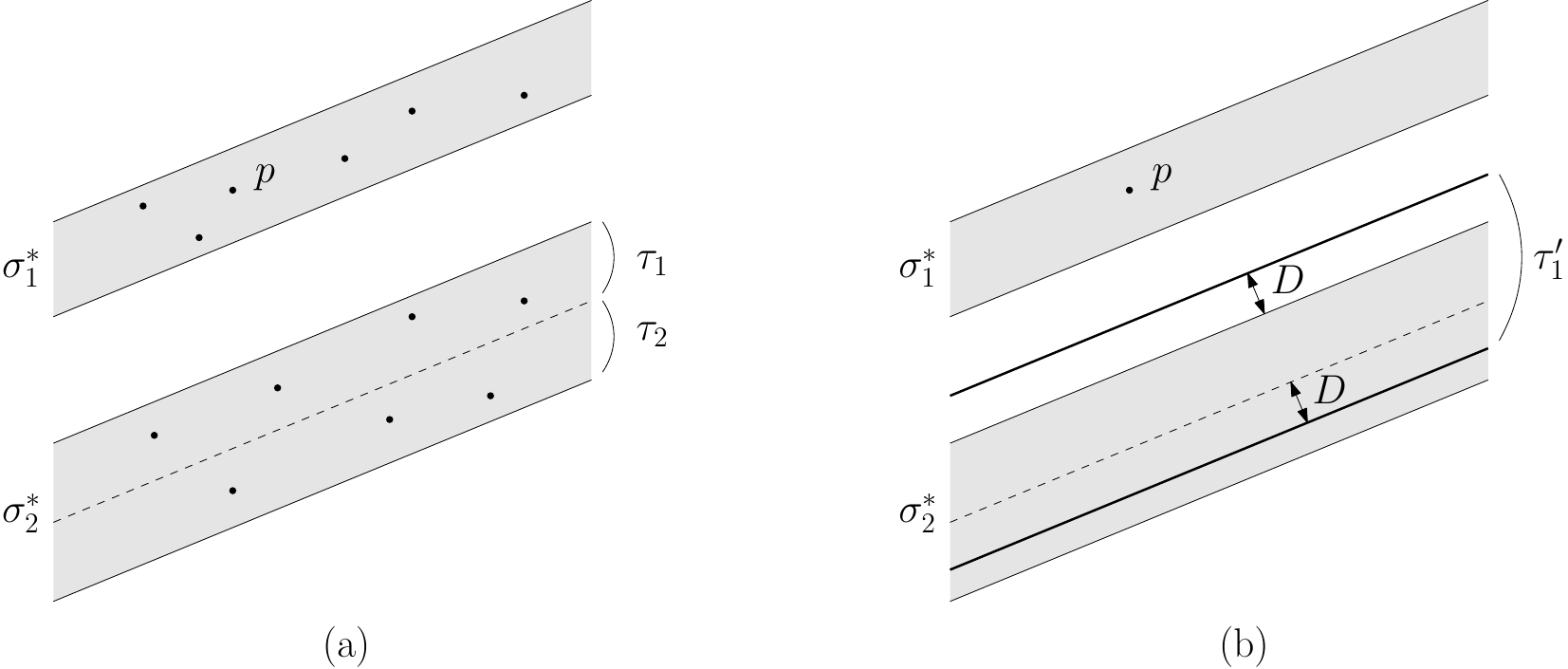}
    \caption{(a) Black points indicate the point in $S$. The dashed line indicates the center line of~$\sigma_2^*$. (b) The $\eps$-expansion $\tau_1'$ of $\tau_1$ is shown.}
    \label{fig:large_gap_Q}
\end{figure}

Now we prove Theorem~\ref{thm:large_gap_ratio}.
By Lemma~\ref{lem:Q_nonempty}, $\Sigma^*=(\sigma_1^*, \sigma_2^*)$ satisfies the three conditions in the statement of Theorem~\ref{thm:large_gap_prev} with respect to $Q$. 
Therefore, we obtain a pair of parallel slabs by running the algorithm in Theorem~\ref{thm:large_gap_prev} on $Q$. 
Let $\Sigma=(\sigma_1, \sigma_2)$ be the pair. (If the assumption on the gap ratio is not satisfied, the algorithm in Theorem~\ref{thm:large_gap_prev} may return NO. In that case, we stop the algorithm and also return NO.) 
Observe that $\max\{w(\sigma_1), w(\sigma_2) \} \le w^*$ since $Q\subseteq S$. 
Let $\Sigma'= (\sigma_1', \sigma_2')$ be the pair of slabs where $\sigma_1'$ and  $\sigma_2'$ are the $\eps$-expansions of $\sigma_1$ and $\sigma_2$, respectively.
By definition, $\sigma_1'\cup \sigma_2'$ covers $S$ and $\max \{w(\sigma_1'), w(\sigma_2') \} \le (1+\eps)w^*$. 
Therefore, $\Sigma'= (\sigma_1', \sigma_2')$ forms a $(1+\eps)$-approximate solution. 

Computing $Q\subseteq S$ of size $O(1/\eps^2)$ can be done in $O(n)$ time by Theorem~\ref{thm:computing_eps_certificate}.
It takes $O(|Q|\log |Q|)$ time to obtain $(\sigma_1', \sigma_2')$ by Theorem~\ref{thm:large_gap_prev}. 
Let $\mu = \min \{n, 1/\eps^2\}$, then we have $|Q| = O(\mu)$. 
Therefore, the total running time is $O(n + \mu \log \mu)$, which can be expressed as both $O(n \log (1/\eps))$ and $O(n + (1/\eps^2) \log (1/\eps))$.

\section{Future Work}

We have presented approximation algorithms for several variants of the two-line-center problem. An open question is whether the dependence on $\eps$ can be further refined for each variant. For the exact algorithms of the problems, no nontrivial lower bounds are currently known, except for the one-fixed-orientation case, for which we provide an optimal algorithm.

Each of the four problems admits natural generalizations with respect to the dimension $d$ and the number of lines $k$ (our problems correspond to $d=2$ and $k=2$). For higher dimensions ($d \ge 3$) or a larger number of lines ($k \ge 3$), very few exact or approximation algorithms are known. 
For the general $k$-line-center problem, the authors in~\cite{agarwal2005approximation} gave a $(1+\eps)$-approximation algorithm whose expected running time is $O(n\log n)$ where the constant depends on $d$, $k$, and $\eps$. 
We leave open the question of whether efficient exact and approximation algorithms can be obtained for these and other variants of the two-line-center problem in higher dimensions or with more than two lines.

Another natural variant of the two-line-center problem is the following. Given a set $S$ of $n$ points in the plane and a fixed orientation difference $\theta$, the goal is to find a pair of slabs whose orientations differ by $\theta$, whose union covers $S$, and that minimizes the maximum of their widths. 
An exact algorithm with running time $O(n^2 \alpha(n)\log n)$ is known for this problem~\cite{ahn2024constrained}, where $\alpha(n)$ denotes the inverse Ackermann function. 
To the best of our knowledge, no approximation algorithm is known for this variant.
It is unclear if our approach based on an anchor pair of an optimal pair of slabs and a constant factor solution (as in Lemma~\ref{lem:angles} and Lemma~\ref{lem:parallel_angles}) can be extended to this variant.
This is because rotating one slab requires rotating the other as well. We addressed this issue in the parallel two-line-center problem by considering two different cases, depending on the gap-ratio of an optimal pair of slabs (see Sections~\ref{sec:small_gap_ratio} and~\ref{sec:large_gap_ratio}).

\bibliography{paper}

% \twodiam*

% \begin{proof}
%     Choose an arbitrary point $p \in P$ and let $q$ be a farthest point from $p$ within $P$.
%     Since all points of $P$ lie inside the disk of radius $d(p,q)$ centered at $p$, we have $diam(P) \le 2d(p,q)$.
% \end{proof}

\appendix

\section{Proof of Theorem~\ref{thm:10apx}}
\label{sec:linear_10apx}

In this section, we give a $10$-approximation algorithm for the general two-line-center problem that runs in $O(n)$ time. 
As mentioned in the main text, we use the following $O(n)$-time algorithm that incrementally computes a $6$-approximation\footnote{The algorithm in~\cite{chan2006faster} maintains a $(5+\eps)$-approximation of the minimum width for a fixed $\eps>0$. For simplicity, we maintain a $6$-approximation.} of the width of a point set.

\sixapx*

The algorithm from Theorem~\ref{thm:6apx} maintains the following throughout the insertion of points: a set $V$ of points, a real value $w$, and the first inserted point $o$.
Without loss of generality, assume that $o$ is the origin.
It initializes $V = \emptyset$ and $w = 0$.
Whenever a new point $p$ is inserted, $V$ and $w$ are updated according to the procedure described in Algorithm~\ref{alg:insert}.

In~\cite{chan2006faster}, it is shown that the size of $V$ remains $O(1)$ at any time, and that the final value of $w$, denoted by $w_f$, after inserting a set $P$ of points, satisfies $\width(P) \le w_f \le 6 \cdot \width(P)$.

\begin{algorithm}
\caption{Processing a new point $p$ (See the proof of Theorem A.3 in \cite{chan2006faster}. Let $\delta>0$ be a sufficiently small constant.)}
\label{alg:insert}
\begin{algorithmic}[1]
\Function{insert}{$p$}
\State $w = \max \{w, 6\cdot \width(V \cup \{ o, p\} ) \}$ 
\If{for all $v \in V$ , $\|p\| > (1+\delta) \|v\|$ or $\angle pov > \pi/2 + \delta$}
    \State Insert $p$ to $V$
    \State Remove all points $v$ from $V$ such that $\|v\| \le \delta \|p\|$ and $\angle pov \le \pi/2 + \delta$
\EndIf
\EndFunction
\end{algorithmic}
\end{algorithm}

\medbreak
By Lemma~\ref{lem:anchor_pair}, we can find a set $F$ of at most $11$ pairs of points, at least one of which is an anchor pair of $\Sigma^*$ in $O(n)$ time. 
For the moment, assume that we have an anchor pair $(p, q)$ of $\Sigma^*$. (We will repeat the following algorithm for every pair in $F$.)
Without loss of generality, assume that $p, q \in \sigma_1^*$ and $d(p, q) \ge \diam((\sigma_1^*\setminus \sigma_2^* ) \cap S)/4$.

\paragraph*{Our Algorithm}

% Without loss of generality, we assume that $p, q\in \sigma_1^*$ and $d(p, q) \ge diam(S\cap (\sigma_1^* \setminus \sigma_2^*))/4$. 

For $1 \le i \le n$, let $f(i) = w(\sigma(p, q; p_i)) = 2d(\ell_{pq}, p_i)$ and let $g(i)$ denote a $6$-approximate width of $S \setminus \sigma(p, q; p_i)$ obtained by Theorem~\ref{thm:6apx}. The value $g(i)$ depends on the order in which points are inserted, but this does not affect the correctness of our algorithm.
Observe that $f(i)$ is an increasing function, whereas $g(i)$ is not necessarily decreasing (see Figure~\ref{fig:bs_idea}(a)).
Nevertheless, we proceed as in \textsf{\nameref{par:naive}}.

\begin{figure}[ht]
\centering
\includegraphics[width=0.98 \textwidth]{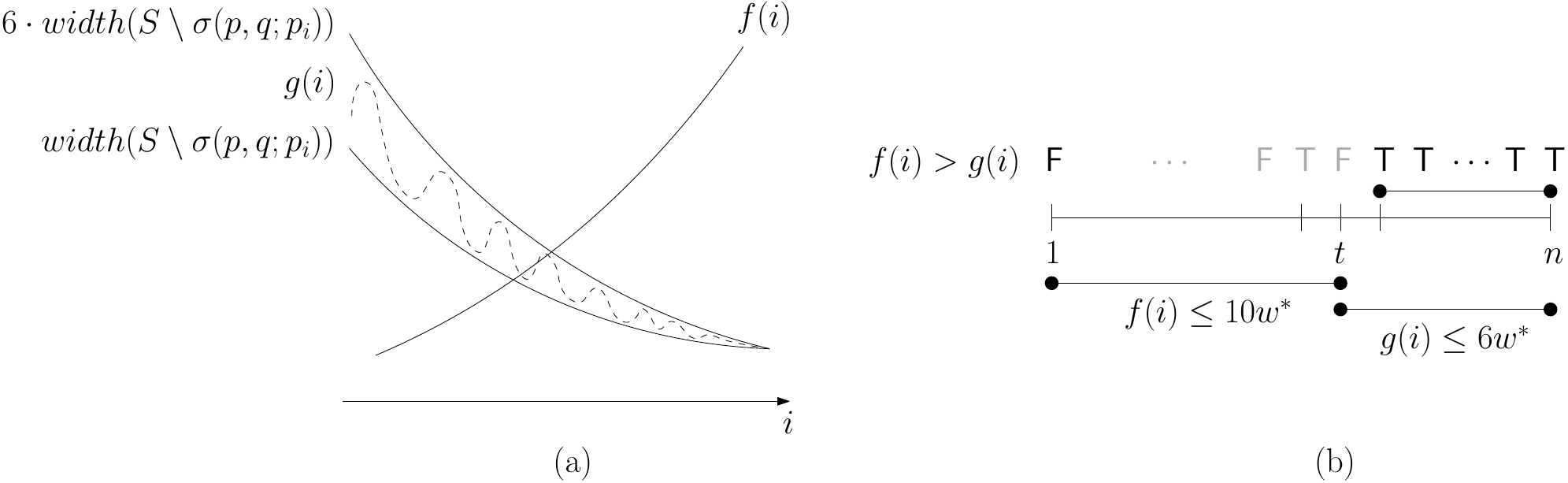}
\caption{
(a) A rough illustration of $f(i)$ and $g(i)$. $f(i)$ is an increasing function, whereas $g(i)$ is not necessarily monotone. 
(b) An illustration of Lemma~\ref{lem:fg}. 
}
\label{fig:bs_idea}
\end{figure}

In the following, we describe our algorithm, which is also summarized in Algorithm~\ref{alg:binary_search}.
We then prove its correctness in Lemma~\ref{lem:bs_correctness} and analyze its running time in Theorem~\ref{thm:10apx}.

\textbf{Data Structures and Invariants.} 
Throughout the algorithm, we maintain the following: a range of indices $R \subseteq [1, n]$, a set $Q \subseteq S$, a value $w$, and a set $V$.
Initially, we set $R = [1, n]$, $Q = S$, $w = 0$, and $V = \emptyset$.
We also set $o = p_n$, which is used in Algorithm~\ref{alg:insert}.
Without loss of generality, assume that $o$ is the origin.

We maintain these variables so that the following invariants hold:
(1)~$Q = \{ p_i \mid i \in R \}$, and
(2)~$w$ and $V$ are the value and the set obtained by inserting the points\footnote{Observe that $p_n$ is always contained in $S\setminus\sigma(p, q; p_e)$ except for when $e = n$. This is why we set $o = p_n$.} of $S \setminus \sigma(p, q; p_e)$ using Algorithm~\ref{alg:insert},
where $e$ is the largest index in $R$.
That is, $\width(S\setminus\sigma(p, q; p_e)) \le w \le 6\cdot \width(S\setminus\sigma(p, q; p_e))$.

\textbf{Main loop.} 
We perform a binary search on the index range $R$.
At each step, let $R = [s, e]$, set $m$ to be the median of $R$, and find the point $p_m$.
We then compare $f(m)$ and $g(m)$.
The value $f(m)$ can be computed directly from the coordinates of $p$, $q$, and $p_m$.
To obtain $g(m)$, we need to insert the points of $S \setminus \sigma(p, q; p_m)$ using Algorithm~\ref{alg:insert}.
However, since we already have $w$ and $V$ obtained by inserting the points in $S \setminus \sigma(p, q; p_e)$, and
$S \setminus \sigma(p, q; p_e) \subseteq S \setminus \sigma(p, q; p_m)$,
we can obtain $g(m)$ by inserting only the additional points $(S\setminus\sigma(p, q; p_m))\setminus (S\setminus\sigma(p, q; p_e))=\{p_i | m < i \le e \}$.

If $f(m) \ge g(m)$, we set $R = [s, m]$.
In this case, since the largest index in $R$ changes, we update $w$ and $V$ to maintain the invariant, using the value and the set obtained when computing $g(m)$ (see line~\ref{line:update_wV} of Algorithm~\ref{alg:binary_search}).
If $f(m) < g(m)$, we set $R = [m, e]$ and $w$ and $V$ remain unchanged.
In both cases, we update $Q$ to reflect the new range $R$.

\textbf{Base case.} When $R$ contains at most three indices, we select an index $r \in R$ that minimizes $\max \{ f(r), g(r) \}$.
Observe that for every $i\in R$, $f(i)$ is computed directly and $g(i)$ can be computed by inserting at most three points using Algorithm~\ref{alg:insert}. 
Finally, we output the pair of slabs, where one slab is $\sigma(p, q; p_r)$ and the other is a slab of width at most $g(r)$ that covers $S \setminus \sigma(p, q; p_r)$. 
We note that the latter slab can be obtained by $o, w$, and $V$ (see the proof of Theorem A.3 in~\cite{chan2006faster}).

\begin{algorithm}
\caption{A linear-time $10$-approximation algorithm}
\label{alg:binary_search}
\begin{algorithmic}[1]
% \State $s \gets 1, \; e \gets n, \; R \gets [s, e]$
\State $R \gets [1, n]$
\State $Q \gets S$
\State $w \gets 0, \; V \gets \emptyset, \; o \gets p_n$ 
\While{$|R| > 3$ (equivalently, $|Q| > 3$)}
    \State $m\gets \lfloor (s + e)/2 \rfloor$, where $R = [s, e]$
    \State Find the point $p_{m}$ in $Q$
    \State $f \gets 2d(\ell_{pq}, p_{m})$ \label{line:f}
    \State $g, V' \gets $ a $6$-approximate width of $S\setminus \sigma(p,q;p_{m})$ and a set obtained by Theorem~\ref{thm:6apx}\label{line:g}
    \If{$f \ge  g$} \label{line:bs_condition}
        \State $R \gets [s, m]$ \label{line:if_R}
        \State $(w, V)\gets (g, V')$ \label{line:update_wV}
        \State $Q \gets \{p_{s}, p_{s +1}, \ldots, p_{m} \}$ \label{line:if_Q}
    \Else
        \State $R \gets [m, e]$ \label{line:else_R}
        \State $Q \gets \{p_{m}, p_{m +1}, \ldots, p_{e} \}$ \label{line:else_Q}
        
    \EndIf
\EndWhile

\State Find an index $r\in R$ which minimizes $\max\{f(r), g(r)\}$\label{line:final_index} \label{line:final_R}
\State Report the pair of slabs corresponding to $p_r$: one is $\sigma(p, q; p_r)$, and the other is a slab of width at most $g(r)$ that covers $S \setminus \sigma(p, q; p_r)$.

\end{algorithmic}
\end{algorithm}

To prove the correctness in Lemma~\ref{lem:bs_correctness}, we first show Lemma~\ref{lem:true_false},~\ref{lem:fg}, and Corollary~\ref{cor:fgt},~\ref{cor:exists_t}.

\begin{restatable}{lem*}{truefalse}
    \label{lem:true_false}
    For $R = [s, e]$ at any time, it holds that $f(s)< g(s)$ and $f(e) \ge g(e)$.
\end{restatable}

\begin{proof}
    Let $I(i)$ be the boolean value indicating whether $f(i) \ge g(i)$.
    We will prove that $I(s) = \false$ and $I(e) = \true$ for $R = [s, e]$ throughout the algorithm.
    We prove this by induction on the iteration of the while loop.
    Initially, $R = [1, n]$.
    Observe that $f(1) = 2d(\ell_{pq}, p_1) = 0$.
    Since $g(1) > 0$, we have $f(1) < g(1)$, so $I(1) = \false$.
    On the other hand, $g(n) = 0$ because $S \setminus \sigma(p, q; p_n) = \emptyset$, which implies $I(n) = \true$.
    
    Now, let $R = [s_i, e_i]$ be the range of indices at the beginning of the $i$-th iteration of the while loop, and let $m_i = \lfloor (s_i + e_i)/2 \rfloor$.
    Assume as the induction hypothesis that $I(s_i) = \false$ and $I(e_i) = \true$.
    Let $s_{i+1}$ and $e_{i+1}$ be the indices defining the range after this iteration.
    
    If $I(m_i) = \true$, then $[s_{i+1}, e_{i+1}] = [s_i, m_i]$ (see line~\ref{line:if_R}).
    Otherwise, $[s_{i+1}, e_{i+1}] = [m_i, e_i]$ (see line~\ref{line:else_R}).
    In both cases, it holds that $I(s_{i+1}) = \false$ and $I(e_{i+1}) = \true$.
\end{proof}

Let $t$ be the largest index such that $f(t) = 2d(\ell_{pq}, p_t) \le 10w^*$.
As $d(\ell_{pq}, p_1) = 0$, $t$ exists.

\begin{restatable}{lem*}{fg}
    \label{lem:fg}
    The following statements hold. See Figure~\ref{fig:bs_idea}(b).
    \begin{enumerate}[(i)]
        \item For every $i$ with $1\le i \le t$, we have $f(i) \le 10w^*$. 
        \item For every $i$ with $t\le i \le n$, 
    we have $g(i) \le 6w^*$. 
        \item For every $i$ with $t < i \le n$, we have $f(i) > g(i)$. 
    \end{enumerate}
\end{restatable}

\begin{proof}
    \textbf{\textsf{(i)}} This follows directly from the definition of $t$ and the fact that $f(i)$ is an increasing function.

    \textbf{\textsf{(ii)}}  Let $k$ be the smallest index such that $\sigma(p, q; p_k)$ covers the points in $S\cap (\sigma_1^* \setminus \sigma_2^*)$. 
    By Lemma~\ref{lem:10apx_existence}, we have $2d(\ell_{pq}, p_k) \le 10w^*$. 
    Therefore, $k\le t$ by the definition of $t$. 
    Observe that $k \le i$ for every $i$ with $t\le i\le n$. 

    By the definition of $g(i)$, we have $g(i) \le 6\cdot \width(S \setminus \sigma(p, q; p_i))$.
    Since $k \le i$, we also have $\width(S\setminus \sigma(p,q;p_{i})) \le \width(S\setminus \sigma(p,q;p_{k})) $.
    Therefore, $g(i) \le 6\cdot \width(S \setminus \sigma(p, q; p_k))$.
    Finally, $\width(S \setminus \sigma(p, q; p_k)) \le w^*$, otherwise $w(\sigma_2^*) \le w^*$ would be contradicted because $\sigma(p, q; p_k)$ covers all points in $S \cap (\sigma_1^* \setminus \sigma_2^*)$.
    Therefore, $g(i) \le 6 w^*$ for every $i$ with $t \le i \le n$.
    
    \textbf{\textsf{(iii)}}  By the definition of $t$, we have  $10w^* < 2d(\ell_{pq}, p_i) = f(i)$ for every $i$ with $t < i \le n$. 
    From {\textsf{(ii)}}, we know that $g(i) \le 6 w^*$ in the same range.
    Thus, for all $i$ with $t < i \le n$, we have $g(i) \le 6w^* \le 10w^* < f(i)$. 
\end{proof}

By combining Lemma~\ref{lem:fg}(i) and (ii), we obtain the following corollary.

\begin{corollary}
    \label{cor:fgt}
    $\max \{f(t) , g(t)\} \le 10w^*$.
\end{corollary}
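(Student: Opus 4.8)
The plan is to observe that this corollary is an immediate consequence of Lemma~\ref{lem:fg}, obtained by evaluating both of its relevant parts at the single index $i = t$. The crucial point is that $t$ lies simultaneously in the two overlapping ranges appearing in the lemma: it is the right endpoint of $[1, t]$, the range of part~(i), and the left endpoint of $[t, n]$, the range of part~(ii). Since $t$ exists (because $f(1) = 2d(\ell_{pq}, p_1) = 0 \le 10w^*$) and satisfies $1 \le t \le n$ by its definition as the \emph{largest} such index, both statements of the lemma are applicable at $i = t$.

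Concretely, I would first invoke Lemma~\ref{lem:fg}(i) with $i = t$, which directly yields $f(t) \le 10w^*$. Next I would invoke Lemma~\ref{lem:fg}(ii) with $i = t$, which gives $g(t) \le 6w^*$; since $6w^* \le 10w^*$ (recall $w^* > 0$), this bound implies $g(t) \le 10w^*$ as well. Combining the two inequalities, the maximum of $f(t)$ and $g(t)$ is at most $10w^*$, which is exactly the stated conclusion.

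There is essentially no obstacle in this argument, and no fresh calculation is required beyond the two lemma applications. The only thing worth making explicit is the membership of $t$ in both index ranges, which is precisely what allows the two one-sided bounds — one on $f$, valid up to $t$, and one on $g$, valid from $t$ onward — to be combined at their shared endpoint. Thus the corollary follows directly from the preceding lemma.
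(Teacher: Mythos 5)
Your proof is correct and matches the paper exactly: the paper also obtains this corollary by combining Lemma~\ref{lem:fg}(i) and (ii) at the shared index $i = t$, where $f(t) \le 10w^*$ and $g(t) \le 6w^* \le 10w^*$. Your explicit remark that $t$ lies in both ranges $[1,t]$ and $[t,n]$ is the same (implicit) observation the paper relies on.
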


Let $R_f$ be the final range $R$ obtained after the while loop. 

\begin{restatable}{cor*}{existst}
    \label{cor:exists_t}
    There exists $j\in R_f$ such that $j \le t$. 
\end{restatable}

\begin{proof}
Suppose, to the contrary, that all indices in $R_f$ are at least $t+1$. 
By Lemma~\ref{lem:fg}(iii), we have $f(i) > g(i)$ for every $i \in R_f$.
In particular, if $R_f = [s, e]$, then $f(s) > g(s)$, which contradicts Lemma~\ref{lem:true_false}.
\end{proof}

\begin{restatable}{lem*}{bscorrectness}
    \label{lem:bs_correctness}
    For the final index  $r\in R_f$ obtained in base case, $\max \{ f(r), g(r) \} \le 10 w^*$. 
\end{restatable}

\begin{proof}
    If $t \in R_f$, then $\max \{ f(r), g(r) \} \le \max \{ f(t), g(t) \} \le 10 w^*$ by Corollary~\ref{cor:fgt}.  

    If $t \notin R_f$, then Corollary~\ref{cor:exists_t} implies that $R_f \subseteq [1, t-1]$.  
    By Lemma~\ref{lem:true_false}, there exists an index $b \in R_f$ such that $f(b) \ge g(b)$.  
    Hence, $\max \{ f(b), g(b) \} = f(b) \le f(t) \le \max \{ f(t), g(t) \} \le 10 w^*$.  
    Therefore, we have $\max \{ f(r), g(r) \} \le \max \{ f(b), g(b) \} \le 10 w^*$.
\end{proof}

Now we show the following theorem.

\tenapx*

\begin{proof}
    By Lemma~\ref{lem:bs_correctness}, our algorithm computes a pair of slabs whose widths are at most $10w^*$ and whose union covers $S$.
    We now analyze its running time.

    Initializing the variables $R$, $Q$, $w$, $V$, and $o$ clearly can be done in $O(n)$ time.

    Next, we show that each iteration of the while loop, with the current index range $R = [s, e]$, takes $O(|R|)$ time.
    For the median $m$ of $R$, we find $p_m$ using a linear-time selection algorithm in $Q$, which can be done in $O(|Q|)=O(|R|)$ time~\cite{Blum1973Time_selection}.

    The value $f(m)$ is obtained directly.
    To compute $g(m)$, we apply Algorithm~\ref{alg:insert} to the points $(S\setminus\sigma(p, q; p_m))\setminus (S\setminus\sigma(p, q; p_e))=\{p_i | m < i \le e \}$
    updating the current $w$ and $V$.
    By Theorem~\ref{thm:6apx}, this takes $O(e-m)=O(|R|)$ time.

    Since the size of $R$ decreases by a factor of two in each iteration, the total time over all iterations of the while loop is $O(n + n/2 + n/4 + \cdots) = O(n)$.

    After the while loop ends, we find an index $r\in R$ which minimizes $\max\{f(r), g(r)\}$.
    Observe that for every $i\in R$, $f(i)$ is obtained directly and $g(i)$ can be computed by inserting at most three points using Algorithm~\ref{alg:insert}. 
    Therefore, it takes $O(1)$ time for finding $r$. 
    We obtain one slab $\sigma(p, q;p_r)$ and it remains to compute the other slab, which has width at most $g(r)$ and covers $S \setminus \sigma(p,q;p_r)$.
    In the proof of Theorem~A.3 in~\cite{chan2006faster}, it is shown that this can be done by  computing the minimum-width slab enclosing $V\cup \{ o\}$ and expanding it by a factor determined by $\delta$ and $w$.
    % This can be obtained by computing the minimum-width slab enclosing $V\cup \{ o\}$ and expanding it by a constant factor determined by $\delta$ (as described in the proof of Theorem~A.3 in~\cite{chan2006faster}).
    Since $|V|=O(1)$ at all times, this step takes $O(|V|\log |V|)=O(1)$ time.

    We resolve the assumption that we have an anchor pair $(p, q)$ of $\Sigma^*$. We repeat the entire algorithm for every pair in $F$, using it as $(p, q)$, which adds only a constant factor to the running time.
    Observe that for each pair in $F$, we obtain a pair of slabs whose union covers $S$. 
    Among these pairs, we output the one with the smallest maximum width.  
\end{proof}

\section{Proof of Theorem~\ref{thm:one_fixed}}
\label{sec:one_fixed_appendix}

We first sort the points of $S$ so that $s_1, \ldots, s_n$ are ordered in decreasing order of their $y$-coordinates.
Let $W$ denote the set of all pairwise differences between the $y$-coordinates of points in $S$.
Since $W$ can be represented as a sorted matrix~\cite{frederickson1982complexity}, we can identify two consecutive values $w_0, w_1 \in W$ such that $w_0 < w^* \le w_1$ in $O(n \log n)$ time, by combining our decision algorithm described in Section~\ref{sec:one_fixed_decision} with an efficient selection algorithm for sorted matrices~\cite{frederickson1984generalized}.

\textbf{Notation} We use the same notation $S_a$, $S_b$, and $\{\ell_i, \theta_i, a_i, b_i \mid i = 1, 2\}$ as in Section~\ref{sec:one_fixed_decision}.
%Specifically, for a fixed slab $\sigma$ (with its width determined in context), $S_a$ and $S_b$ denote the sets of points in $S$ lying strictly above and strictly below $\sigma$, respectively.
For any $w > 0$, we denote by $\sigma_w$ a horizontal slab of width $w$.

\textbf{Assumption}
Throughout this section, we assume that there exists a position of $\sigma_{w^*}$ such that $\width(S \setminus \sigma_{w^*}) \le w^*$  (i.e., $\width(S_a\cup S_b) \le w^*$) and $S_a$ dominates $S_b$ in that position.
The other case can be handled symmetrically by reflecting $S$ across a horizontal line.

We first introduce two lemmas from \cite{ahn2024constrained}.
Observe that the sequence of changes in the sets $S_a$ and $S_b$ that occur as $\sigma_w$ moves downward is the same for any $w_0 < w < w_1$.
Let $X$ denote the set of all pairs $(A, B)$ such that $A \subseteq S$ and $B \subseteq S$ appear as $S_a$ and $S_b$, respectively, during the sweep of the plane by $\sigma_w$ for $w_0 < w < w_1$.

\begin{lemma}[Lemma 9 of \cite{ahn2024constrained}]
    \label{lem:constrained_1}
    $w^* = \min \{w_1,  \min_{(S_a, S_b) \in X} \{\width(S_a\cup S_b) \}\}$.
\end{lemma}

Lemma~\ref{lem:constrained_1} implies that by tracking $\width(S_a \cup S_b)$ as $\sigma_w$ moves downward (for any $w_0 < w < w_1$), we can determine $w^*$ by comparing its minimum value with $w_1$.
Lemma~\ref{lem:constrained_2} and Corollary~\ref{cor:one_fixed_idea} further refine this observation, implying that $w^*$ can be obtained more directly from the procedure of the decision algorithm.

\begin{lemma}[Lemma 11 of \cite{ahn2024constrained}]
    \label{lem:constrained_2}
    For $(S_a,  S_b) \in X$ where $S_a$ dominates $S_b$, it holds that $\width(S_a \cup S_b) < w_1$
    if and only if $\width_{[\theta_1, \theta_2]}(S_a) < w_1$. 
    Moreover, if $\width_{[\theta_1, \theta_2]}(S_a) < w_1$, 
    then $\width(S_a \cup S_b)  =\width_{[\theta_1, \theta_2]}(S_a)$.
\end{lemma}

By combining Lemma~\ref{lem:constrained_1} and~\ref{lem:constrained_2}, we can obtain the following corollary.  

\begin{corollary}
    \label{cor:one_fixed_idea}
    $w^* = \min\{ w_1,  \min \{\width_{[\theta_1, \theta_2]}(S_a) \mid (S_a, S_b) \in X, \; S_a \textrm{ dominates } S_b\} \}$. 
\end{corollary}

By Corollary~\ref{cor:one_fixed_idea}, 
once we find a position of $\sigma_w$ where $S_a$ dominates $S_b$ and $\width_{[\theta_1, \theta_2]}(S_a)$ is minimized,
we can compute $w^*$ by comparing this value with $w_1$.
We now show that we can find the position during the procedures in the decision algorithm presented in Section~\ref{sec:one_fixed_decision}.

We first fix $w_0 < w < w_1$. 
Recall that in the decision algorithm, we test whether $\width_{[\theta_1, \theta_2]}(S_a) < w$ while $\sigma_w$ moves downward, maintaining the following data structures
(see~\emph{\nameref{par:ds_mainloop}} in Section~\ref{sec:one_fixed_decision}).

\begin{itemize}
    \item the dynamic data structures for $\conv(S_a)$ and $\conv(S_b)$ (from Lemma~\ref{lem:dyn_conv})
    \item the tangent lines $\ell_1$ and $\ell_2$, with their corresponding points $a_1$, $b_1$, $a_2$, and $b_2$
    \item the two antipodal pairs of $S_a$ with respect to the orientations $\theta_1$ and $\theta_2$, respectively
\end{itemize}

As $\sigma_w$ moves downward, we will maintain an additional value $w_{\min}$, which represents the minimum value of $\width_{[\theta_1, \theta_2]}(S_a)$ observed so far. 
Initially, we set $w_{\min} = \infty$.
Observe that $S_b$ dominates $S_a$ from the beginning until a certain moment, after which $S_a$ dominates $S_b$ until the end. 
Until $S_a$ begins to dominate $S_b$ (which can be easily detected using Observation~\ref{obs:domination}), we just follow the decision algorithm without updating $w_{\min}$. 
At the moment when $S_a$ starts to dominate $S_b$, we set  $w_{\min} = \width_{[\theta_1, \theta_2]}(S_a)$. 
This can be done in $O(n)$ time by following the antipodal pairs in the data structure that we maintain for $\conv(S_a)$.

In the following, we show how $w_{\min}$ can be updated at each event during the execution of the decision algorithm.
At the end of the procedure, the minimum between $w_1$ and the final value of $w_{\min}$ is equal to $w^*$ by Corollary~\ref{cor:one_fixed_idea}.

We note that the authors of~\cite{ahn2024constrained} also used the observation from Corollary~\ref{cor:one_fixed_idea} to perform a sweep of the slab $\sigma_w$.
However, their approach requires an additional data structure to compute $\width_{[\theta_1, \theta_2]}(S_a)$ at each event, resulting in $O(\log^3 n)$ time per event.

\paragraph*{Insertion into $S_a$}
Let $p$ be the point inserted into $S_a$, and let $S_a' = S_a \cup \{p\}$.
% If $S_a'$ does not dominate $S_b$, we skip the following steps.
We define $\{\ell_i', \theta_i', a_i', b_i' \mid i=1, 2 \}$ analogously for $S_a'$ and $S_b$. 
That is $\ell_1'$ and $\ell_2'$ are the right and the left outer common tangents of $S_a'$ and $S_b$, respectively. 
Let $w_{\min}'$ denote the updated value after inserting $p$.
Therefore, $w_{\min}' = \min \{w_{\min}, \width_{[\theta_1', \theta_2']}(S_a')\}$.

Suppose that $p$ lies on or between $\ell_1$ and $\ell_2$. 
Then both $\ell_1$ and $\ell_2$ remain unchanged, and consequently so do $\theta_1$ and $\theta_2$.
Since every slab $\sigma_\theta(S_a)$ for $\theta \in [\theta_1, \theta_2]$ contains $p$, it follows that $\width_{[\theta_1, \theta_2]}(S_a) = \width_{[\theta_1, \theta_2]}(S_a')$. 
As $w_{\min} \le \width_{[\theta_1, \theta_2]}(S_a)$, $w_{\min}' = w_{\min}$. 

Now assume that $p$ lies to the right of $\ell_1$ (the case where $p$ lies to the left of $\ell_2$ can be handled symmetrically).
In this case, $\ell_2' = \ell_2$, while $\ell_1' \neq \ell_1$.
Observe that $\theta_2' = \theta_2$ and $\theta_1' < \theta_1$.
Now we show the following. 

\begin{restatable}{clm*}{optins}
    \label{claim:opt_ins}
    $w_{\min}' = \min \{w_{\min}, \width_{[\theta_1', \theta_1]}(S_a') \}$
\end{restatable}

\begin{claimproof}
    Let $a = \width_{[\theta_1', \theta_1]}(S_a')$, 
    $b = \width_{[\theta_1, \theta_2]}(S_a')$, and
    $c = \width_{[\theta_1', \theta_2]}(S_a')$. 
    By definition, $c = \min \{ a, b\}$. 
    As $w'_{\min} = \min \{w_{\min}, c \}$, 
    we have $w'_{\min} = \min \{w_{\min}, a, b \}$. 
    Observe that $w_{\min} \le  \width_{[\theta_1, \theta_2]}(S_a) \le b$ by the definition of $w_{\min}$ and $S_a \subset S_a'$. 
    Therefore, we have $w'_{\min} = \min \{w_{\min}, a \} =  \min \{w_{\min}, \width_{[\theta_1', \theta_1]}(S_a')\}$.
\end{claimproof}

The above claim allows us to consider only the range $[\theta_1', \theta_1]$. 
Recall that, in \emph{Testing the width condition} step of the decision algorithm for insertions, the antipodal pairs of $S_a'$ corresponding to the orientation range $[\theta_1', \theta_1]$ are visited in order.
Therefore, $\width_{[\theta_1', \theta_1]}(S_a')$ can be obtained directly during this procedure.

\paragraph*{Deletion from $S_b$}
Let $p$ be the point deleted from $S_b$, and let $S_b' = S_b \setminus \{p\}$.
% If $S_a'$ does not dominate $S_b$, we skip the following steps.
We define $\{\ell_i', \theta_i', a_i', b_i' \mid i=1, 2 \}$ analogously for $S_a$ and $S_b'$. 
That is $\ell_1'$ and $\ell_2'$ are the right and the left outer common tangents of $S_a$ and $S_b'$, respectively. 
Let $w_{\min}'$ denote the updated value after deleting $p$.
Therefore, $w_{\min}' = \min \{w_{\min}, \width_{[\theta_1', \theta_2']}(S_a)\}$.

Suppose that $p$ lies strictly between $\ell_1$ and $\ell_2$. 
Then both $\ell_1$ and $\ell_2$ remain unchanged, and consequently so do $\theta_1$ and $\theta_2$.
Therefore, $w_{\min}' = w_{\min}$ by definition.

Now assume that $p$ lies on or to the right of $\ell_1$ (the case where $p$ lies on or to the left of $\ell_2$ can be handled symmetrically).
In this case, $\ell_2' = \ell_2$, while $\ell_1' \neq \ell_1$.
Observe that $\theta_2' = \theta_2$ and $\theta_1' < \theta_1$.
Therefore, $w_{\min}' = \min \{w_{\min}, \width_{[\theta_1', \theta_2]}(S_a)\} =  \min \{w_{\min}, \width_{[\theta_1', \theta_1]}(S_a), \width_{[\theta_1, \theta_2]}(S_a)\}$.
Note that $w_{\min} \le \width_{[\theta_1, \theta_2]}(S_a) $. 
Therefore, $w_{\min}' = \min \{w_{\min}, \width_{[\theta_1', \theta_1]}(S_a) \}$.

The above observation allows us to consider only the range $[\theta_1', \theta_1]$. 
Recall that, in \emph{Testing the width condition} step of the decision algorithm for deletions, the antipodal pairs of $S_a$ corresponding to the orientation range $[\theta_1', \theta_1]$ are visited in order.
Therefore, $\width_{[\theta_1', \theta_1]}(S_a)$ can be obtained directly during this procedure.

Now we can conclude the following theorem. 

\onefixed*

\section{Lower Bound for the One-Fixed-Orientation Problem}
\label{sec:lowerbound}

\begin{restatable}{lem*}{lowerbound}
    The one-fixed-orientation two-line-center problem requires $\Omega (n\log n)$ time in the algebraic computation tree model. 
\end{restatable}

\begin{proof}
    Given a set $P$ of $n$ points in the plane, it is known that computing $\width(P)$ requires $\Omega(n \log n)$ time~\cite{Houle_width, lee1986geometric}.
    We give a reduction from the width problem to the one-fixed-orientation two-line-center problem.
    We show that we can construct a set $Q$ of $n+2$ points in $O(n)$ time such that the following holds:
    Let $(\sigma_h^*, \sigma_t^*)$ be a pair of slabs whose union covers $Q$, where $\sigma_h^*$ is horizontal, and the maximum of their widths is minimized.
    Let $w^*(Q) = \max\{ w(\sigma_h^*), w(\sigma_t^*)\}$.
    Then $w^*(Q) = \width(P)$.

    We compute the largest difference between the $x$-coordinates of the points in $P$, and denote it by $L_1$. Similarly, we compute the largest difference between the $y$-coordinates of the points in $P$, and denote it by $L_2$.
    Let~$L = \max\{L_1, L_2\}$.
    
    Consider a square of side length $L$ that covers $P$, and denote it by $R$.
    Let $q_1, q_2 \in \mathbb{R}^2$ be two points on a horizontal line such that $d(q_1, q_2) = 7L$, $d(R, \ell_{q_1q_2}) = L$, and $R$ and $\overline{q_1q_2}$ are vertically center-aligned (see Figure~\ref{fig:lowerbound}).
    Let $Q = P \cup \{q_1, q_2\}$.

    It is clear that $w^*(Q) \le \width(P)$, as we can construct a pair of slabs of width $\width(P)$ where one covers $P$ and the other covers $\{q_1, q_2\}$. 
    Thus, it remains to show that $w^*(Q) \ge \width(P)$. 
    Assume to the contrary that $w^*(Q) < \width(P)$. 
    Since $q_1$ and $q_2$ lie on a horizontal line, we have two cases: 
    (1) $q_1, q_2 \in \sigma_h^*$, or 
    (2) $q_1, q_2 \notin \sigma_h^*$.

    Consider the case where $q_1, q_2 \in \sigma_h^*$.
    By the assumption and the definitions, we have $w(\sigma_h^*) \le w^*(Q) < width (P) \le L$ and 
    $ L \le d(p, \ell_{q_1q_2})$ for every $p\in P$. 
    Since $q_1, q_2 \in \sigma_h^*$ and $w(\sigma_h^*) < d(p, \ell_{q_1q_2})$ for every $p\in P$, 
    it follows that $P\cap \sigma_h^* = \emptyset$.  
    Therefore, $\sigma_t^*$ should cover $P$ which implies that $w^*(Q) \ge w(\sigma_t^*) \ge \width(P)$. This contradicts our assumption that $w^*(Q) < \width(P)$.

    Now, consider the case where $q_1, q_2 \notin \sigma_h^*$. 
    By the assumption and the definitions, $w(\sigma_h^*) \le w^*(Q) < \width(P) \le L_2$. 
    Therefore, there exists a point $r \in P$ not contained in $\sigma_h^*$. 
    This implies that $q_1, q_2, r \in \sigma_t^*$. 
    It follows that $w(\sigma_t^*) \ge \width(\{ q_1, q_2, r \})$. 
    Since $\overline{q_1q_2}$ is the longest side in the triangle $\triangle q_1q_2r$, we have $\width(\{ q_1, q_2, r \}) = d(r, \ell_{q_1q_2})$.
    As $d(r, \ell_{q_1q_2}) \ge L$, it follows that $w(\sigma_t^*) \ge L$. 
    Consequently, $w^*(Q) \ge w(\sigma_t^*) \ge L \ge \width(P)$, which contradicts our assumption.
\end{proof}

\begin{figure}[ht]
        \centering
        \includegraphics[width=0.57\textwidth]{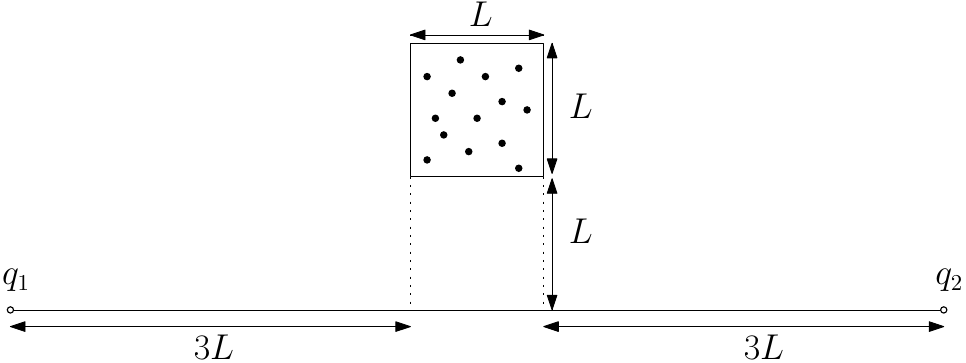}
        \caption{The square of side length $L$ covers the points. }
        \label{fig:lowerbound}
\end{figure}

\section{Ruling Out \texorpdfstring{$w^*=0$}{w*=0} in Linear Time}
\label{sec:w_0}

\paragraph*{General}
Choose two arbitrary points $p$ and $q$ in $S$, and find a point $r \in S$ that does not lie on $\ell_{pq}$. 
If no such point exists, then all points of $S$ lie on $\ell_{pq}$, which implies $w^* = 0$.
Assume such a point $r$ exists. 
If $w^* = 0$, then at least one slab (of width $0$) in an optimal solution must be $\ell_{pq}$, $\ell_{pr}$, or $\ell_{qr}$.
We test all three cases. 
For example, to test whether there exists an optimal pair of slabs where one of them is $\ell_{pq}$, we check whether all points of $S$ not lying on $\ell_{pq}$ lie on a common line.
This can be done by choosing two arbitrary points $u, v \in S\setminus \ell_{pq}$ and checking if all points in $S\setminus \ell_{pq}$ lie on $\ell_{uv}$. 
This entire procedure takes linear time.

\paragraph*{One fixed orientation}
By rotating the coordinate system, we assume without loss of generality that the given orientation $\theta$ is $0$.

Choose two arbitrary points $p$ and $q$ in $S$, and find a point $r \in S$ that does not lie on $\ell_{pq}$. 
If no such point exists, then all points of $S$ lie on $\ell_{pq}$, which implies $w^* = 0$.
Assume such a point $r$ exists. 
If $w^* = 0$, then at least one slab (of width $0$) in an optimal solution must be $\ell_{pq}$, $\ell_{pr}$, or $\ell_{qr}$.
We test all three cases. 
We give an example for testing whether there exists an optimal pair of slabs where one of them is $\ell_{pq}$.
If $\ell_{pq}$ is horizontal, we check whether all points of $S$ not lying on $\ell_{pq}$ lie on a common line.
This can be done by choosing two arbitrary points $u, v \in S\setminus \ell_{pq}$ and checking if all points in $S\setminus \ell_{pq}$ lie on $\ell_{uv}$. 

If $\ell_{pq}$ is not horizontal, we check whether all points of $S$ not lying on $\ell_{pq}$ lie on a horizontal line.
This can be done by choosing an arbitrary point $u \in S\setminus \ell_{pq}$ and checking if all points in $S\setminus \ell_{pq}$ lie on the horizontal line passing through $u$. 
This entire procedure takes linear time.

\paragraph*{Two fixed orientations}
By rotating the coordinate system, we assume without loss of generality that one of the given orientations is horizontal. Thus, the two orientations can be taken as $0$ and $\theta$, where $0\le \theta < \pi$. Let $(\sigma_h^*, \sigma_t^* )$ be an optimal pair of slabs where $\sigma_h^*$ is horizontal and $\sigma_t^*$ has orientation $\theta$.

Choose an arbitrary point $p \in S$, and then find another point $q \in S$ with a different $y$-coordinate from $p$. If no such point exists, then all points of $S$ lie on a horizontal line, which implies $w^*=0$.
If such $q$ exists, then for $w^* = 0$ it must be the case that $p$ or $q$ (or possibly both) lies on $\sigma_t^*$.
Let $\ell_p$ be the line of orientation $\theta$ passing through $p$, and define $\ell_q$ analogously for $q$.
If $w^* = 0$, then $\sigma_t^*$ must be either $\ell_p$ or $\ell_q$.
We test both cases by checking whether all points of $S$ not lying on $\ell_p$ (resp., $\ell_q$) lie on a common horizontal line. 
The procedure above takes linear time in total.

\paragraph*{Parallel}
Choose two arbitrary points $p$ and $q$ in $S$, and find a point $r \in S$ that does not lie on $\ell_{pq}$. 
If no such point exists, then all points of $S$ lie on $\ell_{pq}$, which implies $w^* = 0$.
Assume such a point $r$ exists. 
% Let $\theta_{pq}$, $\theta_{pr}$, and $\theta_{qr}$ be the orientations of $\ell_{pq}$, $\ell_{pr}$, and $\ell_{qr}$, respectively. 
If $w^* = 0$, then at least one slab (of width $0$) in an optimal solution must be $\ell_{pq}$, $\ell_{pr}$, or $\ell_{qr}$.
We test all three cases. 
For example, to test whether there exists an optimal pair of slabs where one of them is $\ell_{pq}$, we check whether all points of $S$ not lying on $\ell_{pq}$ lie on the line parallel to $\ell_{pq}$ that passes through $r$. 
This entire procedure takes linear time.

\end{document}